\documentclass[a4paper,UKenglish,cleveref, autoref, thm-restate]{lipics-v2021}
\usepackage{graphicx} % Required for inserting images
\usepackage{amsmath,amsthm,amssymb}
\usepackage{stmaryrd}
\usepackage{mathtools}
\usepackage{algorithm}
\usepackage{tikz}
\usepackage{todonotes}
\usepackage{lineno}
\usepackage{algpseudocode}
\usepackage[bold,basic]{complexity}
\usetikzlibrary{automata, positioning, arrows,calc}
\usepackage[bibliography=common]{apxproof}

\pdfoutput=1 %uncomment to ensure pdflatex processing (mandatatory e.g. to submit to arXiv)
\hideLIPIcs  %uncomment to remove references to LIPIcs series (logo, DOI, ...), e.g. when preparing a pre-final version to be uploaded to arXiv or another public repository

\title{A Theory of Hanoi Omega-Automata and Games}

\author{Emmanuel Filiot}{Universit\'e libre de Bruxelles, Belgium}{emmanuel.filiot@ulb.be}{0000-0002-2520-5630}{}
\author{Allen Joseph}{Universit\'e libre de Bruxelles, Belgium}{}{}{}
\author{Guillermo A. P\'erez}{University of Antwerp -- Flanders Make,Belgium}{guillermo.perez@uantwerpen.be}{0000-0002-1200-4952}{}
\author{Saina Sunny}{Universit\'e libre de Bruxelles, Belgium}{saina.sunny@ulb.be}{0009-0005-1366-0168}{}

\authorrunning{E. Filiot, A. Joseph, G. A. P\'erez, and S. Sunny} 
 
\ccsdesc[500]{Theory of computation~Logic and verification}

\keywords{Automata theory, HOA automata, reactive synthesis, games played on graphs} 

\category{} %optional, e.g. invited paper

\relatedversion{} %optional, e.g. full version hosted on arXiv, HAL, or other respository/website
\funding{This work was supported by the Belgian FWO-FNRS ``SynthEx'' (G0AH524N) project.}%optional, to capture a funding statement, which applies to all authors. Please enter author specific funding statements as fifth argument of the \author macro.

\acknowledgements{We would like to thank Moshe Vardi for asking the main questions answered in this work and Alexandre Duret-Lutz for discussing HOA automata with us.}%optional

\nolinenumbers %uncomment to disable line numbering

\newcommand{\val}{\mathrm{Val}}
\newcommand{\ap}{\mathcal{P}}
\newcommand{\lan}[1]{\mathcal{L}(\mathcal{#1})}
\newcommand{\sem}[1]{\llbracket #1 \rrbracket}
\newcommand{\Inf}{\mathit{Occ}_\infty}
\newcommand{\Fin}{\mathit{Occ}_{<\infty}}
\newcommand{\Acc}{\mathit{Acc}}
\newcommand{\lassowordlimit}{\size{\mathcal A'} 2^{\size{\mathcal B'}^2\log\size{\mathcal B'}}}
\newcommand{\size}[1]{\lVert #1 \rVert}
\newcommand{\inp}{\textsf{in}}
\newcommand{\outp}{\textsf{out}}

\newrobustcmd{\saina}[2][]{{\color{black}\todo[color=blue!20,noshadow,#1]{{\bf Saina:} #2}}\ignorespaces}

\appendixprelim{}

\begin{document}

\maketitle
\begin{abstract}
    The Hanoi Omega-Automata (HOA) format has established itself as the definitive standard for encoding $\omega$-regular automata in modern synthesis tools. While HOA is widely adopted due to its succinct symbolic representation, using Boolean formulas as transition guards and transition-based coloring, the exact computational cost of these features has remained understudied. This paper provides the first systematic investigation into the theoretical complexity of decision problems for HOA-encoded automata and games. We establish that the structural features of HOA, specifically the symbolic encoding of large alphabets, make classical problems more complex than in traditional formats. We prove that the non-emptiness problem is {NP-complete} for all standard acceptance conditions, with hardness arising directly from the Boolean transition guards. For language inclusion, we show the problem is {PSPACE-complete} for most conditions but reaches {EXPSPACE-completeness} for Emerson-Lei acceptance. Furthermore, we formalize {Hanoi Omega-Games (HOG)}, where the underlying arena is a deterministic HOA with atomic propositions partitioned into inputs and outputs. We provide tight complexity bounds for solving HOGs, ranging from {$\Pi_2$-completeness} for parity and safety conditions to {PSPACE-completeness} for Muller and Emerson-Lei objectives. Finally, we generalize our techniques to {symbolic games} where transitions are guarded by formulas in arbitrary decidable first-order theories.
\end{abstract}

\section{Introduction}

The Hanoi Omega-Automata (HOA) format has established itself as the definitive standard for encoding $\omega$-regular automata. Its widespread adoption is evident in its integration into cornerstone tools like Spot, Owl, and LTL3TELA \cite{spot,owl,ltl3tela}. We note that the foundational paper \cite{hoa} introducing the format has already earned more than 150 citations.

Automata encoded using HOA are a critical component of the temporal synthesis pipeline. Synthesizing controllers from Linear Temporal Logic (LTL) specifications typically requires compiling formulas into automata and then solving a two-player zero-sum game. In this process, the HOA-encoded automaton is what ultimately determines the winner.
Modern research continues to refine the automata-theoretic pipeline to achieve real-world scalability. Although the worst-case complexity of temporal synthesis is known to be \textsc{2ExpTime}-complete \cite{synthesis_handbookmc}, the annual SYNTCOMP competition \cite{DBLP:journals/sttt/JacobsPABCCDDDFFKKLMMPR24} highlights an active, ongoing effort to optimize these tools and benchmarks for practical applications.

A significant gap exists in our understanding of the theoretical complexity of analyzing HOA-encoded automata. Despite being a standard input format for SYNTCOMP, the exact complexity of automata- and game-theoretic questions specifically for HOA-encoded automata has not been formally studied. Personal communications with Alexandre Duret-Lutz (author of the Spot library \cite{spot}) confirm that while these problems are routinely solved---often using binary decision diagrams \cite{bdds_handbookmc}---their precise complexity bounds remain open.

\subparagraph{What is different about HOA-encoded automata?}
HOA introduces unique structural features that differ from classical set-based definitions of automata. Notably, it utilizes transition-based acceptance rather than state-based acceptance (for example, priority-labeled transitions, instead of states, for parity acceptance conditions). Furthermore, it employs a symbolic alphabet encoded succinctly by labeling transitions with Boolean formulas over atomic propositions. This encoding allows a single transition to represent all (out of a possibly exponential set, in the number of atomic propositions) transitions on a valuation that satisfies the formula. We note that this makes the model similar to \emph{symbolic automata} over the Boolean domain~\cite{symbauto}. In fact, HOA-encoded automata are a concrete instantiation of symbolic automata over infinite words with propositional logic as a fixed Boolean algebra.

These subtle structural differences have a profound impact on the computational complexity of classical automata-theoretic problems. Recent work suggests that moving from state-based to transition-based acceptance alone can change the complexity of certain questions (see \cite{casares-beatcs} and references therein), and it is intuitive that the succinct encoding of large alphabets via formulas would have similar effects.

\subparagraph{What about Hanoi Omega-Games?}
Since the 2020 edition of SYNTCOMP, a synthesis track starting from HOA---extended with a partitioning of the the atomic propositions into inputs and outputs---was added \cite{DBLP:journals/sttt/JacobsPABCCDDDFFKKLMMPR24}. At the time, most synthesis solvers for the temporal-logic specification track were translating input specifications into HOA-encoded automata and then solving a game played on it. The new track on HOA games (henceforth shortened as HOGs) was intended to encourage the development of tools that focus on the latter part of the synthesis pipeline.
One of the only alternative formats for games at the time was the one used by the PGSolver collection of parity-game generators and solvers \cite{friedmann2009pgsolver}. The PGSolver format encodes a (parity) game by explicitly enumerating edges of the bipartite graph on which the game is played. Naively translating a HOG into such an explicit representation incurs an exponential blow-up.

While HOG-solving is practically possible using Knor ~\cite{DBLP:conf/tacas/DijkAT24} or ltlsynt~\cite{spot}, there has been no systematic study of the complexity of solving HOGs. Moreover, while we can place HOA within the framework of symbolic automata, we are not aware of a general framework where HOGs fit, i.e. some sort of \emph{symbolic game} framework. The closest match may be \cite{DBLP:conf/cav/HeimD25,DBLP:conf/cav/AzzopardiSPS25} (and models referenced therein). However, the referenced models all have variable valuations that affect the state of the system while HOGs have a much more local parametric structure: valuations are chosen to select one of finitely many successors.

Figure~\ref{fig:HOG} depicts a deterministic HOA over atomic propositions $r$ and $g$ (request and grant), encoding the property $G(r \rightarrow F(g)) \land G(g \rightarrow X(\lnot g))$. This means every request is eventually granted, and grants cannot occur in consecutive steps. Transitions are labeled with Boolean formulas as well as colors (here $1$ and $2$), and the acceptance condition is defined to be $\mathsf{Inf}(\{1\})$, meaning that color $1$ must be seen infinitely often. This HOA can be viewed as an HOG where the environment controls the proposition $r$ and controller controls $g$, and the controller aims to satisfy the property, i.e., visit color $1$ infinitely often. Controller wins this game by applying any strategy which in state $q_2$, whatever be the input, set $g$ to \textsf{False} and state $q_3$, whatever be the input, set $g$ to \textsf{True}. 
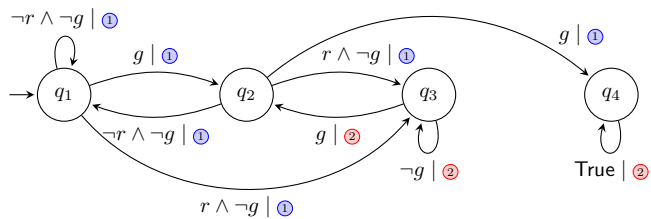
\begin{figure}
    \centering
     \begin{tikzpicture}[shorten >=1pt, auto, node distance=3cm, on grid, >=stealth, scale=0.8, every node/.style={scale=0.8}]

% States
\node[state, initial,initial text=] (q0) {$q_1$};
\node[state, right of=q0] (q1) {$q_2$};
\node[state, right of=q1] (q2) {$q_3$};
\node[state, right of=q2] (q3) {$q_4$};

% Transitions
\draw[->] (q0) edge[loop above] node {$\lnot r \land \lnot g \mid \tikz[baseline=(n.base)]{
\node[circle, draw=blue, fill=blue!20, inner sep=1pt, font=\scriptsize] (n) {1};
}$} (q0);
\draw[->] (q2) edge[loop below] node {$\lnot g \mid \tikz[baseline=(n.base)]{
\node[circle, draw=red, fill=red!20, inner sep=1pt, font=\scriptsize] (n) {2};
}$} (q2);
\draw[->] (q0) edge[bend left=20] node[above] {$g \mid \tikz[baseline=(n.base)]{
\node[circle, draw=blue, fill=blue!20, inner sep=1pt, font=\scriptsize] (n) {1};
}$} (q1);
\draw[->] (q1) edge[bend left=20] node[below] {$\lnot r \wedge \lnot g \mid \tikz[baseline=(n.base)]{
\node[circle, draw=blue, fill=blue!20, inner sep=1pt, font=\scriptsize] (n) {1};
}$} (q0);
\draw[->] (q1) edge[bend left=20] node[above,xshift=0.5cm] {$r \land \lnot g \mid \tikz[baseline=(n.base)]{
\node[circle, draw=blue, fill=blue!20, inner sep=1pt, font=\scriptsize] (n) {1};
}$} (q2);
\draw[->] (q2) edge[bend left=20] node[below] {$g \mid \tikz[baseline=(n.base)]{
\node[circle, draw=red, fill=red!20, inner sep=1pt, font=\scriptsize] (n) {2};
}$} (q1);
\draw[->] (q1) edge[bend left=40] node[below, xshift=2.5cm] {$g \mid \tikz[baseline=(n.base)]{
\node[circle, draw=blue, fill=blue!20, inner sep=1pt, font=\scriptsize] (n) {1};
}$} (q3);

\draw[->] (q3) edge[loop below] node {$\textsf{True} \mid \tikz[baseline=(n.base)]{
\node[circle, draw=red, fill=red!20, inner sep=1pt, font=\scriptsize] (n) {2};
}$} (q3);
\draw[->] (q0) edge[bend right=50] node[below] {$r \land \lnot g \mid \tikz[baseline=(n.base)]{
\node[circle, draw=blue, fill=blue!20, inner sep=1pt, font=\scriptsize] (n) {1};
}$} (q2);

\end{tikzpicture}
    \caption{Hanoi omega-automaton underlying an Hanoi omega-game where the environment (resp. controller) controls $r$ (resp. $g$).}
    \label{fig:HOG}
    \vspace{-5mm}
\end{figure}

\subparagraph{Contributions.} In this work, we confirm that HOA-encoded $\omega$-automata present harder decision problems. We establish that classical problems such as emptiness and language inclusion are more complex in this format. Our most important contributions, though, concern HOGs: We provide tight complexity bounds for solving important classes of HOGs, directly addressing the setup used in the SYNTCOMP competition. Finally, we generalize our techniques to \emph{symbolic games} (over possibly infinite data domains) as hinted at before, e.g., to games where the controller and the environment choose values of natural vectors. 
The following table summarizes our results.

\begin{table}[!htb]
\centering
%\caption{Complexity of decision problems for Hanoi omega-automata and games.}
\renewcommand{\arraystretch}{1.25}
\begin{tabular}{|l|c|c|c|}
\hline
\textbf{Acceptance Condition} 
& \textbf{Emptiness} 
& \textbf{Language Inclusion} 
& \textbf{Game Solving} \\
\hline

\multicolumn{1}{|l|}{\shortstack[l]{\\ Reachability / Safety / \\ Büchi / co-Büchi / \\ Parity / Streett}}
& \multicolumn{1}{c|}{\NP-complete}
& \multicolumn{1}{c|}{\PSPACE-complete}
& \multicolumn{1}{c|}{$\boldsymbol{\Pi_2}$-complete} \\

\hline

Rabin
& \NP-complete
& \PSPACE-complete
& $\boldsymbol{\Sigma_3}$ and $\boldsymbol{\Pi_2}$-hard \\

\hline

Muller
& \NP-complete
& \PSPACE-complete
& \PSPACE-complete \\

\hline

Emerson--Lei
& \NP-complete
& \EXPSPACE-complete
& \PSPACE-complete \\

\hline
\end{tabular}

\end{table}

\section{Preliminaries}
We write $\mathbb{N} = \{1,2,\ldots\}$ for the set of natural numbers and $\mathbb{B} =\{\mathsf{True},\mathsf{False}\}$ for Boolean truth values. For every $d \in \mathbb{N}$, we use $[d]$ to denote the set $\{1,2,\ldots, d\}$.

\subparagraph*{Words, Alphabets.} A word (resp. $\omega$-word) over a (finite) alphabet is a finite (resp. infinite) sequence of symbols from the alphabet. The set of finite (resp. $\omega$-word) over an alphabet $\Sigma$ is denoted by $\Sigma^*$ (resp. $\Sigma^\omega$). The length of a word $w$ is denoted by $|w|$ (it is infinite for $\omega$-words).
The $i$th letter of $w$ is denoted by $w[i]$. We denote by $\mathit{elem}(w)$ (resp. $\Inf(w)$) the set of symbols in $\Sigma$ that occur in $w$ (resp. occur infinitely many times in $w$). In particular, $\Inf(w)=\emptyset$ if $w$ is a finite word. 
Similarly, we define $\Fin(w)$ to be the set of symbols that occur only finitely many times in $w$, i.e., $\Fin(w) = \mathit{elem}(w) \setminus \Inf(w)$.

\subparagraph*{Boolean Formula and Valuations.} The set of Boolean formulas over a set of atomic propositions $\ap$, denoted by  $\mathbb{B}(\ap)$, is defined in a standard way. The set of valuations $v : \ap \to \mathbb{B}$ of propositions is denoted by $\val(\ap)$, and given a formula $\varphi\in \mathbb{B}(\ap)$, we write $v\models \varphi$ if $\varphi$ evaluates to $\mathsf{True}$ under $v$, and denote by $\size{\varphi}$ the size of $\varphi$ (its number of symbols).

\subparagraph{Polynomial Hierarchy.}
We recall the definition of the polynomial hierarchy. Let $\boldsymbol{\Sigma_0} = \P$ be the class of languages decidable in deterministic polynomial time. For every $i>0$, let $\boldsymbol{\Sigma_i} = \NP^{\boldsymbol{\Sigma_{i-1}}}$,
that is, the class of languages decidable by a non-deterministic polynomial-time Turing machine equipped with an oracle for some language in $\boldsymbol{\Sigma_{i-1}}$. In particular, $\boldsymbol{\Sigma_1} = \NP$. Similarly, let $\boldsymbol{\Pi_i} = \coNP^{\boldsymbol{\Sigma_{i-1}}}$. By standard closure properties of the hierarchy, for every $i \ge 0$ we have
$\boldsymbol{\Sigma_i} = \boldsymbol{\mathrm{co}\Pi_i}$ and $\boldsymbol{\Pi_i} = \boldsymbol{\mathrm{co}\Sigma_i}$. All these classes lie within \PSPACE. A complete problem for $\boldsymbol{\Sigma_k}$ is $\textsc{QSAT}^\exists_k$, which is the set of satisfiable quantified Boolean formulas of the form $\exists \overline{x_1}\forall \overline{x_2}\exists \overline{x_3}\;\dots\;\phi$, where $\phi$ is quantifier-free, the $\overline{x_i}$ are tuples of variables, and the quantifier prefix contains exactly $k{-}1$ quantifier alternations. In particular, $\textsc{QSAT}^\exists_1$ is the classical \textsc{SAT} problem. Define $\textsc{QSAT}^\forall_k$ analogously, except that the formula begins with a block of universal quantifiers. Then $\textsc{QSAT}^\forall_k$ is $\boldsymbol{\Pi_k}$-complete.

 \subparagraph*{Classical Two-player Games.} A classical (turn-based) two-player game is a tuple $\mathcal{G} = (V, V_\inp, V_\outp, E, C, {\cal C}, Acc)$,
where $V$ is a set of vertices partitioned into $V_\inp$ and $V_\outp$, the positions of Player \textsf{In} and Player \textsf{Out}, respectively. The relation $E \subseteq V \times V$ is the edge relation. The set $C$ is a set of colors, and ${\cal C}: V \to C$ is a coloring function that assigns a color to each vertex. The winning condition $Acc \subseteq C^\omega$ is defined over infinite sequences of colors. A \emph{play} in $\mathcal{G}$ is an infinite sequence of vertices $\pi = v_0 v_1 v_2 \dots \in V^\omega$ such that $(v_i,v_{i+1}) \in E$ for all $i \geq 0$. If $v_i \in V_\inp$ (resp. $V_\outp$), then Player \textsf{In} (resp. Player \textsf{Out}) chooses the successor $v_{i+1}$. The play induces a sequence of colors ${\cal C}(\pi) = {\cal C}(v_0){\cal C}(v_1){\cal C}(v_2)\dots \in C^\omega$. Player \textsf{Out} wins the play if ${\cal C}(\pi) \in Acc$, otherwise Player \textsf{In} wins. The \emph{arena} of $\mathcal{G}$ is the 
tuple $(V, V_\inp, V_\outp, E, C, {\cal C})$.

A \emph{strategy} for Player~\textsf{Out} is a function $\sigma : V^* V_{\outp} \to V$
such that $(v,\sigma(hv)) \in E$ for all histories $hv \in V^*V_{\outp}$. Strategies for Player \textsf{In} are defined analogously. A strategy for Player~\textsf{Out} is \emph{winning} from a vertex if, no matter how Player~\textsf{In} plays, every resulting play satisfies the winning condition $Acc$. Similarly, a strategy for Player \textsf{In} is winning from a vertex if every resulting play violates the winning condition $Acc$.
The \emph{winning region} of Player \textsf{Out}, denoted by $W_\outp$, is the set of vertices from which Player \textsf{Out} has a strategy to ensure a win. The winning region of Player \textsf{In} is $W_{\inp} = V \setminus W_{\outp}$. 

\section{Hanoi Omega-Automata} 
\label{sec:HOA}
A \emph{Hanoi omega-automaton} (HOA) is a tuple
$\mathcal{A} = (Q, I, \mathcal{P}, \Delta, \mathcal{C})$ where
\begin{itemize}
    \item $ Q $ is a finite set of states and $ I \subseteq Q$ the set of initial states,
    \item $\ap$ is a finite set of atomic propositions,
    \item $\Delta \subseteq Q \times \mathbb{B}(\ap) \times Q$ is the transition relation,
    \item $\mathcal{C} : \Delta \to [d]$ assigns to each transition a color $c \in [d]$, where $d \in \mathbb{N}$ is called the \emph{index} of ${\cal A}$.
\end{itemize}
 
In the whole paper, we assume wlog that $d$ is smaller than the number of transitions, i.e. $d\leq |\Delta|$. The \emph{size} of an HOA $\mathcal{A}$, denoted by $\size{\mathcal{A}}$, is defined as 
$\size{\mathcal{A}} = |Q|+|\ap|+\sum_{(q,\varphi,q') \in \Delta} \size{\varphi}$.  An HOA is \emph{deterministic} if $|I| = 1$ and for every state $q$, and valuation $v$, $|\{(q,\varphi,q') \in \Delta \mid v \models \varphi\}| \leq 1$ (at most one transition whose guard $\varphi$ is satisfied). 

An (infinite) run $\rho$ of $\mathcal{A}$ is a sequence %$q_1 \varphi_1 q_2 \varphi_2 q_3 \dots$ 
$q_1 \xrightarrow{\varphi_1} q_2 \xrightarrow{\varphi_2} \cdots$
of states and Boolean formulas such that $(q_i, \varphi_{i}, q_{i+1}) \in \Delta$ for all $i \geq 1$. An input word to $\mathcal{A}$ is an $\omega$-word $w = v_1 v_2 v_3 \dots$ over alphabet $\val(\ap)$. We write $w \models \rho$ if $v_i \models \varphi_i$ for all $i \geq 1$. The coloring function $\mathcal{C}$ is extended to runs as follows: $\mathcal{C}(\rho) = \mathcal{C}(q_1,\varphi_1,q_2)\mathcal{C}(q_2,\varphi_2,q_3)\dots$ denote the infinite sequence of colors that label transitions along the run. To be able to define the language of the automaton, we still need a way to determine which runs are \emph{accepting}. 
An \emph{acceptance condition}, denoted by $\Acc$ is a subset of $[d]^\omega$. A run $\rho = q_1 \xrightarrow{\varphi_1} q_2 \xrightarrow{\varphi_2} q_3 \cdots$ of $\mathcal{A}$ is \emph{accepting} (for $\Acc$) if $q_1 \in I$ and $\mathcal{C}(\rho) \in \Acc$.
The \emph{language} of $\mathcal{A}$, written $\mathcal{L}_{\Acc}(\mathcal{A})$ (just written $\lan{A}$ if $\Acc$ is clear from the context) is the set of input words $w$ for which the automaton has an accepting run $\rho$ such that $w \models \rho$. In this paper, we consider the acceptance conditions below.

\subparagraph*{Emerson-Lei (EL) acceptance.}
 An EL acceptance condition for an index $d$ is defined by an acceptance formula $\alpha$, which is a Boolean combination of atomic predicates $\mathsf{Inf(C)}$ and $\mathsf{Fin(C)}$, for any subset $C$ of colors. 
A color sequence $\bar{c}\in [d]^\omega$ satisfies $\mathsf{Inf(C)}$ if $\Inf(\bar{c}) \ \cap \ C \neq \emptyset$, and it satisfies $\mathsf{Fin(C)}$ if $\Inf(\bar{c}) \ \cap \ C = \emptyset$. These two interpretations are naturally extended to define the semantics of an acceptance formula $\alpha$, denoted $\sem{\alpha}\subseteq [d]^\omega$, as the set of color sequences satisfying $\alpha$.  

An Emerson-Lei condition is a language $Acc = \sem{\alpha}$ for some acceptance formula $\alpha$. The \emph{size} of the Emerson-Lei acceptance condition is the number of symbols occurring in its acceptance formula. 

The following are some classical $\omega$-automaton acceptance conditions. An acceptance condition $Acc$ for an index $d$ is 
\begin{itemize}
    \item B\"uchi if $Acc = \sem{\mathsf{Inf}(B)}$ for some $B \subseteq [d]$, and coB\"uchi if $Acc = \sem{\mathsf{Fin}(B)}$.
    \item Rabin if $Acc = \sem{\bigvee_{i \in [k]} (\mathsf{Inf}(E_i) \land \mathsf{Fin}(F_i))}$ for $k$ (Rabin) pairs $E_i,F_i\subseteq [d]$. 
    \item Streett if $Acc = \sem{\bigwedge_{i \in [k]} (\mathsf{Fin}(E_i) \lor \mathsf{Inf}(F_i))}$ for $k$ (Streett) pairs $E_i,F_i\subseteq [d]$. 
    \item Muller if $Acc = \sem{\bigvee_{i \in [k]} \bigwedge_{c \in E_i} \mathsf{Inf}(\{c\})}$ for $k$ sets $E_i\subseteq [d]$, called Muller sets.
    \item Parity (max even) if $Acc = \sem{\bigvee_{c \in [d], \text{$c$ is even}}  \mathsf{Inf}(\{c\}) \land \mathsf{Fin}(\{c+1, \ldots,d\})}$.
    \item Reachability if $Acc = \{ \bar{c} \mid \mathit{elem}(\bar{c}) \cap R \neq \emptyset \}$ for some $R \subseteq [d]$.
    \item Safety if $Acc = \{ \bar{c} \mid \mathit{elem}(\bar{c}) \subseteq S \}$ for some $S \subseteq [d]$.
\end{itemize}
Both reachability and safety acceptance can be expressed as Emerson-Lei conditions by a simple transformation of the automaton. The reachability acceptance can be reduced to B\"uchi by a simple transformation of the automaton: Let $q_{n}$ be a new sink state with a self-loop $q_n \xrightarrow{\mathsf{True} \mid c} q_n$ for all $c \in R$. For each transition $p \xrightarrow{\varphi \mid c}q$ where $c \in R$ in the original HOA, add a transition $p \xrightarrow{\varphi \mid c}q_{n}$. A run satisfies the reachability condition in the original HOA iff there exists a run in the transformed HOA that eventually stays within colors in $R$, which is equivalent to the B\"uchi condition $\mathsf{Inf}(R)$. Similarly, the safety acceptance can be reduced to co-B\"uchi by a simple transformation of the automaton --- Let $q_{n}$ be a new sink state with self loop $q_n \xrightarrow{\mathsf{True} \mid c} q_n$ for all $c \in [d] \setminus S$. Replace each transition $p \xrightarrow{\varphi \mid c}q$ where $c \not\in S$ in the original HOA, with a transition $p \xrightarrow{\varphi \mid c}q_{n}$. A run satisfies the safety condition in the original HOA iff the transformed HOA satisfies the Co-B\"uchi condition $\mathsf{Fin}([d] \setminus S)$.

The size of an acceptance condition $\Acc$ (amongst the latter) is defined as the size of the acceptance formula defining it, or $|R|$ (resp. $|S|$) for reachability (resp. safety). When $\Acc$ is clear from the context, then $\size{\cal A}$ refers to the total size including the size of $\Acc$.

\subparagraph*{State-Based Hanoi Omega-Automaton.} An HOA $\mathcal{A} = (Q, I, \ap, \Delta, \mathcal{C})$ is said to have a \emph{state-based coloring function} if all transitions that leave the same state are assigned the same color, i.e., for any state $q \in Q$, if $(q,\varphi,p), (q,\varphi',p') \in \Delta$ for some states $p,p' \in Q$ and Boolean formula $\varphi,\varphi' \in \mathbb{B}(\ap)$, then $\mathcal{C}(q,\varphi,p) = \mathcal{C}(q,\varphi',p')$. As a consequence, $\mathcal{C}$ depends only on the source state, and can therefore be defined as a function $\mathcal{C}: Q \to [d]$. For a run $\rho$ of $\mathcal{A}$, $\mathcal{C}(\rho)$ now denotes the sequence of colors that label the states along the run.
A \emph{state-based HOA} is an HOA equipped with such a state-based coloring function. 

\begin{lemma}\label{lem:to_state_based}
    For every HOA $\mathcal{A}$, one can construct a state-based HOA $\mathcal{B}$ such that, for any acceptance condition $\mathit{Acc}$, $L_{Acc}({\cal A}) = L_{Acc}({\cal B})$. Moreover, $\size{{\cal B}}$ is quadratic in $\size{\cal A}$.
\end{lemma}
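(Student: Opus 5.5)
The plan is the standard transition-to-state conversion: each state of $\mathcal{B}$ remembers the color of the transition just taken. Let $\mathcal{A} = (Q, I, \ap, \Delta, \mathcal{C})$ with index $d$. We define $\mathcal{B} = (Q', I', \ap, \Delta', \mathcal{C}')$ with
\[
Q' = I \cup \{(q,c) \mid \exists (p,\varphi,q)\in\Delta \text{ with } \mathcal{C}(p,\varphi,q)=c\}, \qquad I' = I.
\]
For every transition $t=(p,\varphi,q)\in\Delta$ with color $c=\mathcal{C}(t)$, we add the transition from $p$ (if $p\in I$) and from every $(p,c')\in Q'$ to $(q,c)$, with the same guard $\varphi$.

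The coloring $\mathcal{C}'$ must be state-based and depend only on the source state. The color of a state $(q,c)$ is meant to be $c$, and this is read when leaving it. The catch is that the sequence read by $\mathcal{B}$ is then shifted by one position, and the first transition (leaving an initial state $q\in I$) has no stored color. To fix this, we instead let the state store the color of the \emph{next} transition to be taken. The states become pairs $(q,c)$ such that some transition leaving $q$ has color $c$. For each $t=(p,\varphi,q)\in\Delta$ with color $c$, and each color $c''$ of a transition leaving $q$, we add the transition $(p,c)\xrightarrow{\varphi}(q,c'')$. We set $\mathcal{C}'(p,c)=c$ and $I'=\{(p,c)\mid p\in I\}$. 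The nondeterministic guess of the next color keeps the color sequence aligned exactly, position by position.

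Next, the correctness argument. Take a run $q_1\xrightarrow{\varphi_1}q_2\cdots$ of $\mathcal{A}$ with colors $c_1c_2\cdots$. It maps to the run $(q_1,c_1)\xrightarrow{\varphi_1}(q_2,c_2)\cdots$ of $\mathcal{B}$. This run has the same guards, hence accepts the same word $w$, and has the identical color sequence. Conversely, every run of $\mathcal{B}$ projects onto a run of $\mathcal{A}$ with the same guards and the same colors. Each step $(p,c)\to(q,c'')$ comes from a transition $t$ with $\mathcal{C}(t)=c$, and the next state's stored color $c''$ is the color of the next step. Because the color sequences coincide run by run, $L_{Acc}(\mathcal{A})=L_{Acc}(\mathcal{B})$ holds for every $Acc\subseteq[d]^\omega$ simultaneously; no property of EL conditions is used. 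This is the key point, and it is why the statement can quantify over all $Acc$.

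The main obstacle I expect is the size bound. First, $|Q'|\le |Q|\cdot d\le |Q|\cdot|\Delta|$, using the standing assumption $d\le|\Delta|$. Second, each transition $t=(p,\varphi,q)$ of $\mathcal{A}$ is copied at most once per choice of $c''$, i.e.\ at most $d\le|\Delta|$ times. Each copy carries a guard of size $\size{\varphi}$, giving total guard size at most $|\Delta|\cdot\sum\size{\varphi}$. Since $|\Delta|\le\size{\mathcal{A}}$ (each guard has size at least one), we get $\size{\mathcal{B}}=O(\size{\mathcal{A}}^2)$, as claimed. Determinism may be lost because of the guessed color, but the lemma does not require it to be preserved.
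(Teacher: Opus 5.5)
Your final construction, after the fix, is the paper's. States are pairs $(q,c)$ where $c$ is the color of some transition leaving $q$. Each transition $(p,\varphi,q)$ of color $c$ becomes $(p,c)\xrightarrow{\varphi}(q,c'')$ for every admissible $c''$, and each state is colored by $c$. Your run-by-run identity of color sequences and your $|Q|\cdot d$ states and $|\Delta|\cdot d$ transitions count also match the paper, so the proof is correct. You were right to drop the ``previous color'' variant: it would keep determinism, but it shifts the color sequence, which is not acceptable when $Acc$ is an arbitrary subset of $[d]^\omega$.
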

\begin{proof}
    Let $\mathcal{A} = (Q_A, I_A, \mathbb{B}(\ap), \Delta_A, \mathcal{C}_A)$ with index $d$. 
    We can assume wlog that every state in $Q_A$ has an outgoing transition. For each state $q \in Q_A$, define:
      \[
        C(q) = \{\, \mathcal{C}_A(q, \varphi, q') \mid  (q, \varphi, q') \in \Delta_A\,\}.
    \]
    We construct $\mathcal{B} = (Q_B, I_B, \mathbb{B}(\ap), \Delta_B, \mathcal{C}_B)$ as follows:
    \begin{enumerate}
        \item The set of states is
        \[
            Q_B = \{\, (q,c) \mid q \in Q_A,\ c \in C(q) \,\}.
        \]
        \item The set of initial states is
        \[
            I_B = \{\, (q,c) \mid q \in I_A,\ c \in C(q) \,\}.
        \]
        \item The transition relation $\Delta_B$ is defined as the union of:
        \[
            \{\, ((q,c), \varphi, (q',c')) \mid (q,c), (q',c') \in Q_B, \ (q, \varphi, q') \in \Delta_A,\ \mathcal{C}_A(q, \varphi, q') = c \,\}.
        \]
        \item The coloring function $\mathcal{C}_B : Q_B \to [d]$ is given by
        \[
            \mathcal{C}_B((q,c)) = c \quad \text{for all } (q,c) \in Q_B.
        \]
    \end{enumerate}
    By construction, all transitions leaving a state $(q,c)$ in $\mathcal{B}$ receive the same color $c$. Hence, $\mathcal{B}$ has a state-based coloring function.

    Observe that each transition $(q,\varphi,q')$ of $\mathcal{A}$ with color $c$ is simulated in $\mathcal{B}$ by a transition from $(q,c)$ to $(q',c')$ for every $c' \in C(q')$. Thus, every run of $\mathcal{A}$ corresponds to a unique run of $\mathcal{B}$ visiting the same sequence of states in $Q_A$, extended with color annotations. Conversely, each run of $\mathcal{B}$ projects to a run of $\mathcal{A}$ with the same sequence of colors. Therefore, for any acceptance condition $\mathit{Acc}$ that depends on color sequences, $L_{\mathit{Acc}}(\mathcal{A}) = L_{\mathit{Acc}}(\mathcal{B})$.

      The size of $\mathcal{B}$ is quadratic in the size of $\mathcal{A}$ since the number of states in $\mathcal{B}$ is at most $|Q_A| \cdot d$, and the number of transitions in $\mathcal{B}$ is at most $|\Delta_A| \cdot d$ (hence, quadratic in those of $\mathcal{A}$), and no new formulas are introduced.
\end{proof}

\section{Decision problems for Hanoi Omega-Automata}

In this section, we study the \emph{non-emptiness} and \emph{inclusion} problems for Hanoi omega-automata.

\subsection{Non-emptiness of Hanoi Omega-Automaton} 
The non-emptiness problem asks whether $\mathcal{L}_{\Acc}(\mathcal{A}) \neq \emptyset$ for a given HOA $\mathcal{A}$ and accepting condition $\Acc$. We prove that this problem is \NP-complete for Emerson-Lei as well as other acceptance conditions given in \Cref{sec:HOA}. Membership in \NP \ follows from the fact that if $\mathcal{A}$ admits an accepting run, then it also admits a lasso-shaped accepting run of polynomially bounded size which preserves the set of infinitely and finitely occurring colors.

Before proving the small lasso property for HOA (\Cref{lemma:small_lasso}), we prove it for state-based HOA (\Cref{lemma:small_lasso_statebased} below).

\begin{lemma}\label{lemma:small_lasso_statebased}
    Let $\mathcal{A} = (Q, I, \ap, \Delta, \mathcal{C})$ be a state-based Hanoi omega-automaton and let $\rho$ be an accepting run of $\mathcal{A}$. Then there exist finite run segments $\rho_u$ and $\rho_v$ such that $|\rho_u|, |\rho_v| \leq |Q|^2$ (hence polynomial in $\size{\mathcal{A}}$), for which $\rho_u \rho_v^\omega$ is an accepting run of $\mathcal{A}$, and
    \[
        \Fin(\mathcal{C}(\rho)) = \Fin(\mathcal{C}(\rho_u \rho_v^\omega))
        \quad \text{and} \quad
        \Inf(\mathcal{C}(\rho)) = \Inf(\mathcal{C}(\rho_u \rho_v^\omega)).
    \]
\end{lemma}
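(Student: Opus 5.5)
The plan is a standard pumping argument on the state sequence of $\rho$, keeping track of colors through states; since the coloring is state-based, the colors of a run are determined by the states it visits. Write $\rho = q_1 \xrightarrow{\varphi_1} q_2 \xrightarrow{\varphi_2} \cdots$ and let $Q_\infty$ be the set of states occurring infinitely often in $\rho$. Fix an index $N$ such that every state visited at some position $\ge N$ lies in $Q_\infty$. Then $\Inf(\mathcal{C}(\rho)) = \mathcal{C}(Q_\infty)$. Also, $\Fin(\mathcal{C}(\rho))$ is the set of colors of states visited before $N$ whose color is not in $\mathcal{C}(Q_\infty)$.

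First I construct the cycle $\rho_v$. Since all states of $Q_\infty$ are visited infinitely often after $N$, there is a segment of $\rho$ starting at position $\ge N$ at some state $p \in Q_\infty$, visiting every state of $Q_\infty$, and returning to $p$. I shorten it as follows. List $Q_\infty = \{p = s_0, s_1, \dots, s_m\}$. For each $i$, take a segment of $\rho$ (after $N$) from $s_i$ to $s_{i+1}$, with indices mod $m+1$. Remove internal cycles so that it is a simple path, of length at most $|Q|$. The transitions used are transitions of $\rho$, so the result is still a path in $\mathcal{A}$. The concatenation is a cycle through $p$ of length at most $(m+1)|Q| \le |Q|^2$. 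It visits exactly the states of $Q_\infty$, because every transition used lies after $N$. Hence $\Inf(\mathcal{C}(\rho_u\rho_v^\omega)) = \mathcal{C}(Q_\infty)$.

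Second I construct the prefix $\rho_u$. It must go from some initial state $q_1 \in I$ to $p$, and must see every color of $\Fin(\mathcal{C}(\rho))$ while seeing only colors of $\mathit{elem}(\mathcal{C}(\rho))$. Pick, for each color $c \in \Fin(\mathcal{C}(\rho))$, one position in $\rho$ before $N$ whose state has color $c$. Concatenate the segments of $\rho$ between consecutive chosen positions, in order, followed by the segment from the last one to a later occurrence of $p$. Shorten each segment to a simple path, keeping its endpoints. Every state used is a state of $\rho$, so no new colors appear, and every color of $\Fin$ is kept at the chosen positions. The number of segments is at most $d+1$, and we may assume $d \le |Q|$ in the state-based setting. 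So $|\rho_u| \le (d+1)|Q|$, which is $|Q|^2$ up to an additive $|Q|$; I would either tighten this by charging to distinct states, or accept $O(|Q|^2)$.

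Putting the parts together: $\mathit{elem}$ of the lasso lies between $\Fin \cup \Inf$ and $\mathit{elem}(\mathcal{C}(\rho))$, and its $\Inf$ is exactly $\mathcal{C}(Q_\infty)$. Therefore its $\Fin$ equals $\Fin(\mathcal{C}(\rho))$. Every Emerson--Lei condition, and likewise reachability and safety, depends only on the pair $(\Fin, \Inf)$, so $\rho_u\rho_v^\omega$ is accepting. It starts in $I$ and is a run because it uses only transitions of $\Delta$. Note that the guards are irrelevant here, since acceptance of a run only needs each guard to be satisfiable; I take this as the paper's convention for runs, where runs need not be satisfiable. The main obstacle is the precise bound $|Q|^2$ for $\rho_u$. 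To get it, I would charge each kept segment to a distinct state, namely a new color witness, and use the fact that the number of colors of states is at most $|Q|$.
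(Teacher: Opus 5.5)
Your proof is correct, but it is organized differently from the paper's.

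\textbf{The paper's route.} It cuts $\rho$ into its actual prefix $q_1\cdots q_{n-1}$ and an actual loop $q_n\cdots q_{m-1}$. Here $m$ is the first return to $q_n$ after every color of $\Inf(\mathcal{C}(\rho))$ has appeared. It then shortens these contiguous pieces with a pumping claim. If some state $q$ occurs more than $|Q|$ times, pigeonhole gives a sub-loop between two consecutive occurrences of $q$ all of whose states occur elsewhere. Deleting it preserves the set of visited states, and hence the set of colors. Iterating leaves at most $|Q|$ occurrences per state, so each piece has length at most $|Q|^2$.

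\textbf{Your route.} You fix waypoints: all of $Q_\infty$ for the cycle, and one witness state per color of $\Fin(\mathcal{C}(\rho))$ for the prefix. You then concatenate simple paths extracted from $\rho$. This gives only the sandwich $\Fin(\mathcal{C}(\rho))\subseteq \mathit{elem}(\mathcal{C}(\rho_u))\subseteq\mathit{elem}(\mathcal{C}(\rho))$, not exact preservation of the visited sets. As you note, the sandwich suffices.

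\textbf{What each buys.} Your argument is the more elementary one. It gives the sharper explicit loop bound $|Q_\infty|(|Q_\infty|-1)$. It would also adapt directly to transition-based colorings by choosing witness transitions instead of witness states. The paper's deletion step, by contrast, relies on colors being attached to states. In exchange, the paper's route needs no choice of waypoints and keeps the visited state sets of prefix and loop intact.

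\textbf{The bound on $\rho_u$.} The obstacle you flagged disappears on inspection. The witness states for colors in $\Fin(\mathcal{C}(\rho))$ have pairwise distinct colors, so they are distinct states. They also lie outside $Q_\infty\neq\emptyset$. Hence there are at most $|Q|-1$ of them, giving at most $|Q|$ segments of at most $|Q|-1$ transitions each, so $|\rho_u|\le |Q|(|Q|-1)$.

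\textbf{One small fix.} When $Q_\infty=\{p\}$, ``removing internal cycles'' from a segment from $p$ back to $p$ would leave the empty path. You must keep a nonempty one; here it is simply the self-loop on $p$ that $\rho$ uses after $N$.
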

\begin{proof}
   Consider a state-based Hanoi omega-automata $\mathcal{A}$,  an accepting run $\rho = q_1\varphi_1q_2\varphi_2\dots$ of $\cal {A}$. Let $S$ be the set of all infinitely occurring states in $\rho$. So, the set $\Inf(\mathcal{C}(\rho))$ is all the labels (colors) of the states in $S$ and $\mathit{elem}(\mathcal{C}(\rho))$ denotes the set of colors that occur in $\rho$. Then, $$\exists n \geq 1\,  \forall i\geq n, q_i \in S.$$ 
    
    Let $m> n$ be the smallest index such that $q_m = q_n$ and every color in $\Inf(\mathcal{C}(\rho))$ occurs at least once between $q_n$ and $q_m$ in $\rho$.

Let $\rho_{u'}$ be the finite run segment $\rho_{u'} = q_1 \varphi_1 \dots q_{n-1} \varphi_{n-1}$
and let $\rho_{v'} = q_n \varphi_n \dots q_{m-1} \varphi_{m-1}$.
Since all states occurring in $\rho_{v'}$ appear from position $n$ onward, i.e., $\forall q_i \in \rho_{v'}, i \geq n$, we have $\mathit{elem}(\mathcal{C}(\rho_{v'})) \subseteq \Inf(\mathcal{C}(\rho))$. Moreover, by the choice of $m$, every color in $\Inf(\mathcal{C}(\rho))$ appears at least once in $\rho_{v'}$, i.e., $\Inf(\mathcal{C}(\rho)) \subseteq \mathit{elem}(\mathcal{C}(\rho_{v'}))$. Hence,
\[
    \mathit{elem}(\mathcal{C}(\rho_{v'})) = \Inf(\mathcal{C}(\rho)).
\]

Consider the infinite run $\rho_{u'} \rho_{v'}^\omega$. This run is valid since $q_m = q_n$ in $\rho$, and therefore $\rho_{v'}$ forms a loop. Now we have a run that satisfies the lasso property. By construction of the run,
\[
    \Inf(\mathcal{C}(\rho_{u'} \rho_{v'}^\omega)) = \mathit{elem}(\mathcal{C}(\rho_{v'})) = \Inf(\mathcal{C}(\rho)).
\]

Furthermore, all states that occur only finitely often in $\rho$ appear in the run segment $q_1 \varphi_1 \dots q_{n-1}$. Thus, $\Fin(\mathcal{C}(\rho)) \subseteq \mathit{elem}(\mathcal{C}(\rho_{u'}))$. If there exists a color $c$ such that $c \in \mathit{elem}(\mathcal{C}(\rho_{u'})) \cap \Inf(\mathcal{C}(\rho))$, then, since $\Inf(\mathcal{C}(\rho)) = \mathit{elem}(\mathcal{C}(\rho_{v'}))$, it follows that $c \in \mathit{elem}(\mathcal{C}(\rho_{v'}))$. Therefore,
\[
    \mathit{elem}(\mathcal{C}(\rho_{u'})) \setminus \mathit{elem}(\mathcal{C}(\rho_{v'})) = \Fin(\mathcal{C}(\rho)).
\]

Hence, $\Fin(\mathcal{C}(\rho)) = \Fin(\mathcal{C}(\rho_{u'} \rho_{v'}^\omega))$. Note that $\rho_{u'}$ and $\rho_{v'}$ might not be of polynomial size.

    Now, to satisfy the \emph{small} lasso property, we have to argue that we can construct $\rho_u,\rho_v$ such that the run $\rho_u\rho_v^\omega$ will be accepting, $|\rho_u|$ and $|\rho_v|$ are polynomial in $\size{\mathcal{A}}$ with \[\Fin(\mathcal{C}(\rho_u\rho_v^\omega))=\Fin(\mathcal{C}(\rho_{u'}\rho_{v'}^\omega))=\Fin(\mathcal{C}(\rho)), \text{ and }\] \[\Inf(\mathcal{C}(\rho_u\rho_v^\omega))=\Inf(\mathcal{C}(\rho_{u'}\rho_{v'}^\omega))=\Inf(\mathcal{C}(\rho)).\] 
    Towards this, we need the following claim.
    \begin{claim}
        Given a state $q \in Q$ and a finite run segment $\rho$ of $\mathcal{A}$, there exists a run segment $\rho'$ such that
        \begin{enumerate}
            \item $\mathit{elem}(\mathcal{C}(\rho')) = \mathit{elem}(\mathcal{C}(\rho))$,
            \item the number of occurrences of $q$ in $\rho'$ is at most $|Q|$,
            \item $|\rho'| \leq |\rho|$
        \end{enumerate}
    \end{claim}

\begin{claimproof}
Let $q\in Q$ and let $\rho$ be a finite run segment of $\mathcal{A}$.  
Let the positions of $q$ in $\rho$ be $i_1,i_2, \ldots i_k$ where where $k$ is the number of occurrences of $q$ in $\rho$ and
\[
i_1 < i_2 < \dots < i_k.
\]
If $k \le |Q|$ there is nothing to prove.  So assume $k > |Q|$. Consider the $k-1$ consecutive subsegments of $\rho$ between these occurrences: let $\rho_t$ for $t \in [k-1]$, be the subsegment of $\rho$ from the occurrence of $i_t$ (included) to $i_{t+1}$ (excluded). Each $\rho_t$ is a run starting at $q$ and ending just before the next occurrence of $q$.

We show that there exists some index $t$ such that every state occurring in $\rho_t$ also occurs elsewhere in $\rho$ (outside $\rho_t$). Suppose, for contradiction, that every $\rho_t$ contains at least one state that does not appear anywhere else in $\rho$ (i.e., a state unique to $\rho_t$). Then the states that are unique to different $\rho_t$'s are pairwise distinct, so there are at least $k-1$ distinct states that are not $q$. Including $q$ itself, this yields at least $k$ distinct states appearing in $\rho$. But since $k>|Q|$, it contradicts that $\rho$ can only visit states from the finite set $Q$. 

Therefore there exists some index $t$ such that every state occurring in $\rho_t$ also occurs elsewhere in $\rho$ (outside $\rho_t$). Because $\mathcal{A}$ is state-based, colors are determined by states. Hence removing the entire subsegment $\rho_t$ does not remove any color from the set of colors seen along $\rho$. Let $\rho'$ be the run obtained from $\rho$ by deleting the subsegment $\rho_t$. Then:
\begin{itemize}
    \item $|\rho'| < |\rho|$ (we removed at least one transition),
    \item $\mathit{elem}(\mathcal{C}(\rho')) = \mathit{elem}(\mathcal{C}(\rho))$ (no color disappeared),
    \item the number of occurrences of $q$ in $\rho'$ is $k-1$.
\end{itemize}

If $k-1 > |Q|$, repeat the argument on $\rho'$. Each repetition reduces the number of occurrences of $q$ by at least one while preserving the colors. After finitely many steps we obtain a run segment ($\rho'$) with at most $|Q|$ occurrences of $q$, with the desired properties.
\end{claimproof}
    
By iteratively using the claim above for all states $q$ occuring in the runs $\rho_{u'}, \rho_{v'}$, we can construct $\rho_u$ from $\rho_{u'}$ and $\rho_{v}$ from $\rho_{v'}$. After these reductions, every state appears at most $|Q|$ times in $\rho_u$ and $\rho_v$, and since there are at most $|Q|$ distinct states, the total number of state occurrences in $\rho_u$ and $\rho_v$ is at most $|Q|^2$. Hence, $|\rho_u|\leq |Q|^2$ and $|\rho_v|\leq |Q|^2$. Since the set of colors is preserved between $\rho_v$ and $\rho_{v'}$, and $\rho_u$ and $\rho_{u'}$, we have: $\Inf(\mathcal{C}(\rho))=\Inf(\mathcal{C}(\rho_u\rho_v^\omega))$, $\Fin(\mathcal{C}(\rho))=\Fin(\mathcal{C}(\rho_u\rho_v^\omega))$ and $|\rho_u|,|\rho_v| \leq |Q|^2$.
\end{proof}

\begin{lemma}\label{lemma:small_lasso}
Given an Hanoi omega-automaton $\mathcal{A} = (Q, I, \ap, \Delta, F)$ with index $d$ and an accepting run $\rho$, there exist $\rho_u,\rho_v$ such that $|\rho_u| ,|\rho_v| \leq (|Q| \cdot d)^2$, $\rho_u\rho_v^\omega$ is an accepting run, $\Fin(\mathcal{C}(\rho))=\Fin(\mathcal{C}(\rho_u\rho_v^\omega))$ and $\Inf(\mathcal{C}(\rho)) = \Inf(\mathcal{C}(\rho_u\rho_v^\omega))$.
\end{lemma}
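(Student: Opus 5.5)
The plan is to reduce to the state-based case via \Cref{lem:to_state_based}, apply \Cref{lemma:small_lasso_statebased} there, and project the resulting lasso back to $\mathcal{A}$.

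\textbf{Step 1: transfer the run.} Let $\mathcal{B}$ be the state-based HOA built from $\mathcal{A}$ in the proof of \Cref{lem:to_state_based}. Its states are pairs $(q,c)$ with $c \in C(q)\subseteq[d]$, so $|Q_B| \le |Q|\cdot d$. Given the accepting run $\rho = q_1\xrightarrow{\varphi_1}q_2\xrightarrow{\varphi_2}\cdots$ of $\mathcal{A}$, define the run
\[
\hat\rho = (q_1,c_1)\xrightarrow{\varphi_1}(q_2,c_2)\xrightarrow{\varphi_2}\cdots, \qquad c_i=\mathcal{C}(q_i,\varphi_i,q_{i+1}).
\]
This is a legal run of $\mathcal{B}$, since each $c_i \in C(q_i)$ and the transition from $(q_i,c_i)$ labelled $\varphi_i$ exists because $\mathcal{C}(q_i,\varphi_i,q_{i+1})=c_i$. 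Its initial state lies in $I_B$, and its state-color sequence is exactly $\mathcal{C}(\rho)$. Hence $\hat\rho$ is accepting for the same $\Acc$, and it has the same $\Inf$ and $\Fin$ color sets as $\rho$.

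\textbf{Step 2: apply the state-based lemma.} Applying \Cref{lemma:small_lasso_statebased} to $\hat\rho$ gives segments $\hat\rho_u,\hat\rho_v$ of length at most $|Q_B|^2 \le (|Q|\cdot d)^2$. The run $\hat\rho_u\hat\rho_v^\omega$ is accepting in $\mathcal{B}$ and has the same $\Inf$ and $\Fin$ color sets as $\hat\rho$, hence as $\rho$.

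\textbf{Step 3: project back.} Let $\rho_u,\rho_v$ be obtained by dropping the color component of every state. Each $\mathcal{B}$-transition $((q,c),\varphi,(q',c'))$ comes from the $\mathcal{A}$-transition $(q,\varphi,q')$, which has color $c$. So the projection $\rho_u\rho_v^\omega$ is a run of $\mathcal{A}$, starting in $I$, whose transition-color sequence equals the state-color sequence of $\hat\rho_u\hat\rho_v^\omega$. Lengths are unchanged, and $\rho_v$ still closes a loop because the projection of equal $\mathcal{B}$-states gives equal $\mathcal{A}$-states.

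Acceptance then follows because $\Acc$ depends only on the color sequence. The equalities of $\Inf$ and $\Fin$ follow from Step 2.

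\textbf{Main obstacle.} The only delicate point is the index bookkeeping in Step 3. In $\mathcal{B}$ the color is read at the source state of each step, whereas in $\mathcal{A}$ it is read on the transition. I would check that the $i$-th state color of $\hat\rho$ matches the $i$-th transition color of $\rho$. This holds because the state $(q_i,c_i)$ carries exactly the color of the outgoing transition taken at step $i$. If $\Delta$ contains several transitions $(q,\varphi,q')$ with different colors, I would read $\Delta$ as a multiset of colored triples; the argument is unchanged.
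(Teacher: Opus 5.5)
Your proposal is correct and follows the paper's proof: convert $\mathcal{A}$ into the state-based automaton of \Cref{lem:to_state_based}, which has at most $|Q|\cdot d$ states, and apply \Cref{lemma:small_lasso_statebased}. Your explicit lifting of $\rho$ to $\mathcal{B}$, the projection of the lasso back to $\mathcal{A}$, and the check that state colors in $\mathcal{B}$ match transition colors in $\mathcal{A}$ fill in bookkeeping that the paper leaves implicit.
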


\begin{proof}
    The idea is to first convert the HOA into an equivalent state-based automaton, which increases the number of states to $|Q| \cdot d$ (see \Cref{lem:to_state_based}). Then, by applying \Cref{lemma:small_lasso_statebased} to an accepting run $\rho$, we obtain a lasso-shaped run $\rho_u \rho_v^\omega$ with $|\rho_u|, |\rho_v| \le (|Q| \cdot d)^2$, which preserves both the finitely and infinitely occurring colors.
\end{proof}

\Cref{lemma:small_lasso} allows one to derive an \NP~upper-bound for the non-emptiness problem of HOA automata with Emerson-Lei acceptance: it suffices to non-deterministically guess a lasso run of polynomial length and a valuation for each of its transitions, and then check that the valuations satisfy the formulas on transitions, and that the sequence of colours seen finitely and infinitely often in the run satisfy the acceptance formula. The \NP~lower bound, which holds already for safety and reachability HOA, is a rather direct reduction of SAT, as HOA transitions are labeled with Boolean formulas. This yields the following theorem.

\begin{theorem}\label{thm:non-emptiness}
The non-emptiness problem for Hanoi omega-automaton with an Emerson-Lei acceptance condition is \NP-complete. It is \NP-hard already for deterministic safety and reachability HOA automata.
\end{theorem}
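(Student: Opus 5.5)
For membership in \NP, I would use \Cref{lemma:small_lasso} as a certificate bound. If $\mathcal{L}_{\Acc}(\mathcal{A}) \neq \emptyset$, there is an accepting run $\rho$ and a word $w \models \rho$. By \Cref{lemma:small_lasso} there is then an accepting lasso $\rho_u\rho_v^\omega$ with $|\rho_u|,|\rho_v| \le (|Q|\cdot d)^2$. Since $d \le |\Delta|$, this bound is polynomial in $\size{\mathcal{A}}$.

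The certificate consists of four parts:
\begin{itemize}
\item the transitions of $\rho_u$ and $\rho_v$;
\item an initial state;
\item for each transition $(q,\varphi,q')$ on the lasso, a valuation $v \in \val(\ap)$, of size $|\ap|$;
\item nothing further, since the loop closing is checked directly.
\end{itemize}

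The verifier checks the following in polynomial time:
\begin{itemize}
\item consecutive transitions match, the first state is initial, and the target of the last transition of $\rho_v$ equals the source of its first;
\item each chosen valuation satisfies its guard;
\item the acceptance formula holds. For this it computes $\Inf = \mathit{elem}(\mathcal{C}(\rho_v))$, and evaluates $\alpha$ on it: $\mathsf{Inf}(C)$ holds iff $C \cap \Inf \neq \emptyset$, and $\mathsf{Fin}(C)$ is the negation.
\end{itemize}

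Conversely, a successful certificate yields the word $v_u v_v^\omega$, which is accepted. The lasso run is read with the same periodic valuations, and its color sequence has exactly the set $\Inf$ computed above. Reachability and safety are handled directly: check $\mathit{elem}$ of the colors of $\rho_u\rho_v$ against $R$ or $S$. Alternatively, use the reductions to Büchi and co-Büchi given in \Cref{sec:HOA}.

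For \NP-hardness, I would reduce from SAT using deterministic automata. Given a CNF or arbitrary formula $\psi$ over variables $x_1,\dots,x_n$, let $\ap=\{x_1,\dots,x_n\}$. Take states $q_0$ (initial) and $q_1$. Add a transition $q_0 \xrightarrow{\psi \mid 1} q_1$, a transition $q_0 \xrightarrow{\lnot\psi \mid 2} q_1$, and a self-loop $q_1 \xrightarrow{\mathsf{True}\mid 1} q_1$. This automaton is deterministic because the two guards at $q_0$ are mutually exclusive.

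\begin{itemize}
\item \textbf{Reachability} with $R=\{1\}$: the first transition has color $1$ iff its valuation satisfies $\psi$. The loop is colored $1$ as well, so I would recolor the loop $2$. Then the language is nonempty iff $\psi$ is satisfiable.
\item \textbf{Safety} with $S=\{1\}$ and the loop colored $1$: a word is accepted iff its first letter satisfies $\psi$.
\end{itemize}

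The automaton has size linear in $\size{\psi}$. Since Emerson-Lei conditions subsume these (via the transformations given in \Cref{sec:HOA}), hardness transfers to the general statement.

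The main obstacle is essentially already handled by \Cref{lemma:small_lasso}. The only subtle point remaining is that the guessed valuations have polynomial size; they do, since each has size $|\ap|$. The other point to check is that acceptance of a lasso depends only on $\mathit{elem}(\mathcal{C}(\rho_v))$, which is immediate from the semantics of $\mathsf{Inf}$ and $\mathsf{Fin}$.
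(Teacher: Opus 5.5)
Your proposal is correct and follows the paper's proof. Membership in \NP\ comes from guessing the polynomial-size lasso of \Cref{lemma:small_lasso} with one valuation per transition and evaluating the acceptance formula on $\mathit{elem}(\mathcal{C}(\rho_v))$; \NP-hardness comes from a two-state deterministic SAT gadget for reachability and safety. The only difference is the gadget: yours forces the decision at the first letter and colors the loop differently for the two conditions, so you need the transformations of \Cref{sec:HOA} to reach Emerson--Lei, whereas the paper keeps $\neg\varphi$ as a self-loop at the initial state and one coloring serves both conditions (and $\mathsf{Inf}(\{2\})$ directly).
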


\begin{proof}

  We first prove that non-emptiness for HOA with Emerson–Lei acceptance is in \NP, and then show \NP-hardness already for deterministic HOA with safety or reachability acceptance conditions. Consequently, the non-emptiness problem for HOA with Emerson–Lei acceptance is \NP-complete.

 Consider an HOA $\mathcal{A}$ with index $d$ and Emerson–Lei acceptance, and suppose $\mathcal{L}(\mathcal{A})\neq \emptyset$. By \Cref{lemma:small_lasso} there exists an accepting run of the form $\rho_u \rho_v^\omega$ where the finite segments
   $\rho_u$ and $\rho_v$ satisfy $|\rho_u|,|\rho_v| \leq (|Q|\cdot d)^2$ (indeed polynomial in $\size{\mathcal{A}}$). Thus a polynomial size certificate for non-emptiness is the pair of run segments $\rho_u,\rho_v$ together with valuation segments $w_u, w_v$, which witness the satisfiability of all Boolean formulas occurring along $\rho_u$ and $\rho_v$. Given such a certificate, acceptance can be verified in polynomial time as follows:
   \begin{enumerate}
     \item Verify that $\rho_u\rho_v^\omega$ is a valid run of $\mathcal{A}$ and that $w_u \models \rho_u$ and $w_v \models \rho_v$.
       \item Simulate $\mathcal{A}$ on $\rho_u \rho_v^\omega$
   by simulating the finite prefix $\rho_u$ and one period of $\rho_v$ (both polynomial length) to compute the sets $\Fin(\mathcal{C}(\rho_u \rho_v^\omega))$ and $\Inf(\mathcal{C}(\rho_u \rho_v^\omega))$ and then evaluate the Emerson–Lei Boolean formula on those sets.
   \end{enumerate}
   Verification of the run, simulation, and evaluation of the acceptance condition all take time polynomial in $\size{\mathcal{A}}, |\rho_u|$ and $|\rho_v|$, so non-emptiness belongs to \NP.
   
   It remains to show \NP-hardness. We reduce from SAT. Let $\varphi$ be a Boolean formula over atomic propositions $\mathcal{P}$. In polynomial time, we construct a deterministic HOA automaton $\mathcal{A}$ (shown below) with reachability acceptance such that $\mathcal{L}(\mathcal{A}) \neq \emptyset \iff \varphi$ is satisfiable. 

\begin{center}
    \begin{tikzpicture}[shorten >=1pt, auto, node distance=2.5cm, on grid, >=stealth, scale=0.8, every node/.style={scale=0.8}]
    \tikzset{state/.style={circle, draw=black, thick, minimum size=0.7cm}}
    
    % Nodes
    \node[state, initial, initial text=] (q1) {$q_1$};
    \node[state] (q2) [right=of q1] {$q_2$};
    
    % Transition labels with colored numbers in circles
    \path[->]
        (q1) edge [above] node {$\varphi \mid \tikz[baseline]{\node[circle, draw=blue,fill=blue!20, inner sep=1pt]{2};}$} (q2)
        (q1) edge [loop above] node {$\neg \varphi \mid \tikz[baseline]{\node[circle, draw=red, fill=red!20,inner sep=1pt]{1};}$} (q1)
        (q2) edge [loop above] node {$\mathsf{True} \mid \tikz[baseline]{\node[circle, draw=blue, fill=blue!20, inner sep=1pt]{2};}$} (q2);
\end{tikzpicture}
\end{center}
The reachability acceptance is given by $R = \{2\}$, i.e., it requires that color~2 occurs at least once during the run. Therefore, the automaton admits a run if and only if there exists a valuation satisfying $\varphi$, that is, if $\varphi$ is satisfiable. Similarly, if $\mathcal{A}$ is equipped with the safety acceptance condition $S = \{2\}$, a run is accepting only if all transitions occurring in the run are labeled with color~2. Again, the automaton admits an accepting run if and only if $\varphi$ is satisfiable.
\end{proof}

\begin{remark}
    As a consequence of \Cref{thm:non-emptiness}, the non-emptiness problem is \NP-complete for HOA with Muller, Streett, Rabin, parity, B\"uchi, co-B\"uchi, reachability and safety acceptance.
\end{remark}

\subsection{Language Inclusion of Hanoi Omega-Automaton} 
Given two HOAs $\cal A$ and $\cal B$ equipped with acceptance conditions, and over the same set of atomic propositions $\mathcal{P}$, the \emph{language inclusion problem} asks whether every infinite word accepted by $\cal A$ is also accepted by $\cal B$. We show that this problem is \PSPACE-complete for all acceptance conditions discussed in \Cref{sec:HOA}, except for Emerson-Lei, where the problem becomes \EXPSPACE-complete. We start with acceptance conditions up to Streett.

\begin{theorem}\label{thm:inclusionStreett}
    If $\mathcal{A}$ and $\mathcal{B}$ are Hanoi omega-automata with Streett, parity, Buchi, co-Buchi, reachability or safety accepting conditions, then determining whether $\lan{A} \subseteq \lan{B}$ is \PSPACE-complete.
\end{theorem}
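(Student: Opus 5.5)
We prove the upper bound for Streett acceptance on both sides. Every other listed condition (parity, B\"uchi, co-B\"uchi, and reachability/safety via the transformations of \Cref{sec:HOA}) converts to Streett with only polynomial blow-up. For \PSPACE-hardness we use the classical result that universality and inclusion of explicit nondeterministic B\"uchi (even safety/reachability) automata is \PSPACE-hard. An explicit automaton over an alphabet $\Sigma$ embeds into an HOA by encoding letters with $\lceil\log|\Sigma|\rceil$ propositions: each letter becomes a conjunction of literals, and a sink absorbs the encodings of non-letters. This is a polynomial reduction, so hardness transfers. If we want hardness already for safety or reachability, we reduce from universality of NFAs, which is \PSPACE-hard for these conditions as well.

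For membership, we decide the complement question $\lan{A}\not\subseteq\lan{B}$ in \NPSPACE$=$\PSPACE, by guessing an ultimately periodic witness $w=u\,v^\omega$ over $\val(\ap)$ on the fly. The key step is to avoid the exponential alphabet. The relevant behaviour of both automata depends only on which transition guards a letter satisfies. Each step of the algorithm therefore guesses a single valuation $v\in\val(\ap)$, which takes polynomial space, and evaluates all guards of $\mathcal A$ and $\mathcal B$ on it. Semantically, we run the standard complementation of a Streett automaton $\mathcal B$ into a Rabin/B\"uchi automaton $\overline{\mathcal B}$ with $2^{\mathrm{poly}(\size{\mathcal B})}$ states, e.g., via Safra-style determinization of Streett automata. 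Its states can be written in polynomial space, and its transition function on a concrete valuation is computable in polynomial space, because it only needs the set of $\mathcal B$-transitions enabled by $v$. Then $\lan{A}\cap\lan{\overline{\mathcal B}}\neq\emptyset$ holds iff there is a lasso in the product $\mathcal A\times\overline{\mathcal B}$ satisfying the Streett-and-Rabin condition. The product has exponentially many states, each polynomially representable, and the lasso can be found with the usual Savitch/NL-style guess-and-check: guess the loop start state, guess the transitions step by step while storing only the current product state, a step counter up to the exponential bound in polynomial bits, and bookkeeping for which colours of $\mathcal A$ and which Rabin pair of $\overline{\mathcal B}$ are seen in the loop. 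Emptiness of the product's acceptance on a guessed loop is then checked. This mirrors the treatment of explicit automata, with the alphabet replaced by per-step guessed valuations. The Streett condition on the $\mathcal A$-side is handled directly by recording seen colour sets along the loop, as in the proof of \Cref{thm:non-emptiness}.

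The main obstacle is justifying that complementing a Streett $\mathcal B$ yields states of polynomial size and a polynomial-space-computable successor function, uniformly for symbolic letters. Streett complementation needs $2^{O(nk\log nk)}$ states, and we must check that the construction touches letters only through the set of enabled transitions. As a fallback, we avoid explicit complementation and argue via a rank-based or Safra-tree representation applied to the state-based version of $\mathcal B$ from \Cref{lem:to_state_based}. Then $\mathcal B$ becomes an ordinary automaton over the finite ``alphabet'' of subsets of transitions, and a letter is legitimate iff some valuation realizes the guessed subset. This realizability check is done by guessing the valuation itself, so we never need a \SAT\ oracle.
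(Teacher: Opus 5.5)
Your proposal is correct and takes essentially the same route as the paper. Hardness is inherited from the explicit-automaton case, and membership comes from determinizing the (state-based) Streett automaton $\mathcal B$ in the style of Safra/Piterman, dualizing, and guessing an accepting lasso in the product with $\mathcal A$ on the fly, with one guessed valuation per step, plus Savitch's theorem. The point you flag as the main obstacle is not a real difficulty, and the paper settles it the same way your fallback does: a Safra-tree successor depends only on the per-state successor sets under the guessed valuation, and these are computable in polynomial time by evaluating the guards.
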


\begin{proof}
    The \PSPACE-hardness follows directly from the known \PSPACE-hardness of language inclusion for finite automata \cite{Stockmeyer1973}. For the upper bound, consider two HOAs $\mathcal{A}$ and $\mathcal{B}$ equipped with Streett acceptance. By virtue of \Cref{lem:to_state_based}, we can assume that $\mathcal{A}$ and $\mathcal{B}$ are state-based HOAs. We can translate each of them into an equivalent classical Streett automaton $\mathcal{A}'$ and $\mathcal{B}'$ over the alphabet $\val(\mathcal{P})$, by replacing every transition labeled with a Boolean formula $\varphi$ by all transitions labeled with valuations satisfying $\varphi$. This construction preserves the set of states unchanged but may introduce an exponential number of transitions. Hence, explicitly constructing $\mathcal{A}'$ and $\mathcal{B}'$ would in general require exponential time and space, which we avoid by simulating them on the fly. 
    
    To decide $\mathcal{L}(\mathcal{A}) \subseteq \mathcal{L}(\mathcal{B})$, it suffices to check emptiness of $\mathcal{L}(\mathcal{A}') \cap \overline{\mathcal{L}(\mathcal{B}')}$. To handle complementation, using Piterman's improved versions of Safra's determinisation construction \cite{Pit06}, the non-deterministic Streett automaton $\mathcal B'$ with $n$ states and $k$ pairs (acceptance size) can be determinised into an equivalent deterministic parity automaton $\mathcal D_{B'}$ with $2^{\Theta(nk\log(nk))}$ states and $\Theta(nk)$ pairs. The states of this parity automaton are Safra trees, and each transition of $\mathcal D_{B'}$ is computable in polynomial space i.e., given a Safra tree $T$ and a valuation $v$, the next state, denoted by $\mathrm{Safra}(T,v)$, can be computed in polynomial space. Complementation is then obtained by dualising the parity condition of $\mathcal{D}_{\mathcal{B}'}$, yielding an automaton recognising $\overline{\mathcal{B}'}$ without increasing the state space.

For checking the emptiness of $\mathcal L(\mathcal A') \cap \mathcal L(\overline{\mathcal B'})$, we then take the intersection of Streett automata $\mathcal{A}'$ and $\overline{\mathcal B'}$ by their synchronous product construction to get automaton $\mathcal C$, whose state space has size 
$\size{\mathcal A'} \cdot 2^{\Theta(nk\log(nk))} = \lassowordlimit$. If $\mathcal L(\mathcal C)\neq\emptyset$, then by \Cref{lemma:small_lasso} there exists an accepting lasso of exponential length.

 We now describe a non-deterministic polynomial space algorithm to check emptiness of $\mathcal L(\mathcal{A}) \cap \mathcal L(\overline{\mathcal{B}})$. The algorithm guesses a lasso-shaped run of length at most $\lassowordlimit$ and simulates transitions of $\mathcal A'$ and $\overline{\mathcal B'}$ on the fly (without constructing them explicitly) as follows:  At each step, the algorithm non-deterministically choose a valuation $v$ and selects transitions of $\mathcal{A}$ and $\mathcal{B}$ whose Boolean formulas are satisfied by $v$, thereby simulating $\mathcal{A}'$ and $\mathcal{B}'$ on the fly. Initially, let $q^{A'}$ be the initial state of $\mathcal{A}'$ and $T^{B'}$ be the initial safra tree of $\mathcal{B}'$. Given the current Safra tree $T^{B'}$, the next tree $\mathit{Safra}(T^{B'},v)$ is computed from the chosen valuation $v$ using a single step of Piterman’s determinisation construction. This step can be performed using polynomial space, and at any point the algorithm stores only the current state $q^{A'}$ and the current Safra tree $T^{B'}$. To detect an accepting lasso, the algorithm guesses a position $p$ corresponding to the start of the cycle and stores the configuration $(q^{A'}_p, T^{B'}_p)$ at that point. It then continues the simulation and checks whether the same configuration is reached again. During this phase, it maintains two pieces of information: (i) the maximal color $c$ seen along the cycle (for the parity condition corresponding to $\overline{\mathcal{B}'}$), and (ii) the set $S$ of visited states (from $\mathcal{A}$) to verify the Streett condition of $\mathcal{A}$. If the configuration $(q^{A'}_p, T^{B'}_p)$ is revisited and both acceptance conditions are satisfied (i.e., $c$ is odd and $S$ satisfies the Streett acceptance of $\mathcal{A}$), the algorithm accepts. Although the lasso may have exponential length, the algorithm never stores the entire run. It only keeps the current configuration, the guessed cycle entry configuration, and the values $c$ and $S$, all of which require polynomial space. If the intersection is non-empty, such a lasso exists within the guessed bound.

Therefore, the algorithm uses only polynomial space. Since non-deterministic polynomial space coincides with \PSPACE \ (by Savitch’s Theorem), emptiness of $\mathcal L(\mathcal{A}) \cap \mathcal L(\overline{\mathcal{B}})$ can be decided in \PSPACE.

Finally, acceptance conditions such as parity, Büchi, co-Büchi, reachability, and safety are special cases of Streett conditions. Therefore, their language inclusion problems also lie in \PSPACE.
\end{proof}

In the case of Rabin and Muller HOAs, we show that each automaton can be transformed into an equivalent Streett HOA of polynomial size. The construction follows the classical Rabin/Muller to Streett conversions \cite{Boker17, Choueka74}. For the sake of completeness, we provide the proof below. Using \Cref{thm:inclusionStreett}, this implies \PSPACE-completeness of the language inclusion problem for both acceptance conditions.

\begin{lemma}\label{lem:HOARabinMuller_to_Streett}
    Every Rabin or Muller HOA can be effectively transformed into an equivalent Streett HOA with only a polynomial blow-up in size.
\end{lemma}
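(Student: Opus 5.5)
The approach is the classical one: first reduce Muller to Rabin with only a polynomial blow-up, then turn Rabin into Streett by nondeterministically guessing a Rabin pair. Throughout, the Boolean transition guards are copied verbatim; only the control structure and the colouring change. Size is measured as $\size{\cdot}$, which includes the acceptance formula, so the number $k$ of pairs or Muller sets counts as part of the input.

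\textbf{Rabin to Streett.} Let $\mathcal{A}$ be a Rabin HOA with pairs $(E_i,F_i)_{i\in[k]}$. We build $\mathcal{B}$ with states $Q \cup (Q\times[k])$.
\begin{enumerate}
\item The copy $Q$ keeps all transitions of $\mathcal{A}$, with a fresh colour $0$.
\item For each transition $(q,\varphi,q')$ and each $i$, we add a jump $(q,\varphi,(q',i))$ with colour $0$.
\item Inside copy $i$, we keep the transitions $((q,i),\varphi,(q',i))$, but only when $\mathcal{C}(q,\varphi,q')\notin F_i$. Such a transition gets a fresh colour $\langle c,i\rangle$, where $c$ is its original colour.
\end{enumerate}
The Streett condition is
\[
\mathsf{Fin}(\{0\})\wedge\bigwedge_{i}\bigl(\mathsf{Fin}(\{\langle c,i\rangle \mid c\in[d]\})\vee \mathsf{Inf}(\{\langle c,i\rangle\mid c\in E_i\})\bigr).
\]
Equivalently, one can use $k+1$ Streett pairs with $\mathsf{Fin}(\{0\})$ written as $\mathsf{Fin}(\{0\})\vee\mathsf{Inf}(\emptyset)$.

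An accepting run of $\mathcal{B}$ eventually stays in a single copy $i$, because there are no transitions between copies. There it avoids $F_i$ and sees $E_i$ infinitely often, so its projection is a Rabin-accepting run of $\mathcal{A}$. Conversely, an accepting run of $\mathcal{A}$ that is accepted through pair $i$ sees $F_i$ only finitely often. It can therefore jump into copy $i$ after its last $F_i$ transition and satisfy the condition there. The size is $O(k\cdot\size{\mathcal{A}})$ plus $O(kd)$ colours, which is polynomial.

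\textbf{Muller to Rabin (or directly to Streett).} Let the Muller sets be $E_1,\dots,E_k$. Guess a set $E_j$ and a point after which only colours of $E_j$ occur. In copy $j$, keep only transitions whose colour lies in $E_j$, and recolour them with fresh colours $\langle c,j\rangle$. Then require $\mathsf{Inf}(\{\langle c,j\rangle\})$ for every $c\in E_j$.

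This gives the condition
\[
\mathsf{Fin}(\{0\})\wedge\bigvee_j\bigwedge_{c\in E_j}\mathsf{Inf}(\{\langle c,j\rangle\}),
\]
a disjunction over $j$ of generalized Büchi conditions restricted to copy $j$. Because a run eventually lives in exactly one copy, this condition is equivalent to the conjunction over all $j$ of
\[
\mathsf{Fin}(\mathrm{copy}_j)\vee\bigwedge_{c\in E_j}\mathsf{Inf}(\{\langle c,j\rangle\}),
\]
where $\mathsf{Fin}(\mathrm{copy}_j)$ is short for $\mathsf{Fin}$ of the set of all colours $\langle c,j\rangle$ used in copy $j$. Distributing the inner conjunction gives Streett pairs $(\mathrm{copy}_j,\{\langle c,j\rangle\})$, one for each $c\in E_j$. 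Together with $\mathsf{Fin}(\{0\})$, this is a Streett condition with $\sum_j|E_j|+1$ pairs, so the blow-up is again polynomial.

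\textbf{Main obstacle.} The delicate step is correctness when a run is forced into a single copy. We must check that the copies really partition the tail of every accepting run. This holds because there are no transitions between copies and the fresh colour $0$ must be seen finitely often. The "copy-local" conjunctive rewriting of the disjunction relies on exactly this property. We also need the correct run of $\mathcal{A}$ to be able to jump at the right moment. For the Muller case, the right moment is after the last occurrence of a colour outside $E_j$. Finally, the construction never modifies the guards, so the polynomial bound on $\size{\mathcal{B}}$ is immediate.
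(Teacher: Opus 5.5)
Your proposal is correct and follows essentially the paper's construction: you nondeterministically guess a Rabin pair (resp.\ Muller set) and a switching point, then jump into a copy restricted to colours outside $F_i$ (resp.\ inside $E_j$) that carries fresh colours, with guards copied verbatim. The differences are cosmetic (a shared first copy penalised by $\mathsf{Fin}(\{0\})$ instead of a disjoint union of two-copy automata, transition-based rather than state-based colours, and a Streett rather than B\"uchi target for Rabin, where the single B\"uchi set $\bigcup_i\{\langle c,i\rangle \mid c\in E_i\}$ would already suffice), and your explicit $\mathsf{Fin}(\mathrm{copy}_j)$ disjuncts make the combination over several Muller sets sound, a point the paper's disjoint-union step with its pair $(\emptyset,c_1)$ glosses over.
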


\begin{proof}
    We consider the Rabin and Muller cases separately, as the constructions are different. Both proofs follow the classical Rabin/Muller to Streett conversions \cite{Boker17, Choueka74}. By virtue of \Cref{lem:to_state_based}, we may restrict attention to state-based Rabin and Muller HOAs.

    \begin{description}
        \item[Muller to Streett.] A Muller HOA with $n$ states and $k$ Muller sets can be viewed as a union of $k$ Muller HOAs, each with $n$ states (same structure) and 1 Muller set. We show that each Muller HOA with a single Muller set can be converted into a Streett HOA of polynomial size. For the general Muller HOA, taking a disjoint union of all the corresponding Streett HOAs yields a Streett HOA that is equivalent to the original Muller HOA and is of polynomial size.

        Let $\mathcal{A} = (Q, Q_0, \mathbb{B}(\ap), \Delta, \mathcal{C})$ be a state-based Muller HOA with index $d$ and single Muller set $E= \{c_1, c_2, \ldots, c_l\}$ consisting of at most $l \leq d \leq |Q|$ colors that must be visited infinitely often. We construct an equivalent Streett HOA $\cal B$ with acceptance condition $S$ as follows: $\cal B$ consist of two copies of  $\cal A$, where the second one is a restricted copy of $\cal A$ containing only those states whose assigned color belongs to $E$, i.e., $\{q \in Q \mid {\cal C} (q) \in E\}$. The transitions in the first copy are as in $\cal A$, and in addition allow to move to the corresponding states in the second copy. The Streett acceptance condition will require all the colors of the second copy to be visited infinitely often, and is given by, $S= \{(\emptyset,c_1), (c_1, c_2) , \ldots (c_{l-1},c_l)\}$. As a result, $\cal B$ has at most $2|Q|$ states and $|Q|$ accepting pairs. Hence, $\size{\cal B}$ is polynomial w.r.t. $\size{\cal A}$.

        \item[Rabin to Streett.] We detail the construction from Rabin to B\"uchi HOA (which is a special case of Streett HOA). A Rabin HOA with $n$ states and $k$ Rabin pairs can be seen as a union of $k$ Rabin HOA, each with $n$ states (same structure) and a single Rabin pair. We show that each Rabin HOA with single Rabin pair can be converted to an equivalent B\"uchi HOA of polynomial size. Consequently, taking the disjoint union of these B\"uchi HOAs yields a B\"uchi HOA which is equivalent to the original Rabin HOA, and the resulting automaton remains of polynomial size.

        Let $\cal {A}$ be a Rabin HOA with a single Rabin pair $(E,F)$, where colors in $E$ must be visited infinitely often and colors in $F$ must be visited only finitely often. The equivalent B\"uchi HOA $\cal {B}$ is constructed as follows: $\cal {B}$ contains two copies of $\cal {A}$, where the second copy includes only those states whose colors do not belong to $F$. The first copy non-deterministically guesses a point to jump to the second copy, thereby ensuring that colors in $F$ are avoided from that point onward. The second copy has no transitions back to the first copy and uses $E$ as its B\"uchi acceptance set. Hence,$\size{\cal B}$ is polynomial in $\size{\cal A}$.

    \end{description}
\end{proof}

Using \Cref{lem:HOARabinMuller_to_Streett} and \Cref{thm:inclusionStreett}, we obtain the following result
\begin{theorem}
     The language inclusion problem for Muller and Rabin HOAs is \PSPACE-complete.
\end{theorem}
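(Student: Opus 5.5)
The statement splits into a lower bound and an upper bound, each taken from results already in the paper.

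\emph{Lower bound.} \PSPACE-hardness carries over unchanged from \Cref{thm:inclusionStreett}. Universality and inclusion of finite automata \cite{Stockmeyer1973} reduce to inclusion of Büchi HOAs: encode each letter by a valuation of polynomially many propositions, and let the transition guards be the corresponding minterms. A Büchi condition is a Rabin condition with the single pair $(B,\emptyset)$. It is also a Muller condition after a polynomial adjustment, for instance a disjunction over the subsets of colours that contain a Büchi colour; with two colours this is just $\{2\}$ or $\{1,2\}$. So hardness holds for both classes. For Muller I would make sure the hard instances only need a constant number of colours, so that listing Muller sets stays polynomial.

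\emph{Upper bound.} Given Rabin or Muller HOAs $\mathcal{A}$ and $\mathcal{B}$, I would apply \Cref{lem:HOARabinMuller_to_Streett} to each one. This yields equivalent Streett HOAs $\mathcal{A}_S$ and $\mathcal{B}_S$, each of size polynomial in the original. They are over the same propositions $\ap$ and accept the same languages, so $\lan{A}\subseteq\lan{B}$ holds iff $\mathcal{L}(\mathcal{A}_S)\subseteq\mathcal{L}(\mathcal{B}_S)$. \Cref{thm:inclusionStreett} decides the latter in space polynomial in $\size{\mathcal{A}_S}+\size{\mathcal{B}_S}$, which is polynomial in the input. The translation can be done in polynomial time before the check runs. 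Mixed cases, such as Rabin against Muller or either against Streett, are handled the same way because every input is converted to Streett first.

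\emph{Main obstacle.} The delicate step is checking that the Streett HOAs from \Cref{lem:HOARabinMuller_to_Streett} have polynomially many states \emph{and} polynomially many Streett pairs. The Safra/Piterman bound $2^{\Theta(nk\log(nk))}$ used inside \Cref{thm:inclusionStreett} must stay single-exponential, so that Safra trees can be stored in polynomial space. For the Muller case the chain of pairs is of length at most $|Q|$ per Muller set, and the disjoint union over $k$ sets stays polynomial. I would therefore check explicitly that $n$ and $k$ of the resulting automata are polynomial in $\size{\mathcal{A}}+\size{\mathcal{B}}$, after which the statement follows.
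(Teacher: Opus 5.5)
Your proposal is correct and follows the paper's route. The upper bound is \Cref{lem:HOARabinMuller_to_Streett} followed by the on-the-fly Safra/Piterman procedure of \Cref{thm:inclusionStreett}, and the lower bound is the finite-automata inclusion hardness already invoked there; your remarks that B\"uchi is a one-pair Rabin condition and a small Muller condition, and that the number of Streett pairs must stay polynomial, spell out details the paper leaves implicit.
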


Next, it remains to study the problem for Emerson-Lei HOA. In this case, we show that every Emerson-Lei HOA can be effectively transformed into B\"uchi HOA with an exponential blow-up in size.  The proof is adapted from the standard Emerson–Lei to Büchi construction (see \cite{DBLP:conf/stoc/SafraV89}), and for completeness sake, we provide the proof in the below.
\begin{lemma}\label{lem:HOA-EL_toHOA-B}
    Every HOA $\mathcal{A}$ with Emerson-Lei acceptance can be effectively transformed to an equivalent HOA with B\"uchi acceptance of size $2^{\mathcal{O}(\size{\mathcal{A}})}$.
\end{lemma}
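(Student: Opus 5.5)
The plan is the classical ``guess the set of colours seen infinitely often'' construction, adapted to transition colours and symbolic guards. First apply \Cref{lem:to_state_based} to obtain an equivalent state-based HOA, with at most $|Q|\cdot d$ states and only a quadratic blow-up. Let $\alpha$ be the Emerson-Lei acceptance formula. A colour sequence $\bar c$ satisfies $\alpha$ iff the set $M=\Inf(\bar c)$ makes $\alpha$ true when each $\mathsf{Inf}(C)$ is read as $M\cap C\neq\emptyset$ and each $\mathsf{Fin}(C)$ as $M\cap C=\emptyset$. So we collect the family $\mathcal M=\{M\subseteq[d]\mid M\models\alpha\}$. This has at most $2^d\le 2^{|\Delta|}$ members, and membership of a set in $\mathcal M$ is decidable in polynomial time by evaluating $\alpha$.

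The Büchi automaton $\mathcal B$ has a first copy of $\mathcal A$ (state-based), in which it runs until it nondeterministically guesses a set $M\in\mathcal M$ and jumps, on any transition, into a component $\mathcal B_M$. Component $\mathcal B_M$ keeps only the states whose colour is in $M$. Its states are pairs $(q,S)$ with $S\subseteq M$, recording which colours of $M$ have been seen since the last reset. On a transition $(q,\varphi,q')$ of $\mathcal A$ with $\mathcal C(q')\in M$, it moves to $(q',S\cup\{\mathcal C(q')\})$, reset to $\emptyset$ whenever this union equals $M$. Guards are copied verbatim, so no valuations are enumerated. The Büchi colour is one fresh colour on the transitions performing a reset and another colour elsewhere.

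Correctness comes in two directions. If $\rho$ is an accepting run of $\mathcal A$, take $M=\Inf(\mathcal C(\rho))\in\mathcal M$. Then jump after the last position at which a colour outside $M$ occurs. From there every colour of $M$ recurs infinitely often, so resets happen infinitely often. Conversely, an accepting run of $\mathcal B$ eventually stays in some $\mathcal B_M$, which only sees colours in $M$, and resets infinitely often, so it sees every colour of $M$ infinitely often. Its projection to $\mathcal A$ is a run with the same guards, hence read by the same word, and it has $\Inf=M\models\alpha$.

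For size, $\mathcal B$ has at most $|Q|d+|\mathcal M|\cdot|Q|d\cdot 2^d$ states, and each transition guard is a copy of an original guard. The main point to check is that the bound is $2^{\mathcal O(\size{\mathcal A})}$ rather than doubly exponential. This works because $d\le|\Delta|\le\size{\mathcal A}$ by the standing assumption, so the total size is $\mathrm{poly}(\size{\mathcal A})\cdot 2^{2d}=2^{\mathcal O(\size{\mathcal A})}$.
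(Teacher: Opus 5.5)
Your construction is correct, but it takes a different route from the paper.

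\textbf{The paper's route.} The paper never enumerates infinity sets. It rewrites $\alpha$ in disjunctive normal form $\bigvee_{i\le k}\alpha_i$ with $k\le 2^{\size{\alpha}}$. For each clause $\bigwedge_j\mathsf{Inf}(E_j)\wedge\bigwedge_r\mathsf{Fin}(F_r)$ it builds a polynomial-size B\"uchi HOA directly on the transition-coloured automaton. This is a product with a round-robin counter over the $E_j$ (generalized B\"uchi to B\"uchi) and a flag bit that guesses the last visit to $\bigcup_r F_r$. The disjoint union over all clauses gives the result.

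\textbf{Your route.} You use $\alpha$ only as a test on sets: you pass to the explicit Muller family $\mathcal M$. You then do a Muller-to-B\"uchi construction for each $M\in\mathcal M$, after the state-based normalisation of \Cref{lem:to_state_based}.

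\textbf{Trade-off.} The difference is where the exponent comes from.
\begin{itemize}
\item The paper's automaton has size $2^{\mathcal O(\size{\alpha})}\cdot\mathrm{poly}(\size{\mathcal A})$, so the blow-up is governed by the acceptance formula alone.
\item Yours has size $2^{\mathcal O(d)}\cdot\mathrm{poly}(\size{\mathcal A})$. It needs the standing assumption $d\le|\Delta|\le\size{\mathcal A}$ to fall within $2^{\mathcal O(\size{\mathcal A})}$, which you correctly invoke.
\item In exchange, your argument avoids DNF manipulation. It treats every condition that is a predicate on $\Inf$ uniformly, so Muller conditions are covered directly. Its correctness proof reduces to the observation that an accepting run of your automaton has $\Inf = M$ for the guessed $M$.
\end{itemize}
The $2^d$ factor from the subset memory $S$ could be replaced by a round-robin counter over $M$, but this does not change the stated bound.
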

\begin{proof}
    We adapt the standard Emerson-Lei to Büchi construction \cite[Proposition 3.1]{DBLP:conf/stoc/SafraV89}. Let $\mathcal{A}$ be a HOA automaton with Emerson-Lei acceptance formula $\alpha$. Let $\alpha'$ denote the disjunctive normal form of \(\alpha\), i.e., $\alpha' \;=\; \bigvee_{i=1}^{k} \alpha_i$, where $k \le 2^{\size{\alpha}}$ and each $\alpha_i$ is a conjunction of predicates of the form $\mathsf{Inf}(C)$ or $\mathsf{Fin}(C)$. 
    
    For each disjunct $\alpha_i$, we construct a HOA $\mathcal{A}_i$ with B\"uchi acceptance that enforces $\alpha_i$.  WLOG let $\alpha_i = \bigwedge_{j=1}^{m} \mathsf{Inf}(E_j) \land \bigwedge_{r=1}^{n} \mathsf{Fin}(F_r)$ for some $m,n \geq 0$.  The conjunct $\bigwedge_{j=1}^{m} \mathsf{Inf}(E_j)$ is a generalized-Büchi condition:  each set $E_j$ must be visited infinitely often. We convert this generalized B\"uchi condition into a single B\"uchi condition using a standard “visit all Inf-sets cyclically” by introducing a round-robin counter over $\{1,\dots,m\}$ that cycles through the $m$ $\mathsf{Inf}$ predicates. The conjunct $\bigwedge_{r=1}^{n} \mathsf{Fin}(F_r)$ is equivalent to a single co-Büchi condition $\mathsf{Fin}(C)$ with $C \;=\; \bigcup_{r=1}^{n} F_r$, meaning that the run may visit $C$ only finitely often. We implement this by introducing a flag bit $b \in \{0,1\}$: at some point the automaton non-deterministically guesses the last visit to $C$ and switches $b$ from $0$ to $1$.  
    In mode $b=1$, all transitions labeled by colors in $C$ are forbidden. 
    Thus, the HOA B\"uchi automata ${\cal A}_i$ is obtained as a synchronous product of $\cal A$ with a round-robin counter $\{1, \ldots, m\}$ for the $\mathsf{Inf}$ predicates and a flag bit $\{0,1\}$ for the $\mathsf{Fin}$ predicates. The size of ${\cal A}_i$ is at most $2 \cdot \size{\alpha} \cdot \size{\cal A}$ since $m,n \leq \size{\alpha}$.  For acceptance, we assign a new color $c$ to exactly those transitions where the round-robin counter returns to $1$ in mode $b=1$.  Thus the B\"uchi acceptance condition is $\mathsf{Inf}(\{c\})$. This ensures that each $E_j$ is visited infinitely often and $C$ is visited only finitely often (flag remains in mode $b=1$).

     Since $\alpha'$ is a disjunction, we have $\mathcal{L}(\mathcal{A}) = \bigcup_{i=1}^{k} \mathcal{L}(\mathcal{A}_i)$. We obtain an equivalent B\"uchi automaton by taking the disjoint union of all $\mathcal{A}_i$.  The size of the resulting automaton is bounded by $2^{\size{\mathcal{A}}}$, since $k \le 2^{\size{\alpha}}$ and each $\mathcal{A}_i$ has size polynomial in $\size{\mathcal{A}}$. Thus $\mathcal{A}$ can be translated into an equivalent B\"uchi HOA automaton of size at most $2^{\mathcal{O}(\size{\mathcal{A}})}$.
\end{proof}

\begin{remark}
    Since a B\"uchi HOA  is in particular also a parity/Streett/Rabin HOA, \Cref{lem:HOA-EL_toHOA-B} also gives an exponential translation from HOA Emerson-Lei to HOA parity/Streett/Rabin. 
\end{remark}

\begin{theorem}\label{thm:inclusionEL}
   The language inclusion problem for Emerson-Lei HOA is \EXPSPACE-complete.
\end{theorem}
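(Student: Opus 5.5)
The proof has two parts: an \EXPSPACE{} upper bound, obtained by combining \Cref{lem:HOA-EL_toHOA-B} with the algorithm of \Cref{thm:inclusionStreett}, and a matching lower bound via a reduction from a known \EXPSPACE-hard inclusion problem.

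\emph{Upper bound.} Given Emerson-Lei HOAs $\mathcal{A}$ and $\mathcal{B}$, we first apply \Cref{lem:HOA-EL_toHOA-B} to both, obtaining B\"uchi HOAs $\mathcal{A}_1$ and $\mathcal{B}_1$ of size $2^{\mathcal{O}(\size{\mathcal{A}})}$ and $2^{\mathcal{O}(\size{\mathcal{B}})}$ with the same languages. B\"uchi is a special case of Streett, so the nondeterministic procedure from the proof of \Cref{thm:inclusionStreett} applies to $\mathcal{A}_1,\mathcal{B}_1$ and runs in space polynomial in $\size{\mathcal{A}_1}+\size{\mathcal{B}_1}$, i.e.\ exponential space. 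The translation need not be materialised: each component $\mathcal{A}_i$ in \Cref{lem:HOA-EL_toHOA-B} is indexed by a DNF disjunct, which has polynomial size, so a state of the translated automaton can be stored and its successors enumerated on the fly. Even without this, writing the translated automata down explicitly still fits in \EXPSPACE. Savitch's theorem then removes the nondeterminism while staying in \EXPSPACE.

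\emph{Lower bound.} This is the main step. I would reduce from a known \EXPSPACE-hard inclusion problem. A natural first candidate is universality or inclusion for automata whose acceptance is given succinctly. Alternatively, I would encode directly the computation of a $2^n$-space-bounded Turing machine $M$ on input $x$, as in the classical \EXPSPACE-hardness of LTL-style or succinct-universality problems, by letting $\mathcal{B}$ accept every word that is \emph{not} an accepting computation of $M$ on $x$, and taking $\mathcal{A}$ to accept all words. Then $\lan{A}\subseteq\lan{B}$ iff $M$ rejects $x$. The two succinct resources available are the Boolean guards and the Emerson-Lei formula. Guards alone cannot push the problem beyond \PSPACE\ (\Cref{thm:inclusionStreett}), so the hardness must come from the acceptance formula. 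Concretely, I would encode tape positions in binary using $n$ colors (or $n$ atomic propositions checked by guards), and use an EL formula to detect an error between a cell and the cell $2^n$ steps later. The difficulty is that an EL condition only inspects $\Inf$ sets, so local comparisons cannot be checked directly, and a direct encoding is awkward.

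The cleaner route I would pursue is therefore a reduction from universality of Emerson-Lei automata in the classical explicit-alphabet setting, whose \EXPSPACE-hardness I would cite from the literature on succinct acceptance conditions (e.g.\ work on Emerson-Lei or Boolean-acceptance automata, or on inclusion for automata with succinct Muller conditions). Any classical automaton over an alphabet $\Sigma$ becomes an HOA by binary-encoding letters into $\lceil\log|\Sigma|\rceil$ propositions, with each letter turned into a conjunction of literals. A fixed guard then excludes invalid encodings, so $\mathcal{A}$ is a one-state HOA accepting exactly the valid encodings. All of this is polynomial, so hardness transfers directly to HOA inclusion.

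If no such citation is acceptable, I would fall back to the Turing-machine encoding above. In that encoding, the nondeterministic automaton $\mathcal{B}$ guesses the error position and uses $n$ fresh colors together with a conjunction of $\mathsf{Inf}/\mathsf{Fin}$ atoms to compare binary counters infinitely often over a repeated witness. Making this encoding sound is the step I expect to be the main obstacle.
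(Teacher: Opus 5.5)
Your proposal is correct and follows the paper's proof. For the upper bound, you translate both automata to B\"uchi HOAs via \Cref{lem:HOA-EL_toHOA-B} and run the polynomial-space procedure of \Cref{thm:inclusionStreett} on the exponentially larger result. For the lower bound, you cite the \EXPSPACE-hardness of universality for classical Emerson--Lei automata, which is Proposition~4.6 of Safra and Vardi~\cite{DBLP:conf/stoc/SafraV89}, so your Turing-machine fallback is unnecessary; your binary encoding of the explicit alphabet into propositions, with $\mathcal{A}$ accepting exactly the valid encodings, correctly spells out the transfer to HOA, which the paper leaves implicit.
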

\begin{proof}
    The lower bound follows from \cite[Proposition 4.6]{DBLP:conf/stoc/SafraV89}, where it is shown that the universality problem for Emerson–Lei automata is \EXPSPACE-complete.
For the upper bound, given two HOA automata with Emerson-Lei acceptance, we convert them into HOA automata with Büchi acceptance using \Cref{lem:HOA-EL_toHOA-B}. This transformation causes an exponential blow-up in the number of states. Since language inclusion for HOA with Büchi acceptance is \PSPACE-complete (\Cref{thm:inclusionStreett}), the exponential increase in the state space yields an overall \EXPSPACE \ upper bound.
\end{proof}

\section{Two-Player Hanoi Omega-Games}\label{sec:hog}
A \emph{two-player Hanoi omega-game} (HOG) is a tuple 
\[{\cal G} = ({\cal A},\ap_\inp,\ap_\outp,Acc)\]
such that ${\cal A}=(Q,q_0,{\cal P},\delta, \cal C)$ is a complete\footnote{For every state $q$ and every valuation $v: \ap\rightarrow \{\mathsf{True},\mathsf{False}\}$, there exists a transition from $q$ on $v$.} deterministic HOA automaton over the atomic propositions $\ap = \ap_\inp \uplus \ap_\outp$ with index $d$, and $Acc\subseteq [d]^\omega$ is a called a winning condition.

Intuitively, a HOG is played in turns by two players, Player \textsf{In} and Player \textsf{Out}, who successively pick Boolean valuations of the atomic propositions in $\ap_\inp$ and $\ap_\outp$ respectively. Player \textsf{In} plays first, picking some valuation $v_1^\inp : \ap_\inp\rightarrow \{\mathsf{True},\mathsf{False}\}$, then Player \textsf{Out} picks some valuation $v_1^\outp : \ap_\outp\rightarrow \{\mathsf{True},\mathsf{False}\}$, and the game continues like that \emph{ad infinitum}, forming an infinite sequence $\pi = (v_1^\inp\cup v_1^\outp)(v_2^\inp\cup v_2^\outp)\dots\in (\val(\ap))^\omega$. Player \textsf{Out} wins if $\pi$ belongs to $\mathcal{L}_{Acc}({\cal A})$, otherwise Player \textsf{In} wins. The game is then won by Player \textsf{Out} if he wins whatever the valuations picked by Player \textsf{In}, in other words, if he has a winning strategy. 

Formally, a \emph{play} $\pi$ is an infinite sequence of valuations, i.e., $\pi\in (\val(\ap))^\omega$.  
A \emph{strategy} for Player \textsf{Out} is a mapping $\lambda_\outp : (\val(\ap))^* \cdot \val(\ap_\inp)\rightarrow \val(\ap_\outp)$, while a strategy for Player \textsf{In} is a mapping $\lambda_\inp : (\val(\ap))^*\rightarrow \val(\ap_\inp)$. We denote by $\Lambda_{\outp}$ and $\Lambda_{\inp}$ the set of strategies of Player \textsf{Out} and Player \textsf{In} respectively. The \emph{play induced by a pair of strategies} $(\lambda_\inp,\lambda_\outp)\in \Lambda_\inp\times \Lambda_\outp$ is the play
$
\textsf{play}(\lambda_\inp,\lambda_\outp) = (v_1^\inp\cup v_1^\outp)(v_2^\inp\cup v_2^\outp)\dots
$
such that for all $i\geq 1$, it holds that
$v_i^\inp = \lambda_\inp((v_1^\inp\cup v_1^\outp)\dots (v_{i-1}^\inp\cup v_{i-1}^\outp))$ and $v_i^\outp = \lambda_\outp((v_1^\inp\cup v_1^\outp)\dots (v_{i-1}^\inp\cup v_{i-1}^\outp)v_i^\inp)$.

The \emph{language of a strategy} $\lambda_\outp\in\Lambda_\outp$ is the set $\mathcal{L}(\lambda_\outp) = \{ \textsf{play}(\lambda_\inp,\lambda_\outp)\mid \lambda_\inp\in\Lambda_\inp\}$. The language of a strategy $\lambda_\inp\in\Lambda_\inp$ is defined in a similar way. A strategy $\lambda_\outp\in\Lambda_\outp$ is winning in $\mathcal{G}$ if all plays in $\mathcal{L}(\lambda_\outp)$ are winning, i.e. $\mathcal{L}(\lambda_\outp)\subseteq \mathcal{L}_{Acc}({\cal A})$. A strategy $\lambda_\inp\in\Lambda_\inp$ is winning in $\mathcal{G}$ if $\mathcal{L}(\lambda_\inp)\subseteq \overline{\mathcal{L}_{Acc}({\cal A})}$. Player \textsf{Out} (resp. Player \textsf{In}) wins an HOG ${\cal G}$ (with winning condition $Acc$) if there exists a winning strategy for Player \textsf{Out} (resp. Player \textsf{In}). A class of HOGs is \emph{determined} if for all HOG in the class, either Player \textsf{In} has a winning strategy or Player \textsf{Out} has a winning strategy.

\begin{proposition}\label{prop:HOG_determinacy}
    HOGs are determined for $\omega$-regular winning conditions\footnote{A winning condition $Acc$ is called $\omega$-regular if it
can, for instance, be expressed by some Emerson-Lei acceptance formula (See \Cref{sec:HOA}).}.
\end{proposition}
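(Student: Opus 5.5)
The plan is to reduce a HOG to a classical turn-based two-player game on a finite arena with an $\omega$-regular (Emerson--Lei) winning condition. Determinacy of such games is classical: Büchi--Landweber and Gurevich--Harrington, or Martin's Borel determinacy, since $\omega$-regular sets are Borel. We then transfer winning strategies back to the HOG.

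\textbf{The arena.} Given ${\cal G} = ({\cal A},\ap_\inp,\ap_\outp,Acc)$ with ${\cal A}=(Q,q_0,\ap,\delta,{\cal C})$, build the explicit arena as follows.
\begin{itemize}
  \item Player \textsf{In} vertices are $V_\inp = Q \times [d]$. A vertex $(q,c)$ records the current state and the colour of the last transition taken, with some fixed dummy colour at the start.
  \item Player \textsf{Out} vertices are $V_\outp = Q \times \val(\ap_\inp)$.
  \item From $(q,c)$, Player \textsf{In} moves to $(q,v^\inp)$ for any $v^\inp\in\val(\ap_\inp)$.
  \item From $(q,v^\inp)$, Player \textsf{Out} moves to $(q',c')$ for any $v^\outp$, where $(q,\varphi,q')$ is the unique transition with $v^\inp\cup v^\outp\models\varphi$ and $c'={\cal C}(q,\varphi,q')$.
\end{itemize}
Such a transition exists by completeness of ${\cal A}$ and is unique by determinism. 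The arena is finite (of exponential size, which is irrelevant here).

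\textbf{Vertex colouring.} Colour vertices in $V_\inp$ by their second component. Colour vertices in $V_\outp$ with a neutral treatment: either give them the colour of the preceding \textsf{In} vertex, or equivalently define the winning condition on the subsequence of even positions. Since duplicating each colour does not change the sets $\Inf$ and $\mathit{elem}$, the Emerson--Lei condition $\alpha$, shifted to drop the initial dummy colour, remains an $\omega$-regular winning condition on vertex colour sequences.

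\textbf{Correspondence of plays and strategies.} Plays from $(q_0,\cdot)$ are in bijection with plays $\pi\in\val(\ap)^\omega$ of the HOG. The colour sequence of the arena play equals, after dropping the dummy and deduplicating, ${\cal C}(\rho)$, where $\rho$ is the unique run of ${\cal A}$ on $\pi$. Because ${\cal A}$ is deterministic, $\pi\in\mathcal{L}_{Acc}({\cal A})$ iff ${\cal C}(\rho)\in Acc$. Histories in the arena are in bijection with HOG histories in $(\val(\ap))^*$ or $(\val(\ap))^*\val(\ap_\inp)$, since states are determined by the valuation history. So strategies $\lambda_\outp,\lambda_\inp$ correspond bijectively to arena strategies, and winning is preserved in both directions.

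\textbf{Conclusion and main obstacle.} By determinacy of finite-arena $\omega$-regular games, one player wins from the initial vertex, and hence wins the HOG. The main care point is the bookkeeping of colours: they sit on transitions, not vertices, and on alternate moves only. This is handled by storing the last colour in the \textsf{In} vertex and observing that the Emerson--Lei semantics depend only on $\Inf$, which is invariant under finite prefix changes and under stuttering. If one prefers a more robust route, one can simply invoke Martin's Borel determinacy directly on $\val(\ap)^\omega$ with the Borel set $\mathcal{L}_{Acc}({\cal A})$.
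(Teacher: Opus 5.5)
Your proposal is correct and follows the paper's route: make the Boolean guards explicit to obtain a finite classical game with an $\omega$-regular condition, invoke classical (Borel or B\"uchi--Landweber) determinacy, and transfer the winning strategy back to the HOG. One small correction: arena histories are not in bijection with HOG histories, because a vertex $(q',c')$ forgets which $v^\outp$ was played, and a vertex $(q,v^\inp)$ cannot record the previous colour, so you should use your even-position option for the colouring. This does not break the argument, since determinacy only needs the arena-to-HOG direction, where Player \textsf{Out} simply plays any $v^\outp$ realising the edge chosen by her arena strategy.
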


\begin{proof}
This follows from (Borel) determinacy of classical games \cite{martin1975borel} because
one can make the transitions of HOGs explicit and construct a classical $\omega$-regular games, with strategies being preserved in both directions. For the construction, the transitions are made explicit by replacing every transition labeled with a Boolean formula $\varphi$ by all transitions labeled with valuations satisfying $\varphi$. 
\end{proof}

For a complexity class ${\cal C}$, we say that an HOG ${\cal G}$ together with a winning condition $Acc$ is in ${\cal C}$ (resp. is ${\cal C}$-complete) if the problem of checking whether Player \textsf{Out} has an $Acc$-winning strategy in ${\cal G}$ is in ${\cal C}$ (resp. is ${\cal C}$-complete).

We show that Hanoi omega-games with Streett, parity, B\"uchi, co-B\"uchi, reachability and safety winning conditions are $\boldsymbol{\Pi_2}$-complete.  
We start with the lower bound, which already holds for HOG with reachability and safety winning conditions.

\begin{lemma}\label{lemma:HOGpi_2hard}
Reachability and safety HOGs (and hence Rabin/Streett/ parity/ B\"uchi/ co-B\"uchi HOGs) are $\boldsymbol{\Pi_2}$-hard.
\end{lemma}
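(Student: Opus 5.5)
We reduce from $\textsc{QSAT}^\forall_2$, which is $\boldsymbol{\Pi_2}$-complete: given $\Phi = \forall \overline{x}\,\exists \overline{y}\;\phi(\overline{x},\overline{y})$ with $\phi$ quantifier-free, we build in polynomial time a reachability HOG and a safety HOG that Player \textsf{Out} wins iff $\Phi$ is true. The input propositions are $\ap_\inp = \overline{x}$ and the output propositions are $\ap_\outp = \overline{y}$. Player \textsf{In} picks a valuation of $\overline{x}$ first, and Player \textsf{Out} then responds with a valuation of $\overline{y}$. This matches the $\forall\exists$ quantifier prefix exactly.

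The arena reuses the gadget from the proof of \Cref{thm:non-emptiness}, made complete and deterministic. There is an initial state $q_1$ and a sink $q_2$. The transition $q_1 \xrightarrow{\phi \mid 2} q_2$ leads to the sink, and the transition $q_1 \xrightarrow{\neg\phi \mid 1} q_{\bot}$ goes to a second sink $q_\bot$. The sinks carry the self-loops $q_2 \xrightarrow{\mathsf{True}\mid 2} q_2$ and $q_\bot \xrightarrow{\mathsf{True}\mid 1} q_\bot$. Both $q_\bot$ and the guard on $q_1$ are chosen so that the play is decided in the first round. The guards $\phi$ and $\neg\phi$ are mutually exclusive and jointly exhaustive, so the automaton is deterministic and complete, and its size is linear in $\size{\phi}$. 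For reachability we take $R=\{2\}$; for safety we take $S=\{2\}$.

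For correctness, first suppose $\Phi$ is true. Then there is a Skolem function $f$ with $\phi(\overline{x}, f(\overline{x}))$ for every valuation of $\overline{x}$. The strategy that answers the first input $v_1^\inp$ with $f(v_1^\inp)$, and plays arbitrarily afterwards, forces the transition into $q_2$. Hence every play sees only color $2$, which wins both the reachability and the safety condition. Conversely, suppose $\Phi$ is false. Then some valuation $v$ of $\overline{x}$ has no $\overline{y}$ satisfying $\phi$. If Player \textsf{In} plays $v$ first, every response of \textsf{Out} takes $\neg\phi$ into $q_\bot$, and from then on only color $1$ is seen. So color $2$ is never seen, violating reachability, and color $1$ is seen, violating safety. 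By \Cref{prop:HOG_determinacy}, or simply because this strategy of \textsf{In} refutes every strategy of \textsf{Out}, Player \textsf{Out} has no winning strategy.

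To transfer hardness to the other conditions, we use that a single-round game of this shape is captured by the Büchi condition $\mathsf{Inf}(\{2\})$, the co-Büchi condition $\mathsf{Fin}(\{1\})$, and parity with max-even. Each of these is a special case of Rabin and Streett conditions on the same arena. Therefore all the listed classes are $\boldsymbol{\Pi_2}$-hard.

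The main subtlety is ensuring that the information order in HOG strategies ($\lambda_\outp$ sees $v_i^\inp$ before choosing $v_i^\outp$) matches $\forall\exists$ rather than $\exists\forall$, and that the arena is genuinely complete and deterministic. Both hold by the construction above.
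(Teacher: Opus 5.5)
Your proposal is correct and is essentially the paper's own proof. You use the same reduction from $\textsc{QSAT}^\forall_2$: $\ap_\inp=\overline{x}$, $\ap_\outp=\overline{y}$, and an initial state with a $\phi$-guarded colour-$2$ edge to an accepting sink and a $\neg\phi$-guarded colour-$1$ edge to a rejecting sink, with $R=S=\{2\}$. Beyond the paper, you also verify that the arena is complete and deterministic, and you spell out that the same arena with $\mathsf{Inf}(\{2\})$, $\mathsf{Fin}(\{1\})$ or max-even parity covers the remaining conditions, which the paper leaves implicit.
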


\begin{proof}
    We reduce from the $\boldsymbol{\Pi_2}$-complete problem $\textsc{QSAT}^\forall_2$.
   Given a formula of the form $\forall \overline{x}\exists \overline{y}\varphi$ where $\varphi$ is quantifier-free, we construct an HOG $G_\varphi = ({\cal A}, \overline{x},\overline{y}, Acc)$ where ${\cal A} = (\{q_0,q_1,q_2\}, q_0, {\cal P}, \delta, \cal C)$ is the HOA depicted below, and the proposition set $\mathcal{P}$ consists of all variables in \(\overline{x}\) and \(\overline{y}\).  The reachability winning condition $Acc$ is given by the set $R=\{2\}$, i.e, it requires that color $2$ occurs at least once during the run. Player \textsf{Out} wins for the reachability winning condition iff it can always force the play to eventually reach \(q_2\), regardless of Player~\textsf{In}'s choice of $\overline{x}$.  This occurs exactly when $\forall \overline{x}\exists \overline{y}\varphi$ is true. Hence, Player \textsf{Out} wins iff $\forall \overline{x}\exists \overline{y}\varphi$ is satisfiable. Thus,  reachability HOGs are $\boldsymbol{\Pi_2}$-hard.
   \begin{center}
    \begin{tikzpicture}[shorten >=1pt, auto, node distance=2.5cm, on grid, >=stealth, scale=0.8, every node/.style={scale=0.8}]

  % States
  \node[state, initial, initial above, initial text=] (q0) {$q_0$};
  \node[state, right=of q0] (q1) {$q_2$};
  \node[state, left of=q0] (q2) {$q_1$};

  \path[->]
    (q0) edge[left] node[above] {$\varphi \mid \tikz[baseline]{\node[circle, draw=blue,fill=blue!20, inner sep=1pt]{2};}$} (q1)
    (q0) edge[right] node[above] {$\lnot \varphi \mid \tikz[baseline]{\node[circle, draw=red,fill=red!20, inner sep=1pt]{1};}$} (q2)
    (q1) edge[loop above] node {$\mathsf{True}  \mid \tikz[baseline]{\node[circle, draw=blue,fill=blue!20, inner sep=1pt]{2};}$} ()
    (q2) edge[loop above] node {$\mathsf{True} \mid \tikz[baseline]{\node[circle, draw=red,fill=red!20, inner sep=1pt]{1};}$} ();
\end{tikzpicture}
\end{center}

Similarly, if $\cal A$ is equipped with the safety winning condition given by $S = \{2\}$, so that only transitions labeled with color 2 are allowed, Player \textsf{Out} wins iff it can always take the transition leading to \(q_2\). This is possible exactly when
\(\forall \overline{x}\, \exists \overline{y}\, \varphi\) is true. Therefore, safety HOGs are also $\boldsymbol{\Pi_2}$-hard.
\end{proof}

\subparagraph*{HOG to Classical Game.} Towards showing the upper bound for solving HOGs with $\omega$-regular winning conditions, we define a classical game $P_{\cal G}$ for a given HOG ${\cal G} = ({\cal A},\ap_\inp,\ap_\outp, Acc)$. WLOG, assume that ${\cal A} = (Q, q_0,\ap_\inp \cup \ap_\outp, \delta, \cal C)$ is a state-based HOA with index $d$. Since $\mathcal{A}$ is deterministic, for any valuation $v \in \val(\ap_\inp \cup \ap_\outp)$, we write $\delta(q,v)$ to denote the (unique) state $p$ such that $\delta(q,\varphi)=p$ and $v \models \varphi$ (if such $p$ exists).  For a state $q \in Q$, a subset $S \subseteq Q$ is called \emph{$q$-good} if there exists a valuation $v_\inp \in \val(\ap_\inp)$ such that for all $v_\outp \in \val(\ap_\outp)$, we have $\delta(q, v_\inp \cup v_\outp) \in S$. Intuitively, $S$ is $q$-good if Player $\textsf{In}$ can choose a valuation guaranteeing that the next state always lands in $S$, no matter what Player $\textsf{Out}$ does.  

    We now construct the classical turn-based game $P_{\cal G}$ as follows: \[P_{\cal G} = (V_\inp \cup V_\outp, V_\inp=Q, V_\outp=Q \times 2^Q,E,C, \mathcal{C}', Acc')\]  where the edge relation $E$ contains  i) an edge from $q \in V_\inp$ to $(q,S) \in V_\outp$ if $S$ is $q$-good ii) an edge from $(q,S) \in V_\outp$ to $p \in V_\inp$ if $p \in S$. The set of colors $C = [d] \cup \{0\}$ where \(0 \notin [d]\) is a fresh color. The colouring $\mathcal{C}' : V_\inp \cup V_\outp \to [d] \cup \{0\}$ is given by ${\cal C'}(q)= {\cal C}(q)$ for $q \in V_\inp$ and ${\cal C'}(q,S) = 0$ for $(q,S) \in V_\outp$.
     The winning condition $Acc'$ for $P_{\cal G}$ is defined as \[Acc' = \{ c_1 0 c_2 0 \ldots \mid c_1 c_2 \ldots \in Acc\}.\]
    The HOA given in \Cref{fig:HOG} has a state-based coloring function and hence, can be viewed as a state-based HOA. The game involving this HOA is given by ${\cal G} = ({\cal A}, \ap_{\inp} = \{r\}, \ap_{\outp} = \{g\}, Acc = \textsf{Inf}(\{1\}))$ and the classical game $P_{\cal G}$ is given below. In this instance, one can verify that, for example, the sets $\{q_1, q_2\}$ and $\{q_2, q_3\}$ are $q_1$-good, witnessed by the $\ap_{\inp}$ valuations $r \mapsto \textsf{False}$ and $r \mapsto \textsf{True}$, respectively.
     \begin{center}
         \scalebox{0.75}{
         \begin{tikzpicture}[
    node distance=3.5cm and 2cm,
    every node/.style={font=\small},
    circ/.style={circle, draw, minimum size=8mm},
    rect/.style={rectangle, draw, minimum width=12mm, minimum height=6mm}
]

% Central rectangle
\node[rect,minimum size=2mm, inner sep=5pt] (b) {$(q_1, \{q_1,q_2\})\;\tikz[baseline=(n.base)]{
\node[circle, draw=green, fill=green!20, inner sep=0.5pt, font=\scriptsize] (n) {0};
}$};

% Vertical rectangles
\node[rect, above=0.7cm of b,minimum size=2mm, inner sep=5pt] (c) {$(q_2, \{q_1,q_4\})\;\tikz[baseline=(n.base)]{
\node[circle, draw=green, fill=green!20, inner sep=0.5pt, font=\scriptsize] (n) {0};
}$};
\node[rect, below=0.7cm of b,minimum size=2mm, inner sep=5pt] (a) {$(q_1, \{q_2,q_3\})\;\tikz[baseline=(n.base)]{
\node[circle, draw=green, fill=green!20, inner sep=0.5pt, font=\scriptsize] (n) {0};
}$};

% Horizontal rectangles
\node[rect, right of=b,minimum size=2mm, inner sep=5pt] (d) {$(q_2, \{q_3, q_4\})\;\tikz[baseline=(n.base)]{
\node[circle, draw=green, fill=green!20, inner sep=0.5pt, font=\scriptsize] (n) {0};
}$};
\node[rect, right of=d,minimum size=2mm, inner sep=5pt] (f) {$(q_4, \{q_4\})\;\tikz[baseline=(n.base)]{
\node[circle, draw=green, fill=green!20, inner sep=0.5pt, font=\scriptsize] (n) {0};
}$};

% Circles
\node[circ, right of=c,minimum size=2mm, inner sep=1pt] (z) {$q_2\;\tikz[baseline=(n.base)]{
\node[circle, draw=blue, fill=blue!20, inner sep=0.5pt, font=\scriptsize] (n) {1};
}$};
\node[circ, right of=z,minimum size=2mm, inner sep=1pt] (w) {$q_4\;\tikz[baseline=(n.base)]{
\node[circle, draw=red, fill=red!20, inner sep=0.5pt, font=\scriptsize] (n) {1};
}$};
\node[circ, right of=a,minimum size=2mm, inner sep=1pt] (y) {$q_3\;\tikz[baseline=(n.base)]{
\node[circle, draw=red, fill=red!20, inner sep=0.5pt, font=\scriptsize] (n) {2};
}$};
\node[state,initial,initial text=, left=1cm of b,minimum size=2mm, inner sep=1pt] (x) {$q_1\;\tikz[baseline=(n.base)]{
\node[circle, draw=blue, fill=blue!20, inner sep=0.5pt, font=\scriptsize] (n) {1};
}$};

\node[rect, right of=y,minimum size=2mm, inner sep=5pt] (e) {$(q_3, \{q_2,q_3\})\;\tikz[baseline=(n.base)]{
\node[circle, draw=green, fill=green!20, inner sep=0.5pt, font=\scriptsize] (n) {0};
}$};

%q_1 choices
\draw[->] (x) edge[bend left=30] node[above=0.2cm] {%$\lnot r$
} (b);
\draw[->] (c) edge[bend right] node[above=0.2cm] {%$\lnot g$
} (x);
\draw[->] (c) edge[bend left=15] node[above=0.2cm] {%$\lnot g$
} (w);
\draw[->] (b) edge[] node[above] {%$g$
} (z);
\draw[->] (x) edge[] node[below=0.2cm] {%$r$
} (a);
\draw[->] (a) edge[] node[above=0.2cm] {%$g$
} (z);
\draw[->] (a) edge[] node[above] {%$\lnot g$
} (y);

%q_2 choices
\draw[->] (z) edge[] node[above=0.2cm] {} (c);
\draw[->] (z) edge[] node[right] {} (d);
\draw[->] (b) edge[] node[above] {} (x);
\draw[->] (d) edge[] node[right] {} (y);
\draw[->] (d) edge[] node[right] {} (w);
\draw[->] (w) edge[bend right] node[right] {} (f);
\draw[->] (f) edge[bend right] node[right] {} (w);

%q_3 choices
\draw[->] (y) edge[] node[above=0.2cm] {} (e);
\draw[->] (e) edge[bend right] node[below=0.2cm] {} (y);
\draw[->] (e) edge[] node[above=0.2cm] {} (z);
\end{tikzpicture}
}
     \end{center}
   
The next result states that the reduction from $\mathcal{G}$ to $P_\mathcal{G}$ is correct. 

     \begin{proposition}\label{claim:GtoP_G}
         Given a two-player Hanoi omega-game \(\mathcal{G}\) with an arbitrary winning condition \(Acc \subseteq [d]^\omega\), the following holds:
         \begin{enumerate}
               \item Player \textsf{In} has an $Acc$-winning strategy in \(\mathcal{G}\) iff he has an $Acc'$-winning strategy in \(P_{\cal G}\).
    \item Player~\textsf{Out} has an $Acc$-winning strategy in \(\mathcal{G}\) iff she has an $Acc'$-winning strategy in \(P_{\cal G}\).
         \end{enumerate}
     \end{proposition}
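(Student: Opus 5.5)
We prove both items by translating strategies between $\mathcal{G}$ and $P_{\cal G}$, relying on the determinism and completeness of ${\cal A}$. We do not use determinacy, so the argument works for an arbitrary $Acc \subseteq [d]^\omega$. As in the construction of $P_{\cal G}$, we assume ${\cal A}$ is state-based (\Cref{lem:to_state_based}). For $q \in Q$ and $v_\inp \in \val(\ap_\inp)$, define $\mathrm{Succ}(q,v_\inp) = \{\delta(q, v_\inp \cup v_\outp) \mid v_\outp \in \val(\ap_\outp)\}$. By completeness and determinism, $\mathrm{Succ}(q,v_\inp)$ is nonempty and $q$-good. Conversely, every $q$-good $S$ has a witness $v_\inp$ with $\mathrm{Succ}(q,v_\inp) \subseteq S$; we fix one such witness $w(q,S)$ by a choice function. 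We also fix, for each $p \in \mathrm{Succ}(q,v_\inp)$, some $v_\outp = u(q,v_\inp,p)$ with $\delta(q,v_\inp\cup v_\outp)=p$.

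\textbf{The colour invariant.} A play $\pi = v_1 v_2 \dots$ of $\mathcal{G}$ induces, by determinism, a unique run $q_0 q_1 q_2 \dots$. Because ${\cal A}$ is state-based, the colour sequence of this run is ${\cal C}(q_0){\cal C}(q_1)\dots$. We will say that a play $q_0 (q_0,S_1) q_1 (q_1,S_2) q_2 \dots$ of $P_{\cal G}$ \emph{matches} $\pi$ if it has the same state sequence; its colour sequence is then ${\cal C}(q_0)0{\cal C}(q_1)0\dots$. Hence $\pi \in \mathcal{L}_{Acc}({\cal A})$ if and only if the matching play is in $Acc'$. So it suffices to show that each translated strategy produces only plays that match plays consistent with the original strategy.

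\textbf{Player \textsf{Out}, from $P_{\cal G}$ to $\mathcal{G}$.} Let $\tau$ be winning for Out in $P_{\cal G}$. In $\mathcal{G}$, after a history whose run ends in $q$ and an input $v_\inp$, Out sets $S=\mathrm{Succ}(q,v_\inp)$, which is a legal move of In in $P_{\cal G}$. She reads $p=\tau(\dots q\,(q,S))$ on the simulated $P_{\cal G}$ history (built from the state sequence together with these sets $S$). Since $p \in S$, she can play $u(q,v_\inp,p)$.

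\textbf{Player \textsf{Out}, from $\mathcal{G}$ to $P_{\cal G}$.} Let $\lambda_\outp$ be winning for Out in $\mathcal{G}$. In $P_{\cal G}$, while Out maintains a simulated $\mathcal{G}$-history $h$, she answers $(q,S)$ as follows: take $v_\inp = w(q,S)$, let $v_\outp = \lambda_\outp(h v_\inp)$, and move to $p=\delta(q,v_\inp\cup v_\outp)$. This is legal because $p \in \mathrm{Succ}(q,v_\inp) \subseteq S$. She then extends $h$ with $v_\inp\cup v_\outp$.

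\textbf{Player \textsf{In}.} The two translations for In are symmetric. From a winning $\lambda_\inp$ in $\mathcal{G}$, In plays $(q,\mathrm{Succ}(q,\lambda_\inp(h)))$ in $P_{\cal G}$. When Out answers $p$, In extends $h$ by $\lambda_\inp(h)\cup u(q,\lambda_\inp(h),p)$. From a winning $\tau$ in $P_{\cal G}$ that picks $(q,S)$, In plays $w(q,S)$ in $\mathcal{G}$. Whatever $v_\outp$ Out answers, the successor lies in $S$, so it is a legal Out move in the simulated $P_{\cal G}$ history.

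In all four translations, an induction on the length of the play shows that the simulated history and the real one have the same state sequence. Each resulting play is therefore matched by a play consistent with the original winning strategy, and the colour invariant transfers winning in each direction.

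\textbf{Main obstacle.} The delicate point is information. Strategies in $P_{\cal G}$ see only states and sets, whereas strategies in $\mathcal{G}$ see full valuations. So when translating towards $P_{\cal G}$, the translating player must reconstruct a consistent valuation history, which is why the fixed choice functions $w$ and $u$ are needed. One must also check that In's set $S$ in $P_{\cal G}$ may be strictly larger than $\mathrm{Succ}(q,w(q,S))$; this is harmless, because Out's translated move always lands inside $\mathrm{Succ}(q,w(q,S)) \subseteq S$.
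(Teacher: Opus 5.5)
Your proof is correct and takes essentially the same route as the paper: in each direction you translate a winning strategy by maintaining a simulated history with the same state sequence, and then use state-based colouring together with the definition of $Acc'$ to transfer winning. Your version is in fact somewhat more careful than the paper's, which has three loose points you avoid. First, it writes out only the \textsf{In} case. Second, it asserts that a $q$-good $S$ \emph{equals} $\{\delta(q,v^\inp\cup v^\outp)\mid v^\outp\}$ for some $v^\inp$, whereas only inclusion holds, as you point out. Third, it leaves implicit the choice of valuations, which your fixed functions $w$ and $u$ make explicit so that the translated strategies are well-defined functions of histories.
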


     \begin{proof} 
     We prove Item~1. Item~2 follows symmetrically.

        A history in $\mathcal{G}$ is a finite sequence $h = v_1^\inp v_1^\outp \dots v_k^\inp v_k^\outp$ for some $k \in \mathbb{N}$. The history $h$ is \emph{compatible} with a strategy $\lambda_\inp$ of Player~\textsf{In} in $\cal G$ if for all $i \geq 1$,  
\[
v_i^\inp = \lambda_\inp\big((v_1^\inp \cup v_1^\outp)\dots (v_{i-1}^\inp \cup v_{i-1}^\outp)\big) \quad \text{and} \quad v_i^\outp \in \val(\ap_{\outp}).
\]

A history in $P_{\mathcal{G}}$ is a finite sequence $h' = q_0 (q_0,S_0) q_1 (q_1,S_1) \dots q_k$ for some $k \geq 0$
with $S_i$ is $q_i$-good and $q_{i+1} \in S_i$ for all $i \in \{0,\ldots,k-1\}$. The history $h'$ is compatible with a strategy $\lambda'_\inp$ of Player~\textsf{In} in $P_{\cal G}$ if for all $i \geq 1$,
\[
(q_i,S_i) = \lambda'_\inp\big(q_0 (q_0,S_0) \dots q_{i-1}(q_{i-1},S_{i-1})q_i\big) \quad \text{and} \quad q_{i+1} \in S_i.
\]
\medskip
\noindent\textbf{From $\mathcal{G}$ to $P_{\mathcal{G}}$.} Let $\sigma_\inp$ be a strategy for Player~\textsf{In} in $\mathcal{G}$. We construct a strategy $\sigma'_\inp$ in $P_{\mathcal{G}}$ inductively on histories, maintaining the following invariant: for any history
\[
h' = q_0 (q_0,S_0) q_1 (q_1,S_1) \dots q_{k-1} (q_{k-1},S_{k-1}) q_k
\]
in $P_{\mathcal{G}}$ that is compatible with $\sigma'_\inp$, there exists a history
\[
h = v_1^\inp v_1^\outp \dots v_k^\inp v_k^\outp
\]
in ${\mathcal{G}}$ that is compatible with $\sigma_\inp$ such that for all $i \in \{0,\dots,k-1\}$, $q_{i+1} = \delta(q_i, v_{i+1}^\inp \cup v_{i+1}^\outp)$, and the color sequence of $h$ is obtained from that of $h'$ by removing $0$ between consecutive colors.

The base case is when the history in $P_{\mathcal{G}}$ is $h' = q_0$, which is the initial state, and the corresponding history in $\cal G$ compatible with $\sigma_\inp$ is $h=\varepsilon$. We now define the strategy $\sigma'_\inp$ on $h'$. Let $v^\inp = \sigma_\inp(\varepsilon)$ and $S = \{ \delta(q_0, v^\inp \cup v^\outp) \mid v^\outp \in \val(\ap_\outp)\}$. We set $\sigma'_\inp(q_0) = (q_0,S)$.
Since $S$ is $q_0$-good, this is a valid move.

For the inductive step, let $h' = q_0 (q_0,S_0)\dots q_{k-1} (q_{k-1},S_{k-1}) q_k$ be a history in $P_{\mathcal{G}}$ compatible with $\sigma'_\inp$ satisfying the invariant, and let $h = v_1^\inp v_1^\outp \dots v_k^\inp v_k^\outp$ be the corresponding history in $\mathcal{G}$. Define
\[
v^\inp = \sigma_\inp(h),
\qquad
S = \{ \delta(q_k, v^\inp \cup v^\outp) \mid v^\outp \in \val(\ap_\outp)\}.
\]
We set $\sigma'_\inp(h') = (q_k,S)$. Since $S$ is $q_k$-good, this is a valid move in $P_{\mathcal{G}}$. We show that the invariant is preserved. Let Player~\textsf{Out} choose some successor $q_{k+1} \in S$. By definition of $S$, there exists a valuation $v^\outp$ such that $q_{k+1} = \delta(q_k, v^\inp \cup v^\outp)$. Appending $h$ with $v^\inp v^\outp$ yields a longer history in $\mathcal{G}$ compatible with $\sigma_\inp$, and the corresponding sequence in $P_{\mathcal{G}}$ is extended by $(q_k,S) q_{k+1}$. Hence the invariant is preserved.

Let $\pi'$ be any play in $P_{\mathcal{G}}$ compatible with $\sigma'_\inp$. By the invariant, every finite prefix of $\pi'$ corresponds to a finite history in $\mathcal{G}$ compatible with $\sigma_\inp$. Hence, $\pi'$ induces a play $\pi$ in $\mathcal{G}$ compatible with $\sigma_\inp$ and the color sequence of $\pi'$ is obtained from that of $\pi$ by inserting $0$ between consecutive colors. Since $\sigma_\inp$ is winning, $\pi$ satisfies $Acc$. Hence $\pi'$ satisfies $Acc'$. Therefore, $\sigma'_\inp$ is winning.

\medskip
\noindent\textbf{From $P_{\mathcal{G}}$ to $\mathcal{G}$.} Let $\sigma'_\inp$ be a strategy for Player~\textsf{In} in $P_{\mathcal{G}}$. We construct a strategy $\sigma_\inp$ in $\mathcal{G}$ inductively on histories, maintaining the following invariant: for any history
\[
h = v_1^\inp v_1^\outp \dots v_k^\inp v_k^\outp
\]
in $\mathcal{G}$ that is compatible with $\sigma_\inp$, there exists a history 
\[
h' = q_0 (q_0,S_0) q_1 (q_1,S_1) \dots q_{k-1} (q_{k-1},S_{k-1}) q_k
\]
in $P_{\mathcal{G}}$ compatible with $\sigma'_\inp$ such that:
\begin{itemize}
    \item for all $i \in \{0,\dots,k-1\}$, $q_{i+1} = \delta(q_i, v_{i+1}^\inp \cup v_{i+1}^\outp)$,
    \item for all $i \in \{0,\dots,k-1\}$, the set $S_i$ is $q_i$-good and $q_{i+1} \in S_i$.
    \item the color sequence of $h'$ is obtained from that of $h$ by inserting $0$ between consecutive colors.
\end{itemize}
Let $h = v_1^\inp v_1^\outp \dots v_k^\inp v_k^\outp$ be a history in ${\mathcal{G}}$ satisfying the invariant, and let  the corresponding history in $P_{\mathcal{G}}$ compatible with $\sigma'_\inp$ be $h'=q_0 (q_0,S_0) q_1 (q_1,S_1) \dots q_{k-1} (q_{k-1},S_{k-1}) q_k$. Define $(q,S) = \sigma'_\inp(h')$. Note that $q=q_k$ since $h'$ ends in $q_k$. Since $S$ is $q_k$-good, there exists $v^\inp$ such that $S = \{ \delta(q_k, v^\inp \cup v^\outp) \mid v^\outp \in \val(\ap_\outp)\}$. We set $\sigma_\inp(h) = v^\inp$. 
 Let Player~\textsf{Out} choose any valuation $v^\outp$, and let $q_{k+1} = \delta(q_k, v^\inp \cup v^\outp)$. By the choice of $v^\inp$, we have $q_{k+1} \in S$. Hence the sequence $q_0 (q_0,S_0) \dots q_k (q_k,S) q_{k+1}$
is a valid extension of $h'$ in $P_{\mathcal{G}}$, and the invariant continues to hold.

Let $\pi$ be any play in $\mathcal{G}$ compatible with $\sigma_\inp$. By the invariant, every finite prefix of $\pi$ corresponds to a finite history in $P_{\mathcal{G}}$ compatible with $\sigma'_\inp$. Hence, $\pi$ induces a play $\pi'$ in $P_{\mathcal{G}}$ that is compatible with $\sigma'_\inp$ and the color sequence of $\pi'$ is obtained from that of $\pi$ by inserting $0$ between consecutive colors. Since $\sigma'_\inp$ is winning, $\pi'$ satisfies $Acc'$. By the definition of $Acc'$, this implies that the sequence of colors induced by $\pi$ satisfies $Acc$. Therefore, $\pi$ is winning in $\mathcal{G}$, and $\sigma_\inp$ is a winning strategy.

     \end{proof}

Observe that the arena of \(P_{\cal G}\) is exponentially larger than that of \(\mathcal{G}\). Therefore to obtain upper bounds in the polynomial hierarchy, the game \(P_{\cal G}\) cannot be constructed explicitly. Instead, the main idea is to non-deterministically guess a strategy "of small size" for Player \textsf{In} or Player \textsf{Out} in \(P_{\cal G}\), and to check if it is winning. We start with Streett winning conditions.

\begin{lemma}\label{HOGStreetpi_2}
    Hanoi omega-games with Streett winning condition belongs\footnote{We recall that the considered decision problem is that of deciding if Player \textsf{Out} has a winning strategy.} to \(\boldsymbol{\Pi_2}\).
\end{lemma}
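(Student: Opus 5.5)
We show that the complement problem, ``Player \textsf{In} has a winning strategy'', lies in $\boldsymbol{\Sigma_2}=\NP^{\NP}$. Since HOGs are determined (\Cref{prop:HOG_determinacy}), this places ``Player \textsf{Out} wins'' in $\boldsymbol{\Pi_2}$.

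First, by \Cref{lem:to_state_based}, assume wlog that ${\cal A}$ is state-based; this costs only a quadratic blow-up. By \Cref{claim:GtoP_G}, Player \textsf{In} wins ${\cal G}$ iff he wins the classical game $P_{\cal G}$. In $P_{\cal G}$ the winning condition $Acc'$ is Streett, and the fresh color $0$ occurs in none of the Streett pairs. Hence Player \textsf{In}'s objective, the complement of $Acc'$, is a Rabin condition on a vertex-colored arena. By the classical positional determinacy of Rabin games for the Rabin player (Klarlund; Emerson), if Player \textsf{In} wins $P_{\cal G}$ then he has a \emph{positional} winning strategy. Such a strategy maps each $q\in V_\inp=Q$ to some $(q,S_q)$ with $S_q$ being $q$-good. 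Every $q$-good set that is actually used can be taken of the form $\{\delta(q,v^\inp_q\cup v^\outp)\mid v^\outp\in\val(\ap_\outp)\}$ for a witnessing valuation $v^\inp_q$. Shrinking $S_q$ to this set only removes options of Player \textsf{Out}, so the strategy stays winning. Therefore a positional winning strategy is described by one input valuation $v^\inp_q\in\val(\ap_\inp)$ per state: a certificate of size $|Q|\cdot|\ap_\inp|$.

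The $\boldsymbol{\Sigma_2}$ procedure existentially guesses the family $(v^\inp_q)_{q\in Q}$. It then verifies with a $\coNP$ check that every play consistent with this strategy violates the Streett condition. For the check, build the state-based HOA ${\cal A}_v$ over $\ap_\outp$: for each transition $(q,\varphi,p)$ of ${\cal A}$, it has the transition $(q,\varphi[v^\inp_q],p)$, where the input propositions are substituted by their guessed truth values, with the same color. This is a polynomial-time syntactic substitution. The plays of $P_{\cal G}$ consistent with the positional strategy correspond exactly to runs of ${\cal A}_v$ from $q_0$, up to inserting the color $0$. Hence the strategy is winning iff ${\cal L}_{Acc}({\cal A}_v)=\emptyset$ with $Acc$ the original Streett condition. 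By \Cref{thm:non-emptiness}, non-emptiness of Streett HOA is in \NP, so emptiness is in \coNP. This yields an $\exists$-guess followed by a \coNP\ check, i.e., membership in $\boldsymbol{\Sigma_2}$.

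The main obstacle is the positional determinacy step. We must check carefully that the complement of $Acc'$ in $P_{\cal G}$ is genuinely a Rabin condition on vertex colors, so that the classical theorem applies on the exponential arena $P_{\cal G}$. The padding color $0$ is the point to verify. Since $0$ belongs to no Streett pair, dualizing each pair $(\mathsf{Fin}(E_i)\lor\mathsf{Inf}(F_i))$ gives the Rabin pairs $(\mathsf{Inf}(E_i)\land\mathsf{Fin}(F_i))$, and the $0$-vertices are irrelevant to them. Positional strategies exist regardless of arena size, and we only need their succinct description through the valuations $v^\inp_q$, which the shrinking argument above provides.
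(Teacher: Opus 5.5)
Your proof is correct, and its skeleton is the paper's: pass to the complement via determinacy, move to $P_{\mathcal G}$ with \Cref{claim:GtoP_G}, use positional strategies of Player \textsf{In} (the Rabin side), and guess a polynomial-size description of one. Where you differ is the verification of the guess.

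\textbf{How the paper verifies.} The paper guesses both the valuations $v^\inp_q$ and the sets $S_q$. It then checks two things separately. First, in polynomial time, that the explicit induced graph on $Q\cup\{(q,S_q)\mid q\in Q\}$ has no reachable Streett-satisfying cycle. Second, that each $S_q$ is $q$-good, using one $\boldsymbol{\Pi_1}$ oracle query per state.

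\textbf{How you verify.} You shrink $S_q$ to the exact successor set, so the certificate is the valuations alone. You then fold both checks into a single \coNP\ question, emptiness of the substituted automaton $\mathcal{A}_v$, by reusing \Cref{thm:non-emptiness}.

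\textbf{What each buys.} Your route is shorter, uses a smaller certificate, and avoids spelling out a graph algorithm. The paper's sketch (``check the Streett condition on each SCC'') would in fact need the usual iterated SCC refinement, because a maximal SCC whose colour set violates a pair can still contain a cycle that satisfies it. The paper's decomposition, in turn, isolates $q$-goodness as a standalone $\exists\forall$ query. That is what carries over to symbolic games (\Cref{lem:StreettSymbolic}): there only the sets $S_q$ are guessed, and the input values, which range over an infinite domain, are quantified inside the oracle call instead of being guessed explicitly.

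\textbf{A caveat you share with the paper.} \Cref{lem:to_state_based} does not preserve determinism or completeness. Its construction has several initial states and guesses the colour of the next transition, so it does not by itself justify assuming a state-based deterministic arena, which $P_{\mathcal G}$ requires. For a deterministic $\mathcal{A}$ and a prefix-independent condition such as Streett, this is easily repaired. Let a state $(q,c)$ record the colour $c$ of the transition just taken. This keeps the automaton deterministic and complete and only delays the colour sequence by one step. In your argument this is needed only to pass through $P_{\mathcal G}$, since the final emptiness check works on transition-coloured HOA directly.
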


\begin{proof}
    Let ${\cal G} = ({\cal A},\ap_\inp,\ap_\outp, Acc)$ be an HOG with Streett winning condition $Acc$ and let  ${\cal A} = (Q, q_0, \ap_\inp \cup \ap_\outp, \delta, \cal C)$ be the underlying HOA.
    By \Cref{prop:HOG_determinacy}, to show that checking whether Player \textsf{Out} has a winning strategy in $\cal G$ is in $\boldsymbol{\Pi_2}$, it suffices to show that checking whether Player \textsf{In} has a winning strategy in $\cal G$ is in \(\boldsymbol{\mathbf{co}\Pi_2} = \boldsymbol{\Sigma_2}\). By \Cref{claim:GtoP_G}, Player~\textsf{In} has a winning strategy in $\cal G$ iff Player \textsf{In} has a winning strategy in $P_{\cal G}$. Note that the arena of $P_{\cal G}$ is exponentially larger than that of $\cal G$, however we avoid constructing $P_{\cal G}$ and show that checking whether Player \textsf{In} has a winning strategy in $P_{\cal G}$ is in $\boldsymbol{\Sigma_2}$.
    
    In classical games, Streett winning condition admit memoryless
winning strategies for Player~{\sf In}.  
Thus, to decide whether Player~{\sf In} wins in \(P_{\cal G}\), it suffices to
guess a memoryless strategy for Player~{\sf In}.  
Such a strategy consists, for each state \(q\), of a valuation
\(v_q^\inp \in \val(\ap_\inp)\) and a subset \((q,S_q)\). Hence, the strategy can therefore be represented in polynomial space. Now, to verify that this guess is correct, two conditions must be checked:
\begin{enumerate}
\item \emph{The guessed strategy is winning for Player~{\sf In}.}
From the memoryless strategy, construct the induced graph, which is a
subgraph of \(P_{\cal G}\) of size \(2|{\cal G}|\).  
One must check that in this graph there is no infinite path that is winning for Player~{\sf Out} --- It suffices to check that no reachable strongly connected component satisfies the Streett condition for Player~\textsf{Out}. This can be done in polynomial time by computing SCCs and checking the Streett condition on each of them. 

\item \emph{Each set \(S_q\) in $(q,S_q)$ is \(q\)-good.}
For each \(q \in Q\), let $\textit{Formulas}(q) \;=\;
\{\, \varphi \mid \delta(q,\varphi)\ \text{is defined} \,\}$.
It suffices to call an oracle checking the satisfiability of the following
\(\Pi_1\)-formula:
\begin{equation}\label{eq:q-good}
%\bigwedge_{q\in Q}\ 
\forall \boldsymbol{x}_\outp\ 
\bigwedge_{\substack{\varphi \in \textit{Formulas}(q)\\
        \delta(q,\varphi)\notin S_q}}
\neg \varphi\bigl(v_q^\inp,\boldsymbol{x}_\outp\bigr).
\end{equation}
\end{enumerate}

Checking that Player~{\sf In} wins in $P_{\cal G}$ (hence $\cal G$) therefore lies in
\(\NP^{\boldsymbol{\Pi_1}} = \boldsymbol{\Sigma_2}\)
 since \(\NP^{\boldsymbol{\Pi_1}} = \NP^{\boldsymbol{\mathbf{co}\Pi_1}}
= \NP^{\boldsymbol{\Sigma_1}}\).
Since HOG are determined for $\omega$-regular winning conditions (See \Cref{prop:HOG_determinacy}), checking whether Player~{\sf Out} wins in $\cal G$ belongs to
\(\boldsymbol{\mathbf{co}\Sigma_2} = \boldsymbol{\Pi_2}\).
\end{proof}

As a consequence of \Cref{lemma:HOGpi_2hard} and \Cref{HOGStreetpi_2}, we get the following result.
\begin{theorem}
    The Hanoi omega-games with Streett/ parity/ B\"uchi / co-B\"uchi/ reachability/ safety winning conditions are $\boldsymbol{\Pi_2}$-complete.
\end{theorem}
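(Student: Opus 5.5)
The theorem is obtained by combining the lower bound of \Cref{lemma:HOGpi_2hard} with the upper bound of \Cref{HOGStreetpi_2}, so the plan is mostly bookkeeping about which condition reduces to which.

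\textbf{Hardness.} \Cref{lemma:HOGpi_2hard} already gives $\boldsymbol{\Pi_2}$-hardness for reachability and safety HOGs. The reduction from $\textsc{QSAT}^\forall_2$ builds a three-state arena in which $q_1$ and $q_2$ are absorbing, with colors $1$ and $2$ respectively. Hence the winning plays are exactly those that eventually loop forever on color $2$. The same arena therefore witnesses hardness for the remaining conditions:
\begin{itemize}
\item B\"uchi with $\mathsf{Inf}(\{2\})$;
\item co-B\"uchi with $\mathsf{Fin}(\{1\})$;
\item max-even parity with $d=2$;
\item Streett with the single pair $(\{1\},\{2\})$, i.e.\ $\mathsf{Fin}(\{1\})\lor\mathsf{Inf}(\{2\})$.
\end{itemize}
In each case Player \textsf{Out} wins iff she can force the $\varphi$-edge, that is, iff $\forall\overline{x}\exists\overline{y}\varphi$ holds.

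\textbf{Membership.} \Cref{HOGStreetpi_2} places Streett HOGs in $\boldsymbol{\Pi_2}$. I then argue that each other condition is a polynomial-size special case of Streett, without changing the arena:
\begin{itemize}
\item B\"uchi $\mathsf{Inf}(B)$ is the single Streett pair $(\{1,\dots,d\},B)$, since $\mathsf{Fin}([d])$ is false on every infinite sequence.
\item co-B\"uchi $\mathsf{Fin}(B)$ is the single pair $(B,\emptyset)$.
\item Max-even parity is the conjunction, over odd $c$, of the Streett pairs $(\{c\},\{c+1,\dots,d\})$. This uses at most $d\le|\Delta|$ pairs, so it is polynomial in size.
\end{itemize}
The same argument also shows hardness for Streett once hardness for these conditions is established.

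\textbf{Reachability and safety.} These are not literally EL conditions over the same arena, so I would use the transformations from \Cref{sec:HOA}, with one caveat. The transformation for reachability adds nondeterminism, which is illegal in a HOG because the underlying automaton must be deterministic. For games I would instead use the deterministic version: redirect every transition of a color in $R$ (resp.\ outside $S$) to a fresh sink with a self-loop $\mathsf{True}$ of a fresh winning color (resp.\ losing color), keeping the guard unchanged. This preserves determinism and completeness, is linear in size, and preserves winning strategies in both directions, since plays coincide up to the first visit of the sink. The resulting condition is B\"uchi (resp.\ co-B\"uchi), which falls under the previous case.

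\textbf{Main obstacle.} The only delicate step is this determinism-preserving reduction from reachability/safety to B\"uchi/co-B\"uchi. The rest is direct instantiation of the two lemmas.
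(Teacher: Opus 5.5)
Your proposal is correct and takes the paper's route: the theorem is \Cref{lemma:HOGpi_2hard} combined with \Cref{HOGStreetpi_2}, with the other conditions treated as polynomial-size special cases of Streett, and your explicit Streett encodings of B\"uchi, co-B\"uchi and max-even parity, as well as your reuse of the three-state arena, all check out. Your determinism-preserving redirection of $R$-colored (resp.\ non-$S$-colored) transitions to a fresh sink is a worthwhile refinement, since the reachability-to-B\"uchi transformation of \Cref{sec:HOA} adds transitions and so does not by itself give a legal deterministic HOG, a point the paper leaves implicit.
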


In the case of HOGs with Rabin winning conditions, although the problem is $\boldsymbol{\Pi_2}$-hard (See \Cref{lemma:HOGpi_2hard}), the proof of \Cref{HOGStreetpi_2} for Streett conditions does not extend since in classical Rabin games, memoryless winning strategies need not exist for Player~\textsf{In}. Instead, memoryless strategies exist for Player~\textsf{Out}. However, the subgraph induced by a memoryless strategy in $P_{\mathcal{G}}$ is still of exponential size, because positions of Player \textsf{Out} are pairs in $Q\times 2^Q$. However, we show that if Player \textsf{Out} wins, she can win with a "less-than-memoryless" strategy which can be represented using  polynomial space only. This allows us to 
show a $\boldsymbol{\Sigma_3}$ upper bound for HOGs with Rabin winning conditions.

\begin{lemma}\label{HOGRabinsigma_3}
    Hanoi omega-games with Rabin winning condition belongs to  \(\boldsymbol{\Sigma_3}\).
\end{lemma}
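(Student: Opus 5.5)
The plan is to exhibit a polynomial-size certificate for Player \textsf{Out} in $P_{\cal G}$ and to verify it with a $\boldsymbol{\Pi_2}$ check, so that the whole procedure has the form $\exists\,\text{certificate}\;\forall\exists\,\text{check}$, which is $\boldsymbol{\Sigma_3}$. By \Cref{claim:GtoP_G} it suffices to decide whether Player \textsf{Out} wins $P_{\cal G}$, with $Acc'$ a Rabin condition (the inserted color $0$ is harmless). In classical Rabin games the Rabin player, here Player \textsf{Out}, has memoryless winning strategies. Such a strategy assigns to each $(q,S)\in Q\times 2^Q$ some $p\in S$, so it is exponential. The certificate will instead be a ``less-than-memoryless'' strategy: a family of linear orders $(\preceq_q)_{q\in Q}$ on $Q$. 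At $(q,S)$, Player \textsf{Out} moves to the $\preceq_q$-minimal element of $S$. This uses $O(|Q|^2\log|Q|)$ bits.

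\textbf{Main obstacle: showing that such order-based strategies suffice.} I would prove that if Player \textsf{Out} wins $P_{\cal G}$, then she wins with an order-based strategy. I would first try to follow the recursive (Zielonka-style) construction of positional Rabin strategies. At each level, the winning region is decomposed into attractor layers, and inside a layer one recurses into a subgame with one Rabin pair removed or one color forbidden. Every Out-choice made in that construction has the form ``go to a successor of smallest rank''. Here the rank is the layer index, refined lexicographically by the ranks from the recursive subgames. Such a rank is a function of the target state $p$ only, and possibly of the source $q$, never of the whole set $S$. A rank function on $Q$, with ties broken arbitrarily, is exactly an order $\preceq_q$. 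Moreover, minimizing rank over $S$ is correct for every $S$, since the construction only requires that the chosen successor make progress whenever some element of $S$ does. I expect this step to need the most care: in particular, checking that the refinement by recursive ranks stays consistent across all sets $S$ containing states from different layers.

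\textbf{Verification.} Fix the orders and form the graph $H$ on $Q$. It has an edge $q\to p$ iff some $q$-good $S$ has $\preceq_q$-minimum $p$, that is, iff
\[\exists v_\inp\,\bigl(\exists v_\outp\;\delta(q,v_\inp\cup v_\outp)=p\;\wedge\;\forall v'_\outp\;\delta(q,v_\inp\cup v'_\outp)\not\prec_q p\bigr).\]
Plays of $P_{\cal G}$ consistent with the strategy correspond exactly to paths of $H$ from $q_0$. To avoid alternating on edge existence, the certificate also contains a guessed edge set $E'\subseteq Q\times Q$. The verifier checks three things:
\begin{itemize}
\item for every pair $(q,p)\notin E'$, the negation of the formula above holds. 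This is a $\forall v_\inp\,(\forall\ldots\vee\exists\ldots)$ statement, hence in $\boldsymbol{\Pi_2}$. It guarantees $E'\supseteq E(H)$;
\item $E'$ is nonempty from every reachable state. This holds automatically, because every $q$ has a $q$-good set, so $H$ has an edge out of $q$;
\item in polynomial time, every infinite path of $(Q,E')$ from $q_0$ satisfies the Rabin condition. Equivalently, no reachable cycle-set satisfies the dual Streett condition, which is checked by the usual SCC decomposition as in \Cref{HOGStreetpi_2}.
\end{itemize}

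\textbf{Correctness.} Soundness holds because over-approximating the edges only adds paths. Completeness holds because we may guess $E'=E(H)$. Hence Player \textsf{Out} wins iff there exist $(\preceq_q)_q$ and $E'$ passing checks that lie in $\boldsymbol{\Pi_2}$ and $\P$. This places the problem in $\NP^{\boldsymbol{\Sigma_2}}=\boldsymbol{\Sigma_3}$.
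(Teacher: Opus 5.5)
Your verification layer is sound, and in places cleaner than the paper's. Guessing an over-approximation $E'$ and checking all non-edges with a single $\boldsymbol{\Pi_2}$ predicate is a valid alternative to building $H$ with $\boldsymbol{\Sigma_2}$-oracle calls. Phrasing the final test as Streett emptiness of the reachable part is also more precise than inspecting SCCs.

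The gap is the step you flag yourself: that Player~\textsf{Out} can win $P_{\cal G}$ with an order-based strategy. What you give there is a plan, not a proof. In a Zielonka/Horn-style recursion, the region that the Out-vertex $(q,S)$ belongs to (attractor layer, Fin-avoiding trap, recursive subgame) depends on $S$. So you would have to show two things:
\begin{itemize}
\item a single lexicographic rank on target states makes every region-local choice coincide with the $\preceq_q$-minimum of $S$, for all $S$ simultaneously;
\item jumping to a lower-ranked state of another region never breaks the progress invariant of the current region.
\end{itemize}
Your sentence ``the construction only requires that the chosen successor make progress whenever some element of $S$ does'' simply asserts this. It may well be true, since ranking and progress-measure arguments of this kind exist, but establishing it is a substantial detour you have not carried out.

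The missing idea is that nothing about Rabin games beyond positionality of \textsf{Out} is needed. Fix a positional winning strategy $\lambda$ of \textsf{Out} in $P_{\cal G}$. For each $q$, build $\le_q$ greedily:
\begin{itemize}
\item while some $q$-good set remains, pick one, call it $T$, and append $\lambda(q,T)$ to the order;
\item discard every remaining $q$-good set that contains $\lambda(q,T)$;
\item when no set remains, put the unused states last.
\end{itemize}
Each $q$-good $S$ is discarded at the first stage whose appended state lies in $S$, so $\min_{\le_q}S=\lambda(q,T)$ for the set $T$ picked at that stage. Hence every play of the order-based strategy projects onto $Q$ as a play consistent with $\lambda$, in which \textsf{In} chose $T$ instead of $S$. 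That play is winning. This is exactly the paper's super-memoryless claim.

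One minor fix in your verification. With the paper's definition, $q$-good sets are upward closed; for instance $Q$ itself is $q$-good. Your edge formula requires $p$ to be an actual $v_\inp$-successor, so it captures only a subgraph of the paper's $H$. Consequently, ``plays of $P_{\cal G}$ correspond exactly to paths of $H$'' is not literally true. Your argument still goes through if you:
\begin{itemize}
\item prove soundness directly in ${\cal G}$: \textsf{Out} answers $v_\inp$ by moving to the $\preceq_q$-least actual successor, so plays follow edges of $E'$;
\item obtain completeness from the inclusion of your graph in the paper's $H$.
\end{itemize}
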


\begin{proof}
     Let ${\cal G} = ({\cal A},\ap_\inp,\ap_\outp, Acc)$ be an HOG with Rabin winning condition $Acc$ and let  ${\cal A} = (Q, q_0, \ap_\inp \cup \ap_\outp, \delta, \cal C)$ be the underlying deterministic state-based HOA. By \Cref{claim:GtoP_G}, Player~\textsf{In} has a winning strategy in $\cal G$ iff Player \textsf{In} has a winning strategy in the classical game $P_{\cal G}$. In classical games, Rabin objectives admit memoryless winning strategies for Player~\textsf{Out}. However, in our setting, it is not sufficient to guess a memoryless strategy for Player~\textsf{Out} in $P_{\cal G}$, since the subgraph of $P_{\cal G}$ induced by such a strategy may still have exponentially large size as Player~{\sf Out} owns exponentially many states in $P_{\cal G}$.

Our approach is to exploit the existence of a memoryless winning strategy to derive a more structured form of strategy, which we call a \emph{super-memoryless} strategy, as formalized in the following claim.

\begin{claim}\label{claim:super-memoryless}
    If Player \textsf{Out} has a winning strategy $\lambda_\outp$ in $P_{\cal G}$ for Rabin, then it has a \emph{super-memoryless} winning strategy $\lambda'_\outp$, in the sense that for any state $q$ of the arena, there exists a linear order $\le_q$ on $Q$ such that in $P_{\cal G}$, whenever Player~\textsf{In}, from state $q \in V_{\inp}$, moves to some subset $S \subseteq Q$, then $\lambda'_\outp$, from $(q,S)$, selects the $\le_q$-minimal element of $S$.
\end{claim}

Note that we call this strategy super-memoryless  as it is a restricted memoryless strategy in which, for each state $(q,S)$, the choice is determined by a fixed
linear order $\le_q$ on $Q$, and the strategy always selects the minimal element in $S$ with respect to $\le_q$.
In particular, for any $q$-good set $S$ and all $q$-good subset $S'\subseteq S$ that contains the $\leq_q$-minimal element of $S$, a super-memoryless strategy picks the same state from $(q,S)$ and $(q,S')$, even though there could be exponentially many such subsets $S'$.

\begin{claimproof}
Let $P_{\cal G} = (V_\inp \cup V_\outp, V_\inp=Q, V_\outp=Q \times 2^Q,E,C, \mathcal{C}', Acc')$.
    Assume that Player~\textsf{Out} has a winning strategy for the Rabin objective in the game $P_{\cal G}$.
Since $P_{\cal G}$ is a classical Rabin game,  Player~\textsf{Out} admits a \emph{memoryless} winning strategy. Let $\lambda_{\outp}$ be such a strategy.

Fix a state $q \in V_{\inp}$ and let $\mathcal{X}_q$ be the set of all $q$-good sets. For each $S \in \mathcal{X}_q$, let $p_S = \lambda_{\outp}(q,S) \in V_{\inp}$ denote the state in $Q$ selected by the strategy $\lambda_{\outp}$ at the position $(q,S)$.

We construct a linear order $\le_q$ on $Q$ iteratively as follows: Initialize $\mathcal{X} := \mathcal{X}_q$ and let the order be initially empty. While $\mathcal{X} \neq \emptyset$, pick any set $S \in \mathcal{X}$, and declare $p_S$ to be the next element in the order $\le_q$, i.e., the smallest among the elements not yet ordered. Then remove from $\mathcal{X}$ all subsets $S' \in \mathcal{X}$ such that $p_S \in S'$. Iterate this procedure until $\mathcal{X}$ becomes empty. After this process terminates, extend $\le_q$ arbitrarily to a total order on $Q$ by placing the remaining elements at the end in an arbitrary way. Let $\lambda'_{\outp}$ be the strategy that selects, at each position $(q,S) \in V_\outp$, the $\le_q$-minimal element of $S$, i.e., $\lambda'_{\outp}(q,S) = \min_{\le_q}(S)$. 

We show that $\lambda'_{\outp}$ is winning. 
Let $\pi' = q_0 (q_0,S_0) q_1 (q_1,S_1) \dots$ be any play in $P_{\cal G}$ compatible with $\lambda'_{\outp}$. We show that there exists a play of the form $\pi = q_0 (q_0,\widetilde{S}_0) q_1 (q_1,\widetilde{S}_1) \dots$ compatible with $\lambda_{\outp}$.

The construction proceeds inductively along the positions of the play. Both plays start at the same initial state. Suppose that at position $(q_i,S_i)$ in $\pi'$, the corresponding position in $\pi$ is $(q_i,\widetilde{S}_i)$, and both plays have reached the same state $q_i$. If $\lambda'_{\outp}(q_i,S_i) = \lambda_{\outp}(q_i,S_i)$, then we set $\widetilde{S}_i = S_i$, and both plays choose the same successor $q_{i+1}$. Otherwise, let $p = \lambda'_{\outp}(q_i,S_i) \neq \lambda_{\outp}(q_i,S_i)$. By construction of the order $\le_{q_i}$, there exists a set $S' \in \mathcal{X}_{q_i}$ such that $p = \lambda_{\outp}(q_i,S')$ and $p \in S'$. We then set $\widetilde{S}_i = S'$. In this case, $\pi$ moves from $(q_i,S')$ to $p$ using $\lambda_{\outp}$, while $\pi'$ moves from $(q_i,S_i)$ to $p$ using $\lambda'_{\outp}$. Thus both plays again reach the same successor state $q_{i+1} = p$.
Proceeding inductively, we obtain a play $\pi$ compatible with $\lambda_{\outp}$ such that the sequence of states visited in $Q$ is identical in $\pi$ and $\pi'$. Hence both plays induce the same sequence of colors.

Thus, every play $\pi'$ compatible with $\lambda'_{\outp}$ can be simulated by a play $\pi$ compatible with $\lambda_{\outp}$ that visits exactly the same sequence of states in $Q$, and hence the projected color sequence remains the same. Since $\lambda_{\outp}$ is winning, $\pi$ satisfies the Rabin objective, and therefore so does $\pi'$. Therefore, $\lambda'_{\outp}$ is a winning strategy. \qedhere
\end{claimproof}
Using \Cref{claim:super-memoryless}, to decide whether
Player~\textsf{Out} has a winning strategy in $P_{\cal G}$, it
suffices to guess a super-memoryless strategy, i.e., to guess, for
each state $q$ of the HOG, a linear order $\le_q$. This strategy can
therefore be represented in polynomial space. To verify that the
guessed strategy is winning, we define a directed graph $H$ over $Q$
as follows: there is an edge $(q,q')$ if there exists a $q$-good set
$S \subseteq Q$ such that $q'$ is the $\le_q$-minimal element of
$S$. Constructing the graph $H$ requires querying a
$\boldsymbol{\Sigma_2}$ oracle --- for a pair $(q,q')$, check the
existence of a subset $S$ such that $q'$ is the minimal element of $S$
with respect to $\le_q$ and $S$ is $q$-good (i.e., does there exist a
valuation $v_q^{\inp} \in \ap_{\inp}$ that satisfies the formula given
in \Cref{eq:q-good}) --- which  overall  reduces to checking the
satisfiability of the following $\boldsymbol{\Sigma_2}$ formula:
$$
\exists x_{q_1}\dots\exists x_{q_n}\exists \boldsymbol{x_\inp}\forall
\boldsymbol{x_\outp} \bigwedge_{(q',\varphi)\text{
    s.t. }\delta(q,\varphi)=q'}
\varphi\bigl(\boldsymbol{x_\inp},\boldsymbol{x_\outp}\bigr)\rightarrow
x_{q'}.
    $$
    where $Q = \{q_1,\dots,q_n\}$ and the $x_{q_i}$ are fresh
    variables (they encode a set $S$). Finally, we check that all infinite paths in $H$ satisfy the Rabin condition. This can be done in polynomial time by verifying that no reachable strongly connected component violates the Rabin condition. Thus, the verification can be performed in $\NP^{\Sigma_2}$, and therefore the problem lies in $\boldsymbol{\Sigma_3}$.
\end{proof}

Next, it remains to establish the complexity of HOG with Muller and Emerson-Lei acceptance condition. The classical Muller and EL games are known to be \PSPACE-complete~\cite{PaulHunterEL}. We show that HOG Muller and EL games are also \PSPACE-complete. The \PSPACE \ hardness follows by a reduction from the corresponding classical games, while membership in \PSPACE \ is obtained by adapting McNaughton’s algorithm for solving Muller games \cite{MCNAUGHTON1993} to the HOG setting.

\subparagraph*{Classical Muller and Emerson-lei Games.}Consider a Muller game on a finite arena $\mathcal{G} = (V, V_\inp, V_\outp, E, C, {\cal C}, {\cal F})$, where each vertex $v \in V$ has a color ${\cal C}(v) \in C$, and Player~{\sf Out} wins if the set of colors appearing infinitely often along a play belongs to ${\cal F} \subseteq 2^C$.

McNaughton's algorithm \cite[Theorem 4.1]{MCNAUGHTON1993} gives a polynomial space algorithm to compute the winning regions of $\cal G$. The algorithm recursively computes winning regions $W_\outp$ and $W_\inp$ of Player~{\sf Out} and Player~{\sf In}, respectively, as follows. For each vertex $v \in V$, define \emph{avoidance sets} $X(v)$ and $Y(v)$ as the sets of vertices from which Player~{\sf Out} or Player~{\sf In}, respectively, can avoid visiting $v$. The subgames $\mathcal{G}[X(v)]$ (whose arena is defined by the subgraph induced by the subset of vertices $X(v)$) and $\mathcal{G}[Y(v)]$ are recursively solved to obtain the respective winning regions $W_\outp(v)$ and $W_\inp(v)$. Let $E$ (resp. $A$) be the sets of vertices from which Player~{\sf Out} (resp. Player~{\sf In}) can force the play to reach some $W_\outp(v)$ (resp. $W_\inp(v)$) for some vertex $v$. If $E \cup A \neq \emptyset$, then the winning regions $W_\outp'$ and $W_\inp'$ on the smaller arena induced by $V' = V \setminus (E \cup A)$ are computed, and the winning regions $W_\outp = E \cup W_\outp'$ and $W_\inp = A \cup W_\inp'$ are returned for the original game. 

If $E \cup A = \emptyset$, neither player can force the play to any previously computed winning subarena. In this case, the winner is determined globally: let $S$ be the set of colors occurring in the remaining arena. If $S \in {\cal F}$ , then Player~{\sf Out} wins from every vertex in the arena, otherwise, Player~{\sf In} wins. The correctness of this step is proved in Lemma~2 of \cite{MCNAUGHTON1993}. Strategies are obtained by forcing the play into subarenas where a player can win, or, in the base case, cycling through all colors to satisfy the Muller condition.

The recursion depth is at most $|V|$, since each recursive call is on a strictly smaller subarena. At each level, avoidance sets and intermediate winning regions are stored, requiring $O(|V|)$ space. So the space requirements are bounded above by $O(|V|^2)$, and hence computing winner regions in Muller games is in \PSPACE.

The above algorithm is extended to Emerson-Lei games in \cite{PaulHunterEL, hunterthesis}, where the winning condition $\mathcal{F}$ is given by an Emerson-Lei Boolean formula over colors. The only modification is in the global check: instead of testing whether $S \in \mathcal{F}$, one checks whether $S$ satisfies the Emerson-Lei formula. This check is linear in the size of the formula and is performed at most once per recursion level, yielding a polynomial space procedure overall.

\subparagraph*{Adapting to Muller and Emerson-Lei HOG.} We adapt McNaughton’s algorithm and its Emerson-Lei extension to solve HOG with Muller and Emerson-Lei acceptance conditions. Consider an HOG with state space $Q$. For each state $q \in Q$, the avoidance sets $X(q)$ and $Y(q)$ are computed by solving safety HOG, which lies in \PSPACE \ (precisely in $\Pi_2$),  and are invoked at most ${\cal O}(|Q|)$ times. Similarly, the computation of the sets $E$ and $A$ from the local winning regions $W_\outp(q)$ and $W_\inp(q)$ is performed using reachability HOG, also in \PSPACE, again at most ${\cal O}(|Q|)$ times. Consequently, the algorithm computes the winning regions $W_\outp$ and $W_\inp$ for Player~{\sf{Out}} and Player~{\sf{In}}, respectively. If the initial state belongs to $W_\outp$, then Player~{\sf{Out}} has a winning strategy. Since the recursion depth is linear and all subprocedures run in polynomial space, the entire procedure remains in \PSPACE. Therefore, we get the following result.
\begin{theorem}
    HOGs with Emerson-Lei and Muller acceptance are \PSPACE-complete.
\end{theorem}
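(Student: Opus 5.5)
The theorem has two halves: \PSPACE-hardness and membership in \PSPACE, for both Muller and Emerson-Lei HOGs.

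\textbf{Hardness.} I would reduce from classical Muller games, which are \PSPACE-complete~\cite{PaulHunterEL}. Since Muller is a special case of Emerson-Lei, this covers both conditions. First, normalize the classical arena $(V,V_\inp,V_\outp,E,C,\mathcal{C},\mathcal{F})$ so that it is bipartite and starts at a Player \textsf{In} vertex; this is done by inserting dummy vertices and duplicating colors. Then encode it as a deterministic, complete, state-based HOA with states $V_\inp$. Pick propositions $\ap_\inp$ that encode, in binary, an index of a successor of $v\in V_\inp$, and propositions $\ap_\outp$ that encode an index of a successor of the resulting \textsf{Out} vertex.

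A transition $v \xrightarrow{\varphi} v'$ is guarded by a formula stating that the input bits encode some $u$ with $(v,u)\in E$ and the output bits encode $v'$ with $(u,v')\in E$. Out-of-range indices are mapped to a default successor, which keeps the automaton complete and deterministic. Colors would be put on transitions, so that the pair of colors of $u$ and of $v$ is recorded; transition coloring is allowed in HOA. The colors of both $u$ and $v$ must be seen, so I would give the transition a fresh color representing the pair $(\mathcal{C}(v),\mathcal{C}(u))$. The Muller sets are then rewritten over these pair colors, taking all pair-sets whose projected union lies in $\mathcal{F}$. This rewriting could blow up, so I would instead prefer a reduction from a formulation with polynomially many Muller sets, or from Emerson-Lei directly, where $\mathsf{Inf}(\{c\})$ becomes $\mathsf{Inf}(\{\text{pairs mentioning } c\})$ without any blow-up. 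For the pure Muller case I would rely on hardness of Muller games with vertex-colors in which each vertex has its own color and both players' moves are tracked. Making the coloring compatible with the Muller format, which only allows positive conjunctions of singletons, is the step needing care. My first attempt would be to make \textsf{Out} vertices colorless by padding, i.e. coloring them with a color that lies in every Muller set.

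\textbf{Membership.} I would follow the adaptation of McNaughton's algorithm sketched just before the statement, working on $P_{\cal G}$ via \Cref{claim:GtoP_G} without constructing it. The key point is that every subgame arising in the recursion is induced by a subset $U\subseteq Q$ of \textsf{In} states; a position $(q,S)$ survives in the subgame exactly when $S\subseteq U$ is $q$-good. Such subsets can be stored in polynomial space.

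The primitive operations are the attractor or avoidance sets relative to $U$. Each amounts to solving a reachability or safety HOG restricted to $U$. This is in $\boldsymbol{\Pi_2}\subseteq\PSPACE$ by \Cref{HOGStreetpi_2}, applied to the HOA in which transitions leaving $U$ are redirected to a losing sink. Alternatively, one can compute an attractor by a fixpoint of at most $|Q|$ iterations, each step checking a $\forall\exists$ or $\exists\forall$ formula in \PSPACE.

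The base case of McNaughton's algorithm needs the set of colors occurring in the subarena. This equals the colors of the states in $U$, which is trivial since colors are state-based by \Cref{lem:to_state_based}. Then evaluate the Muller or Emerson-Lei formula on that set.

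\textbf{Obstacles.} Correctness is inherited from the classical algorithm applied to $P_{\cal G}$, with all arguments remaining sound for subarenas closed under the dominating player. The recursion depth is at most $|Q|$, and each level stores polynomially many subsets of $Q$, giving \PSPACE. The main obstacle I expect is the color encoding in the hardness direction. A second point to justify carefully is that sub-arenas of $P_{\cal G}$ are exactly described by state subsets $U$, i.e. that trap and attractor complements never split the positions $(q,S)$ in a way not determined by $U$.
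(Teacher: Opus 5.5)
Your overall plan matches the paper's: hardness by reduction from classical games, and membership by running McNaughton's algorithm with avoidance sets and attractors delegated to safety/reachability HOGs in $\boldsymbol{\Pi_2}$. Your pair-colour reduction does give \PSPACE-hardness for Emerson-Lei.

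\textbf{Gap 1: Muller hardness is left open.} The obstacle comes from your bipartite normalisation, which forces two classical colours onto one transition. The missing idea is to let each HOG round simulate exactly \emph{one} classical move:
\begin{itemize}
\item take the states to be $V$;
\item at $v\in V_\inp$, let the guards depend only on $\ap_\inp$ (a binary successor index, with out-of-range codes sent to a fixed successor), ignoring \textsf{Out}'s valuation; symmetrically at $v\in V_\outp$ with $\ap_\outp$;
\item colour every transition leaving $v$ by $\mathcal{C}(v)$.
\end{itemize}
This HOA is complete, deterministic and of polynomial size, and strategies correspond directly. The colour sequence of a play is exactly that of the classical play, so the Muller (or Emerson-Lei) condition of the hard instances of \cite{PaulHunterEL} carries over unchanged.

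Your fallbacks do not work. Muller games in which every vertex has its own colour and the winning sets are listed explicitly are solvable in polynomial time (Horn, 2008). Padding \textsf{Out} vertices with a universal colour presupposes that their colours were irrelevant in the source instance.

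Your worry about ``positive conjunctions of singletons'' comes from reading the displayed Muller formula literally. Under that reading Out's winning family is upward closed, hence closed under union. Then Player~\textsf{In} has positional strategies in $P_{\cal G}$, the argument of \Cref{HOGStreetpi_2} places the problem in $\boldsymbol{\Pi_2}$, and \PSPACE-hardness would force $\PSPACE=\boldsymbol{\Pi_2}$. The statement needs the standard semantics $\Inf(\bar c)\in\{E_1,\dots,E_k\}$, which is also what the paper's Muller-to-Streett construction presupposes.

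\textbf{Gap 2: the subgame description in the membership part is wrong in general.} Your rule, ``$(q,S)$ survives iff $S\subseteq U$ is $q$-good'', is right only for complements of \textsf{Out}-attractors such as $Y(p)$. The recursion also solves $\mathcal{G}[X(p)]$, and $X(p)$ is the complement of an \textsf{In}-attractor. There, at $q\in U$, Player~\textsf{In} may still offer \emph{every} $q$-good $S$, and it is \textsf{Out} who is confined to $S\cap U$. Under your rule, a state where \textsf{In} controls nothing and some successor lies outside $U$ would have no move at all, although \textsf{Out} can keep the play inside $U$ from it. The final call on $V\setminus(E\cup A)$ confines both players at once.

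The repair is cheap. Describe each subarena by a partition $Q=U\uplus O\uplus I$, where $O$ (resp.\ $I$) collects removed states that are winning for \textsf{Out} (resp.\ \textsf{In}). At $q\in U$, \textsf{In} may offer the $q$-good sets $S$ with $S\cap O=\emptyset$, and \textsf{Out} moves within $S\cap U$. This family is closed under removing attractors of either player and costs polynomial space per recursion level. The attractor and avoidance computations become reachability/safety HOGs on the HOA in which transitions into $O$ (resp.\ $I$) are redirected to an \textsf{Out}-winning (resp.\ \textsf{In}-winning) sink. So your ``losing sink'' must be chosen according to which player is confined. With these two repairs your argument is complete, and more explicit than the paper's own sketch.
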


\section{Symbolic Games}

HOA can be seen as a subclass of the more general class of \emph{symbolic automata}~\cite{symbauto}, which extend finite automata to infinite data domains $\mathcal{D}$. In particular, the transitions of a symbolic automaton are guarded by formulas from a decidable first-order theory over $\mathcal{D}$, e.g., the theory of natural numbers with addition (aka Presburger arithmetic). The free variables in the guards on transitions form a (common) tuple of $n$ variables, so that the language recognized by such an automaton is a subset of words over alphabet $\mathcal{D}^{n}$. It is easily seen that HOA automata are a particular case of symbolic automata over the Boolean data domain. Likewise, HOG can be generalized to \emph{symbolic games}, whose underlying arena is given by a deterministic symbolic automaton. 

As a matter of fact, the algorithms for solving HOGs, presented in Section~\ref{sec:hog}, can be used to solve symbolic games, modulo calling some oracle for checking satisfiability of formulas of the form $\exists^*\forall^*\varphi$, for $\varphi$ formulas occurring on transitions. In particular, the reduction to solving a two-player (classical) game $P_{\cal G}$, proved in Proposition~\ref{claim:GtoP_G}, carries over to symbolic games, so that symbolic games are decidable (for $\omega$-regular winning conditions) as soon as this reduction is computable (see Theorem~\ref{th:decidability} below). For Streett conditions (and therefore parity, safety, reachability, (co)Büchi conditions), a more precise complexity analysis yields the upper-bound $\coNP^{\boldsymbol{C}}$ for solving symbolic games where the satisfiability problem of formulas of the form $\exists^*\forall^*\varphi$ is in some complexity class $\boldsymbol{C}$ (see Lemma~\ref{lem:StreettSymbolic}). For Rabin objectives, the ideas for HOG developed in Lemma~\ref{HOGRabinsigma_3} still apply, allowing us to derive the upper bound $\NP^{\boldsymbol{C}}$ (see Lemma~\ref{lem:RabinSymbolic}). We now define symbolic automata and games and give the main arguments to prove the aforementioned results.

\subparagraph*{First-order formulas} A signature $\sigma = (R,F,C)$ consists of countable sets of relation, function, and constant symbols respectively. Let $\cal X$ be a countable set of variables. The set of \emph{terms} over $\sigma$ is defined inductively: every $x \in {\cal X}$ and $c \in C$ is a term, and if $f \in F$ is $n$-ary and $t_1,\dots,t_n$ are terms, then $f(t_1,\dots,t_n)$ is a term. The set of \emph{first-order formulas} over $\sigma$, denoted by $FO[\sigma]$, is defined by the grammar:
\[
\varphi ::= r(t_1,\dots,t_n) \mid t_1 = t_2 \mid \neg \varphi \mid (\varphi \wedge \varphi) \mid (\varphi \vee \varphi) \mid \exists x\, \varphi \mid \forall x\, \varphi,
\]
where $r \in R$ is an $n$-ary relation symbol and $t_1,\dots,t_n$ are terms over $\sigma$, and $x \in {\cal X}$ is a variable.
A \emph{$\sigma$-structure} $\cal M$ consists of a non-empty domain $\cal D$ and an interpretation mapping each $r \in R$ to a relation on $\cal D$, each $f \in F$ to a function on $\cal D$, and each $c \in C$ to an element of $\cal D$. For an $FO[\sigma]$ formula $\varphi(x_1, \ldots, x_n)$
where $x_1, \ldots, x_n$ are free variables, we say $({\cal M},\boldsymbol{a})$ 
satisfies $\phi(x_1,\dots,x_n)$, written ${\cal M} \models \varphi(\boldsymbol{a})$, where $\boldsymbol{a}= (a_1, \ldots, a_n) \in {\cal D}^{n}$, 
if $\varphi$ evaluates to true in $\cal M$ under an assignment that maps each variable $x_i$ to $a_i$ for all $i \in [n]$ (detailed semantics of first-order logic can be found in \cite{Barwise1977,Libkin2004}).

A \emph{fragment} of $FO[\sigma]$ is a subset $\cal F$ of $FO[\sigma]$-formulas (e.g.the ~quantifier-free $FO[\sigma]$ formulas). Its $\exists^*\forall^*$-closure, denoted $\exists^*\forall^*\cal F$, is the fragment whose formulas are of the form $\exists x_1\dots\exists x_n\forall y_1\dots \forall y_m\varphi$ for $\varphi\in\cal F$ and an arbitrary subset of variables $\{x_1,\dots,x_n,y_1,\dots,y_m\}$.

\subparagraph*{${\cal F}$-automaton.} Let $\sigma$ be a signature and let ${\cal F} \subseteq FO[\sigma]$ be a fragment of first-order formula over $\sigma$. A ${\cal F}$-automaton $\cal A$ is a tuple ${\cal A} = (Q, I, \boldsymbol{x}=(x_1,\dots,x_n), \Delta, C)$ where $\boldsymbol{x} \subset {\cal X}$ is an ordered finite set of variables, $Q$ is the set of states, $I$ is the set of initial states, and $C: \Delta \rightarrow [d]$ is the coloring function with index $d$ (same as in the case of HOA), and $\Delta \subseteq Q \times {\cal F} \times Q$ consists of transitions that are labeled with formulas $\varphi \in \cal F$ whose free variables are a subset of $\boldsymbol{x}$. 
An ${\cal F}$-automaton is deterministic if $|I| = 1$, and for every state $q$ and tuple $\boldsymbol{a}$, there is at most one transition whose formula is satisfied by that tuple.
For a $\sigma$-structure $\mathcal{M}$ (with domain ${\cal D}$) and an acceptance condition $Acc \subseteq [d]^{\omega}$, 
 the language of $\mathcal{A}$ over $\mathcal{M}$, denoted $\mathcal{L}_{Acc}^{\cal M}({\cal A})$, is the set of infinite words $\boldsymbol{a_1} \boldsymbol{a_2} \cdots\in ({\cal D}^n)^\omega$ such that there exists a run
$\rho = q_1 \xrightarrow{\varphi_1} q_2 \xrightarrow{\varphi_2} \cdots$
with $q_1 \in I$, $\mathcal{M} \models \varphi_i(\boldsymbol{a}_i)$ for all $i \geq 1$, and the induced color sequence $\mathcal{C}(\rho)$ belongs to $Acc$.

\subparagraph*{${\cal F}$-games.} For any two tuples $\boldsymbol{c}=(c_1,\dots,c_k)$ and $\boldsymbol{d} = (d_1,\dots,d_l)$, we may write $(\boldsymbol{c},\boldsymbol{d})$ for the tuple $(c_1,\dots,c_k,d_1,\dots,d_l)$.
A \emph{two-player ${\cal F}$-game} is a tuple 
${\cal G} = ({\cal A},\boldsymbol{x^\inp},\boldsymbol{x^\outp},Acc)$
such that ${\cal A}$ is a deterministic ${\cal F}$-automaton with tuple of variables $\boldsymbol{x}=(\boldsymbol{x^\inp},\boldsymbol{x^\outp})$, of size $n+m$, where $n$ and $m$ are the sizes of $\boldsymbol{x^\inp}$ and $\boldsymbol{x^\outp}$ respectively, and $Acc\subseteq [d]^\omega$ is the winning condition for $d$ the index of $\cal A$.

Let fix a $\sigma$-structure $\mathcal{M}$ (with domain ${\cal D}$). The semantics of an ${\cal F}$-game ${\cal G} = ({\cal A},\boldsymbol{x^\inp},\boldsymbol{x^\outp},Acc)$ is defined with respect to $\mathcal{M}$ as follows. The game is played in turns by two players, Player \textsf{In} and Player \textsf{Out}, who successively pick valuations of $\boldsymbol{x^{\inp}}$ and $\boldsymbol{x^{\outp}}$ respectively. Player \textsf{In} plays first, picking some tuple $\boldsymbol{a^\inp_1}\in{\cal D}^n$, then Player \textsf{Out} picks some tuple $\boldsymbol{a^\outp_1}\in{\cal D}^m$, and the game continues like that \emph{ad infinitum}, forming an infinite sequence $\pi = (\boldsymbol{a^\inp_1},\boldsymbol{a^\outp_1})(\boldsymbol{a^\inp_2},\boldsymbol{a^\outp_2})\dots$ For an acceptance condition $Acc  \subseteq [d]^\omega$, Player \textsf{Out} wins this play if $\pi$ belongs to $\mathcal{L}_{Acc}^{\cal M}({\cal A})$, otherwise Player \textsf{In} wins. Finally, the game is won by Player \textsf{Out} if he has a strategy 
$\lambda_{\outp}:(\mathcal{D}^{n+m})^*\mathcal{D}^n\rightarrow \mathcal{D}^m$ to pick the tuples $\boldsymbol{a^\outp_i}$, provided a history of the tuples chosen by both players so far.
Similar to HOG, ${\cal F}$-games are also determined for $\omega$-regular winning conditions using Borel determinancy of classical games. We say that an ${\cal F}$-game is \emph{decidable} (or belongs to a complexity class ${\cal C}$) if it is possible to decide whether Player~\textsf{Out} has a winning strategy (resp.~within ${\cal C}$).  

Every ${\cal F}$-automaton $\cal A$ can be transformed into an equivalent state-based ${\cal F}$-automaton with a coloring function ${\cal C}: Q \to [d]$. This transformation is identical to the one for HOA (see \Cref{lem:to_state_based}). 
Given a ${\cal F}$-game ${\cal G} = ({\cal A},\boldsymbol{x^\inp},\boldsymbol{x^\outp},Acc)$, we assume wlog that ${\cal A}=(Q,q_0,\boldsymbol{x}= (\boldsymbol{x^\inp},\boldsymbol{x^\outp}),\delta,{\cal C})$ is deterministic and state-based ${\cal F}$-automaton. We define the classical turn-based game $P_{\cal G}$ exactly as in the HOG case: vertices are $V_\inp = Q$ and $V_\outp = Q \times 2^Q$, with edges defined via $q$-good sets, and with winning condition $Acc'$ obtained from $Acc$ by inserting a fresh color $0$ between successive colors. \Cref{claim:GtoP_G} can be extended to ${\cal F}$-games, and hence ${\cal G}$ and $P_{\cal G}$ are equivalent with respect to winning strategies for both players.

\begin{theorem}\label{th:decidability}
    Let $\cal M$ be a $\sigma$-structure and ${\cal F} \subseteq FO[\sigma]$. If the satisfiability problem for $\exists^*\forall^*\cal F$ in $\cal M$ is decidable, then ${\cal F}$-games (over $\cal M$) are decidable for $\omega$-regular winning conditions.
\end{theorem}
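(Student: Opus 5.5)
The plan is to reduce solving ${\cal G}$ to solving the finite classical game $P_{\cal G}$ and to show that $P_{\cal G}$ can be constructed effectively using the satisfiability oracle for $\exists^*\forall^*{\cal F}$. First, by the state-based transformation of \Cref{lem:to_state_based} (which carries over verbatim to ${\cal F}$-automata), assume ${\cal A}=(Q,q_0,(\boldsymbol{x^\inp},\boldsymbol{x^\outp}),\delta,{\cal C})$ is deterministic and state-based. By the extension of \Cref{claim:GtoP_G} to ${\cal F}$-games, Player \textsf{Out} wins ${\cal G}$ iff she wins $P_{\cal G}$ with winning condition $Acc'$. Since $Q$ is finite, the arena of $P_{\cal G}$ has finitely many vertices, namely $Q\cup (Q\times 2^Q)$. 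The edges from $(q,S)$ to $p$ are trivial: they exist iff $p\in S$. The only nontrivial part is deciding which edges $q\to(q,S)$ exist, i.e., whether $S$ is $q$-good.

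The key step is to express $q$-goodness as a sentence of $\exists^*\forall^*{\cal F}$. Over the domain ${\cal D}$, $S$ is $q$-good iff there is $\boldsymbol{a^\inp}\in{\cal D}^n$ such that for every $\boldsymbol{a^\outp}\in{\cal D}^m$, the unique transition enabled from $q$ on $(\boldsymbol{a^\inp},\boldsymbol{a^\outp})$ leads into $S$. Because ${\cal A}$ is complete and deterministic, this is equivalent to
\[
\exists \boldsymbol{x^\inp}\,\forall \boldsymbol{x^\outp} \bigwedge_{(q,\varphi,p)\in\delta,\ p\notin S} \neg\varphi(\boldsymbol{x^\inp},\boldsymbol{x^\outp}),
\]
mirroring \Cref{eq:q-good}. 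If completeness is not assumed (a dead end means the play is lost, or some other convention is used), one can instead write $\bigvee_{(q,\varphi,p),\,p\in S}\varphi$. Here lies the main obstacle: the matrix must belong to ${\cal F}$, whereas $\exists^*\forall^*{\cal F}$ only closes ${\cal F}$ under the quantifier prefix. I would handle this by assuming, as is implicit in the statement, that ${\cal F}$ is closed under Boolean combinations. This holds for the intended fragments, such as quantifier-free formulas or all of $FO[\sigma]$. If it fails, I would replace ${\cal F}$ by its Boolean closure and note that the hypothesis is meant for that fragment. For each of the finitely many pairs $(q,S)$ the oracle is queried once, so the full edge relation of $P_{\cal G}$ is computable.

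Finally, the winning condition $Acc'$ is $\omega$-regular whenever $Acc$ is. For example, given an Emerson-Lei formula $\alpha$ for $Acc$, the condition $Acc'$ is defined by the same formula $\alpha$ over colors $[d]\cup\{0\}$, with plays alternating colors as constructed. Classical finite games with $\omega$-regular (e.g.\ Emerson-Lei or Muller) winning conditions are decidable, for instance via McNaughton's algorithm recalled above or via reduction to parity games. We therefore solve $P_{\cal G}$ from $q_0$ and answer accordingly, which proves the theorem.
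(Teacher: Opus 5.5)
Your proposal is correct and is essentially the paper's proof: reduce to the finite game $P_{\cal G}$, decide each edge $q\to(q,S)$ by one satisfiability query $\exists \boldsymbol{x^\inp}\,\forall \boldsymbol{x^\outp}\bigwedge_{\delta(q,\varphi)\notin S}\neg\varphi$, and solve $P_{\cal G}$ classically, with your remark on Boolean closure of ${\cal F}$ making explicit an assumption the paper uses tacitly. One small caveat, which the paper shares: the construction of \Cref{lem:to_state_based} guesses the colour of the \emph{next} transition and so does not preserve determinism (it can create several initial states and several successors on one letter), so to get a deterministic state-based automaton let states $(q,c)$ instead record the colour of the \emph{last} transition taken, via transitions $((q,c),\varphi,(q',{\cal C}(q,\varphi,q')))$, which only prepends one colour to each colour sequence and is harmless for $\omega$-regular winning conditions.
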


\begin{proof}
   We reduce solving ${\cal G}$ to solving a classical turn-based game $P_{\cal G}$.
To construct $P_{\cal G}$, it suffices to decide, for $q \in Q$ and $S \subseteq Q$, whether $S$ is $q$-good. Let $\mathit{Form}(q,S)$ be the set of formulas $\varphi$ such that $\delta(q,\varphi)\not\in S$ (if defined). 
Since ${\cal A}$ is deterministic, $S$ is $q$-good iff the formula $\exists \boldsymbol{x^\inp}\;\forall \boldsymbol{x^\outp}\;
\bigwedge_{\varphi\in \mathit{Form}(q,S)}
\neg \varphi\bigl(\boldsymbol{x^\inp},\boldsymbol{x^\outp})
$
is satisfiable in $\cal M$. By assumption, satisfiability for $\exists^*\forall^*\cal F$ is decidable. Thus, we can decide whether $S$ is $q$-good, and thus construct all edges of $P_{\cal G}$. Hence the reduction to $P_{\cal G}$ is effective.

 Since $P_{\cal G}$ is a finite turn-based classical game with an $\omega$-regular winning condition, its winner is decidable. Since ${\cal G}$ and $P_{\cal G}$ are equivalent with respect to winning strategies for both players, we conclude that the winner in ${\cal F}$-games is also decidable.
\end{proof}

Note that by determinacy of symbolic games, the previous theorem entails that we can decide if Player \textsf{In} wins (under the same assumptions as the theorem statement).

\begin{lemma}\label{lem:StreettSymbolic}
Let $\cal M$ be a $\sigma$-structure and ${\cal F} \subseteq FO[\sigma]$. If the satisfiability problem for $\exists^*\forall^*\cal F$ in $\cal M$ is in some complexity class \(\boldsymbol{C}\), then ${\cal F}$-games are in the complexity class $\coNP^{\boldsymbol{C}}$ for Streett, parity, B\"uchi, co-B\"uchi, safety and reachability winning conditions. 
\end{lemma}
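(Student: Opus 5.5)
The plan is to mirror the proof of \Cref{HOGStreetpi_2}, replacing the $\boldsymbol{\Pi_1}$ oracle by a $\boldsymbol{C}$ oracle. Let ${\cal G} = ({\cal A},\boldsymbol{x^\inp},\boldsymbol{x^\outp},Acc)$ be an ${\cal F}$-game with Streett winning condition. We assume wlog, via the state-based transformation identical to \Cref{lem:to_state_based}, that ${\cal A}=(Q,q_0,\boldsymbol{x},\delta,{\cal C})$ is deterministic and state-based. By the extension of \Cref{claim:GtoP_G} to ${\cal F}$-games, the two players win ${\cal G}$ exactly when they win the classical game $P_{\cal G}$. By determinacy of ${\cal F}$-games for $\omega$-regular conditions, it suffices to show that deciding whether Player~\textsf{In} wins $P_{\cal G}$ lies in $\NP^{\boldsymbol{C}}$; complementing then yields $\coNP^{\boldsymbol{C}}$ for Player~\textsf{Out}. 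The remaining conditions (parity, B\"uchi, co-B\"uchi, and safety/reachability via the transformations of \Cref{sec:HOA}) are special cases of Streett, so they inherit the bound.

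The $\NP^{\boldsymbol{C}}$ procedure guesses a memoryless strategy for Player~\textsf{In} in $P_{\cal G}$. Memoryless strategies suffice because Player~\textsf{In} holds the complementary Rabin objective in a classical Streett game. The strategy is represented by a set $S_q\subseteq Q$ for each $q\in Q$, so it has polynomial size. The key point is that we do \emph{not} guess a witness tuple $\boldsymbol{a}^\inp_q\in{\cal D}^n$: unlike the Boolean case, such a tuple may be infinite or of unbounded size. Instead, the existential quantification over $\boldsymbol{x^\inp}$ is moved into the oracle query.

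Verification then has two parts.

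\textbf{(i) Each $S_q$ is $q$-good.} For each $q$, query the $\boldsymbol{C}$ oracle on the $\exists^*\forall^*{\cal F}$ sentence
\[\exists \boldsymbol{x^\inp}\,\forall \boldsymbol{x^\outp}\bigwedge_{\varphi\in \mathit{Form}(q,S_q)}\neg\varphi(\boldsymbol{x^\inp},\boldsymbol{x^\outp}),\]
exactly as in the proof of \Cref{th:decidability}. This uses $|Q|$ oracle calls on polynomial-size formulas.

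\textbf{(ii) The strategy is winning.} Build the polynomial-size subgraph of $P_{\cal G}$ induced by the strategy: the edge $q\to(q,S_q)$ and the edges $(q,S_q)\to p$ for every $p\in S_q$. Then check in polynomial time, by an SCC decomposition, that no reachable SCC satisfies the Streett condition for Player~\textsf{Out}, with the color $0$ inserted as in $Acc'$.

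The main obstacle is a formal mismatch between the fragment in the statement and the queried sentence. The sentence uses $\neg\varphi$ and conjunctions of formulas from ${\cal F}$, whereas the hypothesis is about $\exists^*\forall^*{\cal F}$. I would handle this the way \Cref{th:decidability} implicitly does, by assuming ${\cal F}$ is closed under negation and conjunction (true for quantifier-free fragments). Alternatively, one can argue that $q$-goodness equals satisfiability of the $\exists^*\forall^*{\cal F}$ formula $\exists \boldsymbol{x^\inp}\,\forall\boldsymbol{x^\outp}\bigvee_{\varphi:\,\delta(q,\varphi)\in S_q}\varphi$. This equivalence needs completeness of ${\cal A}$: together with determinism, it guarantees that exactly one guard holds for each tuple. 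If disjunction is problematic too, I would state the closure assumption explicitly.

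A secondary check is that the oracle queries are built in polynomial time from the guessed sets. This holds since each query is a conjunction over at most the outgoing transitions of $q$.
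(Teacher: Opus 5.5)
Your proposal is correct and follows essentially the paper's route: you guess a memoryless strategy $(q,S_q)_{q\in Q}$ for Player~\textsf{In} in $P_{\mathcal{G}}$, check $q$-goodness of each $S_q$ with $\exists^*\forall^*$ oracle queries (keeping the existential over $\boldsymbol{x^\inp}$ inside the query, as the paper does), verify the induced polynomial-size graph in polynomial time, and complement via determinacy; your remark on closure of ${\cal F}$ under negation and conjunction makes explicit an assumption the paper leaves implicit. One small fix: the reachability-to-B\"uchi transformation of Section~\ref{sec:HOA} \emph{adds} transitions, which destroys determinism of the arena, so for reachability either redirect the $R$-colored transitions to the sink (as is done for safety) or, as the paper does, argue directly that Player~\textsf{In} has memoryless winning strategies in classical reachability and safety games.
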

 \begin{proof}
     The proof follows the same lines as for HOG (Lemma~\ref{HOGStreetpi_2}).   
     Given an ${\cal F}$-game ${\cal G} = ({\cal A},\boldsymbol{x^\inp},\boldsymbol{x^\outp},Acc)$, we reduce it to the classical game $P_{\cal G}$ preserving winning strategies for both players. As in the HOG case, we avoid constructing $P_{\cal G}$ explicitly. For all the considered winning conditions, Player~\textsf{In} admits memoryless winning strategies. Hence, the algorithm guesses a memoryless strategy and verifies it. Such a strategy consists of $(q,S_q)$ for each state $q$. The verification of the guess consists of two checks identical to the HOG setting: 1) checking whether the guessed strategy is winning for Player~\textsf{In} is in polynomial time and done exactly as for HOGs, and 2) checking that each chosen set $S_q$ is $q$-good. The latter reduces to checking the satisfiability of formulas of the form
$\exists \boldsymbol{a^\inp} \forall \boldsymbol{a^\outp}\; \varphi(\boldsymbol{a^\inp},\boldsymbol{a^\outp})
$ with $\varphi \in {\cal F}$, which lies in \(\boldsymbol{C}\) by assumption. Thus, checking whether Player~\textsf{In} wins lies in $\NP^{\boldsymbol{C}}$ and by determinacy, checking whether Player~\textsf{Out} wins lies in $\coNP^{\boldsymbol{C}}$.
 \end{proof}
 \begin{lemma}\label{lem:RabinSymbolic}
 Let $\cal M$ be a $\sigma$-structure and ${\cal F} \subseteq FO[\sigma]$. If the satisfiability problem for $\exists^*\forall^*\cal F$ in $\cal M$ is in some complexity class \(\boldsymbol{C}\), then ${\cal F}$-games belongs to the complexity class $\NP^{\boldsymbol{C}}$ for Rabin winning conditions.
\end{lemma}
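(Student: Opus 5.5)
The plan is to lift the proof of \Cref{HOGRabinsigma_3} to ${\cal F}$-games, with the $\boldsymbol{\Sigma_2}$ oracle replaced by an oracle for $\boldsymbol{C}$. Let ${\cal G} = ({\cal A},\boldsymbol{x^\inp},\boldsymbol{x^\outp},Acc)$ be an ${\cal F}$-game with Rabin winning condition. We may assume ${\cal A}=(Q,q_0,\boldsymbol{x},\delta,{\cal C})$ is deterministic and state-based. As noted just before \Cref{th:decidability}, the extension of \Cref{claim:GtoP_G} says that Player~\textsf{Out} wins ${\cal G}$ iff she wins the classical Rabin game $P_{\cal G}$. Since $P_{\cal G}$ is a finite classical game, Player~\textsf{Out} has a memoryless winning strategy there. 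The proof of \Cref{claim:super-memoryless} uses only the arena of $P_{\cal G}$ (the $q$-good sets and the edge structure) and never the Boolean nature of valuations. Hence it carries over verbatim: if Player~\textsf{Out} wins, she wins with a super-memoryless strategy, given by a family of linear orders $(\le_q)_{q\in Q}$.

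The algorithm therefore guesses, for each $q\in Q$, a linear order $\le_q$ on $Q$. This certificate has polynomial size. From the guess we build the graph $H$ on $Q$. It has an edge $(q,q')$ iff some $q$-good set $S$ has $q'$ as its $\le_q$-minimal element. The key observation is how to decide this edge membership with a single call to the $\boldsymbol{C}$ oracle, without quantifying over $S$ with auxiliary Boolean variables; such variables need not exist in $\cal M$. Call $S$ minimal for $q'$ if $S = \{p \mid q' \le_q p\}$. Since supersets of $q$-good sets are $q$-good, $q'$ is the minimum of some $q$-good set iff the set $S_{q'} = \{p \mid q' \le_q p\}$ is $q$-good and contains $q'$. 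Indeed, $S_{q'}$ is the largest set with minimum $q'$, and it contains $q'$ by definition. So the edge $(q,q')$ exists iff
$\exists \boldsymbol{x^\inp}\,\forall \boldsymbol{x^\outp} \bigwedge_{\varphi \in \mathit{Form}(q,S_{q'})} \neg\varphi(\boldsymbol{x^\inp},\boldsymbol{x^\outp})$
holds in $\cal M$, exactly as in the proof of \Cref{th:decidability}. This is one $\exists^*\forall^*\cal F$ satisfiability query, assuming the fragment is closed under negation and conjunction as tacitly done in \Cref{lem:StreettSymbolic}. We make at most $|Q|^2$ such queries.

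Finally, we check in polynomial time that every infinite path of $H$ from $q_0$ satisfies the Rabin condition. This amounts to checking that no reachable SCC (or reachable cycle) violates it, which is the standard polynomial check for the universal Rabin property on a graph. Correctness comes in two directions. Soundness: every play consistent with the super-memoryless strategy follows a path of $H$, because Player~\textsf{In} only proposes $q$-good sets. Completeness: every path of $H$ is realised by such a play, because each edge is witnessed by a $q$-good set whose minimum is the target. Together with \Cref{claim:super-memoryless} and the extended \Cref{claim:GtoP_G}, this places the problem in $\NP^{\boldsymbol{C}}$.

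I expect the main obstacle to be making the oracle query genuinely lie in $\exists^*\forall^*\cal F$. In the HOG proof the set $S$ was encoded by extra Boolean variables, which is unavailable over an arbitrary domain. The downward-closure trick with $S_{q'}$ removes this need, and the remaining concern is only the closure assumption on $\cal F$.
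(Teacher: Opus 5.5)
Your proof is correct and follows the paper's route: transfer the super-memoryless claim to $P_{\mathcal{G}}$, guess the orders $(\le_q)_{q\in Q}$, build the graph $H$ using oracle calls, and check the reachable SCCs of $H$ against the Rabin condition in polynomial time. The paper only asserts that the edge test reduces to an $\exists^*\forall^*\mathcal{F}$ query, and its HOG version encodes $S$ with fresh Boolean variables, which are not available over a general $\mathcal{M}$; your observation that $q$-goodness is upward closed, so that only the single set $S_{q'}=\{p \mid q' \le_q p\}$ needs testing, makes this step rigorous and handles the minimality constraint explicitly.
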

\begin{proof}
    We argue as in Lemma~\ref{HOGRabinsigma_3}. 
Like the HOG case, for Rabin winning conditions,  Player~\textsf{Out} admits super-memoryless winning strategies in $P_{\cal G}$. So, we guess for each state $q$ a linear order $\le_q$ on $Q$, which defines a strategy.
To verify the guess, we construct the induced graph over $Q$: an edge $(q,q')$ exists if $q'$ is the $\le_q$-minimal element of some $q$-good set. Checking existence of such an $S$ reduces to checking satisfiability of an $\exists^*\forall^*\cal F$ formula, a problem in $\boldsymbol{C}$ by assumption. Finally, we check in polynomial time that all reachable SCCs in the induced graph satisfy the Rabin condition. Thus, the problem lies in $\NP^{\boldsymbol{C}}$.
\end{proof}

\subparagraph*{Symbolic games over $(\mathbb{N},+)$.} Consider Presburger arithmetic, i.e., the first-order theory of natural numbers with addition \cite{HaaseSG18, borosh1976bounds}. In particular, let $\cal F_{\mathbb{N},+}$ be its quantifier-free fragment. The satisfiability problem for $\exists^*\cal F_{\mathbb{N},+}$ is $\NP$-complete, whereas the $\forall^*\exists^* \cal F_{\mathbb{N},+}$ and $\exists^*\forall^* \cal F_{\mathbb{N},+}$ fragments lie in the weak exponential hierarchy~\cite{Haase14}. Hence, by \Cref{th:decidability}, $\cal F_{\mathbb{N},+}$-games  are decidable. If the number of variables in each quantifier block is fixed, then the $\forall^*\exists^*$ and $\exists^*\forall^*$ fragments become $\coNP$-complete and $\NP$-complete, respectively~\cite{HaaseSG18,Erich1988}. Consequently, solving $\cal F_{\mathbb{N},+}$-games with a fixed number of variables, is $\coNP$-hard and lies in $\boldsymbol{\Pi_2}$ for Streett, parity, B\"uchi, co-B\"uchi, reachability, and safety winning conditions. The upper bound follows from \Cref{lem:StreettSymbolic}, while $\coNP$-hardness is obtained analogously to the proof of \Cref{lemma:HOGpi_2hard}. For Rabin winning conditions, the problem is $\coNP$-hard and lies in $\boldsymbol{\Sigma_2}$ (by \Cref{lem:RabinSymbolic}).

\bibliography{references}

\end{document}